\definecolor{skyblue}{rgb}{0.53, 0.81, 0.92}
\definecolor{vermilion}{rgb}{0.89, 0.26, 0.2}
\newcommand{\defproblem}[3]{
  \vspace{3mm}
\noindent\fbox{
  \begin{minipage}{.95\textwidth}
  \begin{tabular*}{\textwidth}{@{\extracolsep{\fill}}lr} #1  \\ \end{tabular*}
  {\bf{Input:}} #2  \\
  {\bf{Goal:}} #3
  \end{minipage}
  }
  \vspace{2mm}
  }
\newcommand{\defdecproblem}[3]{
  \vspace{3mm}
\noindent\fbox{
  \begin{minipage}{.95\textwidth}
  \begin{tabular*}{\textwidth}{@{\extracolsep{\fill}}lr} #1  \\ \end{tabular*}
  {\bf{Input:}} #2  \\
  {\bf{Question:}} #3
  \end{minipage}
  }
  \vspace{2mm}
  }
\newcommand{\OO}{\mathcal{O}}
\newcommand{\Oh}{\mathcal{O}}
\newcommand{\SSS}{\ensuremath{\mathcal{S}}\xspace}
\newcommand{\IBEs}{(\varphi_i)_{i \in [1,b]}}
\newcommand{\LMC}[1][$c$]{\textsc{LS Max #1-Cut}\xspace}
\newcommand{\MC}[1][$c$]{\textsc{Max #1-Cut}\xspace}
\newcommand{\W}[1][1]{\ensuremath{\mathrm{W}[#1]}}
\newcommand{\nd}{\mathrm{nd}}
\newcommand{\colo}{\chi}
\newcommand{\val}{{\rm val}}
\newcommand{\score}{{\rm score}}
\newcommand{\faults}{{\rm faults}}
\newcommand{\Nash}{{\rm Nash}}
\newcommand{\comp}{Schmitz Cargobull AG}
\newcommand{\nn}{{\mathbb N}}
\newcommand{\vect}{\textsf{vec}}
\newcommand{\bvect}{\textsf{B-vec}}
\newcommand{\di}{\textsf{d}}
\newcommand{\dy}{\textsf{dy}}
\newcommand{\db}{\textsf{db}}
\newcommand{\oh}{\textsf{ol}}
\newcommand{\Bin}{\textsc{Vector Bin Packing}\xspace}
\newcommand{\BinLoc}{\textsc{LS Vector Bin Packing}\xspace}
\newcommand{\NSW}{\textsc{Nash Social Welfare}\xspace}
\newcommand{\NSWLoc}{\textsc{LS Nash Social Welfare}\xspace}
\newcommand{\LVBP}{\BinLoc}
\newcommand{\MK}{\textsc{Multi Knapsack}\xspace}
\newcommand{\MKLoc}{\textsc{LS Multi Knapsack}\xspace}
\newcommand{\VC}{\textsc{Vertex Cover}\xspace}
\newcommand{\LVC}{\textsc{LS Vertex Cover}\xspace}
\newcommand{\CE}{\textsc{Cluster Editing}\xspace}
\newcommand{\LSCE}{\textsc{LS Cluster Editing}\xspace}
\newcommand{\GBP}{\textsc{Generalized Bin Problem}\xspace}
\newcommand{\LSGBP}{\textsc{LS \GBP}\xspace}
\newcommand{\MCPi}{\textsc{Multi Component~$\Pi$ Deletion}\xspace}
\newcommand{\LSMCPi}{\textsc{LS Multi Component~$\Pi$ Deletion}\xspace}
\newcommand{\PiDel}{\textsc{$\Pi$ Vertex Deletion}\xspace}
\newcommand{\LSPiDel}{\textsc{LS $\Pi$ Vertex Deletion}\xspace}
\newcommand{\sv}[1]{}
\tikzset{
        stars/.style={star,inner sep=2pt}
    }
\newcommand{\dflip}{d_{\text{flip}}}
\crefname{claim}{Claim}{Claims}
\title{Fantastic Flips and Where to Find Them:\\ A General Framework for Parameterized Local Search on Partitioning Problems\thanks{A preliminary version of this work appeared in the Proceedings of the 19th International Symposium on Algorithms and Data Structures (WADS~'25), Schloss Dagstuhl – Leibniz-Zentrum für Informatik. 
This full version contains all missing proofs.}}
\titlerunning{A General Framework for Parameterized Local Search on Partitioning Problems}
\author{Niels Grüttemeier}{Fraunhofer IOSB-INA, Lemgo, Germany}{niels.gruettemeier@iosb-ina.fraunhofer.de}{https://orcid.org/0000-0002-6789-2918}{Supported by the project \emph{Datenfabrik.NRW}, a project by \emph{KI.NRW}, funded by the Ministry for Economics, Innovation, Digitalization and Energy of the State of North Rhine-Westphalia (MWIDE).}
\author{Nils Morawietz}{Institute of Computer Science, Friedrich Schiller University Jena, Germany\\ LaBRI, Université de Bordeaux, France}{nils.morawietz@uni-jena.de}{https://orcid.org/0000-0002-7283-4982}{Partially supported by the French ANR, project ANR-22-CE48-0001 (TEMPOGRAL).}
\author{Frank Sommer}{Institute of Computer Science, Friedrich Schiller University Jena, Germany}{frank.sommer@uni-jena.de}{https://orcid.org/0000-0003-4034-525X}{Supported by the Alexander von Humboldt Foundation.}
\authorrunning{N. Grüttemeier, N. Morawietz, F. Sommer}
\keywords{Flip-Neighborhood, 
Cluster Editing, 
Vector Bin Packing, 
Vertex Cover, 
NP-hard problem, 
Max~$c$-Cut}
\begin{document}
\maketitle

\begin{abstract}
Parameterized local search combines classic local search heuristics with the paradigm of parameterized algorithmics. 
While most local search algorithms  aim to improve given solutions by performing one single operation on a given solution, the parameterized approach aims to improve a solution by performing~$k$ simultaneous operations. Herein,~$k$ is a parameter called \emph{search radius} for which the value can be chosen by a user. One major goal in the field of parameterized local search is to outline the trade-off between the size of~$k$ and the running time of the local search step.

In this work, we introduce an abstract framework that generalizes natural parameterized local search approaches for a large class of partitioning problems: 
Given~$n$ items that are partitioned into~$b$ bins and a target function that evaluates the quality of the current partition, one asks whether it is possible to improve the solution by removing up to~$k$ items from their current bins and reassigning them to other bins. 
Among others, our framework applies for the local search versions of problems like \textsc{Cluster Editing}, \textsc{Vector Bin Packing}, and \textsc{Nash Social Welfare}.
Motivated by a real-world application of the problem \textsc{Vector Bin Packing}, we introduce a parameter called \emph{number of types}~$\tau \leq n$ and show that all problems fitting in our framework can be solved in $\tau^k \cdot 2^{\Oh(k)} \cdot |I|^{\Oh(1)}$~time, where~$|I|$ denotes the total input size. 
In case of~\textsc{Cluster Editing}, the parameter~$\tau$ is the well-known parameter neighborhood diversity of the input graph.

We complement these algorithms by showing that for all considered problems, an algorithm with running time~$\tau^{o(k)} \cdot 2^{\Oh(k)} \cdot |I|^{\Oh(1)}$ does not exist unless the Exponential Time Hypothesis fails. 
Additionally, we show that even on very restricted instances, all considered problems are W[1]-hard when parameterized by the search radius~$k$ alone.
In case of the local search version of~\textsc{Vector Bin Packing}, we provide an even stronger W[1]-hardness result.
\end{abstract}

\section{Introduction}
The principle of local search is among the most important heuristic approaches in combinatorial optimization and it is highly relevant to find good solutions to NP-hard problems in practice~\cite{HS04}. 
The idea is to apply small modifications on a given starting solution to obtain a new solution with a better target value than the starting solution. Local search has been studied extensively and it has been proven to be highly efficient~\cite{CSLS13,HS04,LHC+20,DBLP:conf/iwpec/BlasiusFGHHSWW21a}. Furthermore, it is easy to understand and it is also a good plugin to improve already competitive solutions provided by other metaheuristics~\cite{OX14,GGKM23}.

Consider a classic partitioning problem as \MK, where the goal is to assign items (each with a weight) to multiple knapsacks (each with a weight capacity and specific values for the items) in a way that all capacity constraints are satisfied and the total value is maximal. One of the most natural ways to apply small modifications in a local search scenario is an \emph{item flip}, where one removes a single item from its current knapsack and inserts it into another knapsack. 
This natural idea of flipping the assignment of single items to improve a solution is studied for knapsack problems~\cite{DF01} and for other partitioning problems~\cite{FPRR02,LMV99}. A general drawback in performing these single item flips---and also in local search in general---is the chance of getting stuck in poor local optimal solutions. Thus, to obtain a robust local search application, a strategy to prevent getting stuck in these poor local optima is required.

In this work, we consider the approach of \emph{parameterized local search}~\cite{M08,FFLRSV12} to decrease the chance of getting stuck in poor local optima reached by item flips. Instead of searching for an improving solution by \emph{one} single operation (here: a single item flip), the user may set a search radius~$k$ to extend the search space for possible improvements that can be reached with up to~$k$ simultaneous operations. 
Thus, the idea is to make the search space larger so that getting stuck in poor local optima becomes less likely. 
Parameterized local search combines local search with the paradigm of parameterized algorithmics~\cite{C+15} and aims to outline the trade-off between the size of the search radius~$k$ and the running time of the search step. 
Parameterized local search has been studied extensively in the algorithmic community: for vertex deletion and partitioning problem in graphs~\cite{FFLRSV12,GKO+12,GHNS13,HN13,DGKW14,KK17,GMNW23,GGKM23}, for problems on strings and phylogenetic networks~\cite{GHK14,KLMS23}, and many other problems~\cite{M08,S11,GKO21,GKM21,HSS25}. 
One major goal in parameterized local search is to show that finding an improvement within search radius~$k$ is fixed parameter tractable (FPT) for~$k$. 
That is, finding an improving solution can be done in time~$g(k) \cdot |I|^{\Oh(1)}$, where~$|I|$ is the total encoding length of the input instance and~$g$ is some computational function only depending on~$k$. 
Note that an algorithm with such a running time nicely outlines the trade-off between radius size and running time, as the superpolynomial part only depends on~$k$ while~$|I|$ only contributes as a polynomial factor to the running time. 
Unfortunately, most parameterized local search problems are W[1]-hard for~$k$~\cite{M08,FFLRSV12,GHK14,M24} and therefore, an algorithm with running time~$g(k) \cdot |I|^{\Oh(1)}$ presumably does not exist. 

Motivated by the negative results for the parameter~$k$, one often studies the combination of~$k$ and some structural parameter~$\tau$ to obtain algorithms with running~$g(k,\tau) \cdot |I|^{\Oh(1)}$. 
This approach has been successful both from a theoretical and experimental perspective~\cite{HN13,KK17,GGJMR19,GGKM23}. In this work, we follow this direction by studying parameterization by~$k$ and an additional parameter~$\tau$ that we call the \emph{number of types}. 
We consider the local search versions of a large class of well-known combinatorial problems including \MC, \MK, \CE and~\Bin, where the search space is defined by performing~$k$ flips. 
Recall that one flip intuitively removes one item from its assigned set and inserts in into some other set. 
The interpretation of flips and of our parameter~$\tau$ always depends on the concrete problem and will be explained for each problem individually.

Our approach of exploiting the parameter~$\tau$ is motivated by a real-world production planning application at the company~\comp{}. 
Planning the production at~\comp{} corresponds to solving an instance of~\Bin~\cite{M23}. 
In~\Bin, one is given a large collection of vectors~$\mathcal{S} \subseteq {\mathds{N}}^d$ together with a vector~$w \in \mathds{N}^d$ and an integer~$b$. 
The question is, whether there exists a partition of~$\mathcal{S}$ into~$b$ parts~$S_1, \dots, S_b$, such that~$\sum_{v \in S_i} v \leq w$ for all~$i \in [1,b]$. 
In the application at~\comp, we have~$\mathcal{S} \subseteq \{0,1\}^d$ and the vectors~$v \in \mathcal{S}$ correspond to customer orders where the entries of~$v$ specify whether a specific option is chosen~($\widehat{=}$ the entry has value~$1$) or not~($\widehat{=}$ the entry has value~$0$). 
The entries of~$w$ correspond to the production capacities available for the corresponding product option at one day of production. 
Instead of asking for a partition where each resulting vector set satisfies~$\sum_{v \in S_i} v \leq w$, 
one asks for a partition minimizing the \emph{total overload}. 
More precisely, the \emph{overload} of a single set~$S_i$ is defined as~$\sum_{j=1}^d \max{(0,(\sum_{v \in S_i} v_j) - w_j)}$, and the \emph{total overload} is the sum of all overloads of all sets~$S_1, \dots, S_b$. 
Note that the total overload is~$0$ if and only if the given instance is a yes-instance of the decision version of~\Bin. 
While \comp{} usually receives a large number of customer orders,
 relatively many of these orders request the exact same product option combinations. 
Consequently, the number of distinct vectors in~$\mathcal{S}$ is much smaller than~$|\mathcal{S}|$. 
Therefore, our research is motivated by setting~$\tau$ to be the number of distinct vectors in an instance of \Bin and study parameterized local search for the combination of~$k$ and~$\tau$ where the target is to minimize the total overload.

From a more abstract point of view, \Bin is a problem where one assigns a collection of \emph{items} (vectors) to a collection of \emph{bins} (sets of the resulting partition) in a way that a target function (total overload) is minimized. 
Furthermore, if two elements from~$\mathcal{S}$ have the same vector, these elements have the same effect on the target function. 
In other words, there are only~$\tau$ distinct ways in which a specific item might have an influence on the target function when assigned to a specific bin. 
This more abstract view leads to a framework providing running times~$\tau^k \cdot 2^{\Oh(k)} \cdot |I|^{\Oh(1)}$ and~$k^{\Oh(\tau)} \cdot |I|^{\Oh(1)}$ for the parameterized local search versions for a wide range of well-known combinatorial problems that behave in the same way as \Bin. 
The $\tau^k \cdot 2^{\Oh(k)} \cdot |I|^{\Oh(1)}$~running-time is particularly motivated since the value of~$k$ is a small constant chosen by the user~\cite{KM22}.

Recall that the concrete interpretation of the parameter~$\tau$, of the items, and of the bins always depends on the concrete problem and will be explained for each problem individually. 
Besides the number of distinct vectors, $\tau$ can be interpreted as the neighborhood diversity of an input graph in case of \MC or \CE, or as the number of distinct value-weight combinations in case of \MK.

\subparagraph{Our Contributions.} In the first part (see \Cref{Section: Framework}), we introduce the \LSGBP, which is a general framework capturing many parameterized local search problems where the search radius is defined by a number of item flips. 
Moreover, we introduce the parameter \emph{number of types} $\tau$ as an abstract concept for \LSGBP capturing well-known parameters such as the neighborhood diversity of a graph. We describe general algorithms for the abstract \LSGBP leading to running times of~$\tau^k \cdot 2^{\Oh(k)} \cdot |I|^{\Oh(1)}$ and~$k^{\Oh(\tau)} \cdot |I|^{\Oh(1)}$ for the local search versions of many problems that fit into our framework (see \Cref{Theorem: Framework Algorithms}). 
In our approach, we group items with the same influence on the target function. 
We express flips between bins as simultaneous insertions into some bins and removals from other bins and calculate the changes on the target function with respect to the grouping. 
In a dynamic programming algorithm, we draw up the balance between insertions and removals, such that the resulting changes on the solution correspond to at most~$k$ item flips.

\begin{table} [t]
\caption{The corresponding local search problems for the flip distance of these problems can be solved in $\tau^k \cdot 2^{\Oh(k)} \cdot |I|^{\Oh(1)}$~time and in $k^{\Oh(\tau)} \cdot |I|^{\Oh(1)}$~time for the respective parameter~$\tau$ as we show in \Cref{Section: Framework Application}. 
In case of \VC, we have exactly two bins: the resulting vertex cover (vc) and the remaining independent set (is). 
In this case, the flip distance corresponds to the cardinality of the symmetric difference between the current vertex cover and the improving vertex cover.}
\centering
\begin{tabular}{lllll}
Problem & Items & Bins & Parameter~$\tau$ & Section\\
\hline
\MC & vertices & color classes & neighborhood diversity & \ref{sec:MC} \\
\NSW & items & agents & number of distinct items & \ref{sec:NSW} \\
\CE & vertices & clusters  & neighborhood diversity & \ref{sec:CE}\\
\Bin & vectors & bins & number of distinct vectors & \ref{sec:VBP} \\
\MK & items & knapsacks  & number of distinct items & \ref{sec:MK} \\
\VC & vertices & (vc, is) & neighborhood diversity & \ref{sec:VDD}
\end{tabular}
\label{Table: Result Overview}
\end{table}

In the second part (see \Cref{Section: Framework Application}) we provide simple example applications of the introduced framework leading to new results for the parameterized local search versions of classic combinatorial optimization problems, graph problems, and one problem from computational social choice (\NSW). 
The formal problem definitions can be found in \Cref{Section: Framework Application}. 
All results in this part mainly rely on simply reformulating the concrete problem as \LSGBP. 
An overview of the studied problems is given in \Cref{Table: Result Overview}. To the best of our knowledge, this is the first work studying parameterized local search for \NSW, \Bin, and \MK. 

In the third part (see \Cref{sec:hardness}), we complement our results by studying parameterization by the search radius~$k$ alone.  
We show that the parameterized local search versions of \NSW and \MK are W[1]-hard when parameterized by~$k$. 
Furthermore, we provide a strong hardness result for \Bin showing W[1]-hardness for~$k+q$, where~$q$ denotes the maximum number of non-zero-entries over all vectors from the input. 
The parameter~$q$ is particularly motivated by the real-world application from the company~\comp, where~$q$ is even smaller than~$\tau$.
We finally show that all of our algorithms with running time~$\tau^k \cdot 2^{\Oh(k)} \cdot |I|^{\Oh(1)}$ are tight in the sense of the \emph{Exponential Time Hypothesis (ETH)}~\cite{IPZ01}.

\section{Preliminaries}
For details on parameterized complexity, we refer to the standard monographs~\cite{C+15,DF13}.

For integers~$a$ and~$b$ with~$a \le b$, we define~$[a, b] := \{i \in \mathds{N} \mid a \le i \le b\}$.
Given a set~$X$ and some integer~$b$, we call a mapping~$f: X \rightarrow [1,b]$ a~\emph{$b$-partition of~$X$} or a~\emph{$b$-coloring of~$X$}. For every~$i \in [1,b]$, we let~$f^{-1}(i):= \{ x \in X \mid f(x) = i\}$. The \emph{flip} between two~$b$-partitions~$f$ and~$f'$ is defined as~$D_{\rm flip}(f,f') := \{ x \in X \mid f(x) \neq f'(x)\}$. The \emph{flip distance} between~$f$ and~$f'$ is then defined as~$d_{\rm flip}(f,f') := |D_{\rm flip}(f,f')|$.
For a set~$X$, we let~$2^X$ denote the \emph{power set} of~$X$.

For a graph~$G=(V,E)$, by~$n:=|V|$ we denote the \emph{number of vertices} and by~$m:= |E|$ we denote the \emph{number of edges}.
By~$N(u):=\{w\in V \mid \{u,w\}\in E\}$ and by~$N(S):=(\bigcup_{u\in S}N(u))\setminus S$ we denote the \emph{open neighborhood} of~$u$ and~$S$, respectively.
Two vertices~$u$ and~$w$ have the \emph{same neighborhood class} if~$N(u)\setminus \{w\} = N(w)\setminus \{u\}$.
The \emph{neighborhood diversity}~$\nd(G)$ is the number of neighborhood classes of~$G$.
A vertex set~$S$ is a \emph{vertex cover} for~$G$ if each edge of~$E$ has at least one endpoint in~$S$.
A vertex set~$S$ is an \emph{independent set} of~$G$, if no edge has of~$E$ has both endpoints in~$S$.

\section{Generalized Parameterized Local Search for Partitioning Problems} \label{Section: Framework}

We present a framework that captures many computational problems such as \Bin, \CE, and \NSW. On a high level, all these problems have in common, that one aims to partition some set~$X$ (e.g. a set of vectors, a set of vertices, or a set of items) into multiple `bins'. A target function assigns a value to the resulting partition. The goal is to find a partition that minimizes (or maximizes) the target function.
This section is structured as follows. We first introduce a `Generalized Bin Problem' that generalizes the computational problems studied in this work. Afterwards, we introduce a parameter called `types'. Finally, we provide the algorithmic results for this parameter.

\subsection{The Generalized Bin Problem}

Intuitively, we aim to partition a given set~$X$ into a given number of `bins'~$b \in \mathds{N}$, and a target value specifies how good this~$b$-partition of~$X$ is. This target value is determined by an `individual bin evaluation', which is a collection of local target values of each single bin. These values are then combined to obtain the target value for the whole~$b$-partition of~$X$.

\begin{definition} \label{Definition: Target Function}
Let~$X$ be a set, let~$b \in \mathds{N}$, and let~$\inf \in \{ \infty, -\infty\}$.
An \emph{individual bin evaluation (IBE)} is a~$b$-tuple~$(\varphi_i)_{i \in [1,b]}$ of functions~$\varphi_i: 2^X \rightarrow \mathds{Z} \cup \{\inf\}$. An IBE defines a \emph{target value}~$\val(f)$ for every~$b$-partition~$f$ by
\begin{align*}
\val(f) := \left( \bigoplus_{i=1}^b \varphi_i(f^{-1}(i)) \right),
\end{align*}
where~$\oplus$ is a commutative and associative binary operation~$\oplus: \mathds{Z} \cup \{ \inf \} \times \mathds{Z} \cup \{ \inf \} \rightarrow \mathds{Z} \cup \{ \inf \}$ satisfying~$a \oplus \inf = \inf \oplus~a = \inf$ for all~$a \in \mathds{Z} \cup \{ \inf \}$.
\end{definition}

While our general framework works for arbitrary commutative and associative operations~$\oplus$, this work only considers concrete problems where~$\oplus$ is either the summation or the multiplication of integer numbers. The value~$\inf  \in \{ \infty, -\infty\}$ corresponds to infeasible assignments of bins, for example, violating capacity constraints of a knapsack. In case of a minimization problem we consider IBE with~$\inf = \infty$ and in case of a maximization problem we have~$\inf= - \infty$. In the remainder of this section, all problem statements and algorithms are given for the case where one aims to minimize the target function. Maximization problems are defined analogously. With \Cref{Definition: Target Function} at hand, we define the following general computational problem for every fixed commutative and associative operation~$\oplus$.

\defproblem{\GBP}
{An integer~$b$, a set~$X$, and an IBE~$(\varphi_i)_{i \in [1,b]}$.}
{Find a~$b$-partition~$f$ that minimizes~$\val(f)$.}

Note that we did not yet specify how the IBE from the input is given. As this depends on the concrete problems, this will be discussed for each problem individually. Throughout this section, we analyze the algorithms running times with respect to the parameter~$\Phi$ denoting the running time needed for one evaluation of a value~$\varphi_i(X')$ with~$X' \subseteq X$.

Recall that our aim is to study parameterized local search for the flip neighborhood. More precisely, we aim to find a~$b$-partition~$f'$ that has a better target value than some given~$b$-partition~$f$, while~$d_{\rm flip}(f,f') \leq k$ for a given~$k$. The corresponding computational problem is defined as follows

\defdecproblem{\LSGBP}
{An integer~$b$, a set~$X$, and an IBE~$(\varphi_i)_{i \in [1,b]}$, a~$b$-partition~$f: X \rightarrow [1,b]$, and an integer~$k$.}{Is there a~$b$-partition~$f'$ with~$d_{\rm flip}(f,f') \leq k$ such that~$\val(f') < \val(f)$?}

\subsection{Types in Generalized Bins}

We next define a parameter called `number of types'~$\tau$. The idea is, that in a polynomial-time preprocessing step, the set~$X$ is partitioned into classes of elements~$(X_1, \dots, X_\tau)$ in a way that all elements in each~$X_i$ have the exact same impact on the target value for every possible bin assignment. The intuitive idea of `same impact' is formalized as follows.

\begin{definition} \label{Definition: Types}
Let~$X$ be a set, let $b \in \mathds{N}$, and let~$(\varphi_i)_{i\in [1,b]}$ be an IBE for~$X$ and~$b$. Two (not necessarily distinct) elements~$x \in X$ and~$y \in X$ are \emph{target equivalent} ($x \sim y$), if for every~$i \in [1,b]$ and for every~$A \subseteq X$ with~$\{x,y\} \cap A = \{x\}$ we have~$\varphi_i ((A\setminus \{x\}) \cup \{y\}) = \varphi_i(A)$.
\end{definition}

\begin{proposition} \label{Proposition: sim is ER}
The relation~$\sim$ is an equivalence relation on~$X$.
\end{proposition}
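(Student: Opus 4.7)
The plan is to verify the three defining properties of an equivalence relation: reflexivity, symmetry, and transitivity, by unfolding \Cref{Definition: Types} directly and shuffling around finite swaps of elements inside a set $A \subseteq X$.

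Reflexivity $x \sim x$ is immediate: the condition $\{x,x\} \cap A = \{x\}$ just says $x \in A$, and the swap $(A \setminus \{x\}) \cup \{x\} = A$ yields the trivial equality $\varphi_i(A) = \varphi_i(A)$ for every $i \in [1,b]$. For symmetry, suppose $x \sim y$ (with the nontrivial case $x \neq y$) and pick $i \in [1,b]$ and $A \subseteq X$ with $\{x,y\} \cap A = \{y\}$. Set $B := (A \setminus \{y\}) \cup \{x\}$, observe $\{x,y\} \cap B = \{x\}$, and apply $x \sim y$ to $B$: this gives $\varphi_i((B \setminus \{x\}) \cup \{y\}) = \varphi_i(B)$. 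A direct computation shows $(B \setminus \{x\}) \cup \{y\} = A$, so $\varphi_i((A \setminus \{y\}) \cup \{x\}) = \varphi_i(B) = \varphi_i(A)$, proving $y \sim x$.

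For transitivity, assume $x \sim y$ and $y \sim z$; the interesting case is when $x, y, z$ are pairwise distinct. Fix $i$ and $A \subseteq X$ with $\{x,z\} \cap A = \{x\}$. The argument splits on whether $y \in A$. If $y \notin A$, then $\{x,y\} \cap A = \{x\}$, so $x \sim y$ yields $\varphi_i(A') = \varphi_i(A)$ where $A' := (A \setminus \{x\}) \cup \{y\}$; now $\{y,z\} \cap A' = \{y\}$, and applying $y \sim z$ to $A'$ gives the desired $\varphi_i((A \setminus \{x\}) \cup \{z\}) = \varphi_i(A)$, after noting $(A' \setminus \{y\}) \cup \{z\} = (A \setminus \{x\}) \cup \{z\}$. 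If instead $y \in A$, apply $y \sim z$ to $A$ first to obtain $\varphi_i(A^\star) = \varphi_i(A)$ with $A^\star := (A \setminus \{y\}) \cup \{z\}$; since $x \in A^\star$ and $y \notin A^\star$, we may then apply $x \sim y$ to $A^\star$, and a short set-algebra check gives $(A^\star \setminus \{x\}) \cup \{y\} = (A \setminus \{x\}) \cup \{z\}$, closing the chain.

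The only mildly tricky step is the second sub-case of transitivity, where one might naively try to apply $x \sim y$ directly to $A$ and fail because $y \in A$; the resolution is to first use $y \sim z$ to evacuate $y$ from $A$, which creates a set on which $x \sim y$ is applicable. All set-algebra manipulations used are elementary and hold for arbitrary binary operation $\oplus$, so the equivalence relation property is independent of the target function's aggregation.
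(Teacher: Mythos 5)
Your proof is correct and takes essentially the same approach as the paper's: symmetry by plugging the "mirrored" set $B = (A\setminus\{y\})\cup\{x\}$ into the hypothesis $x\sim y$, and transitivity by a case split on whether $y\in A$, routing through the intermediate element $y$ in the appropriate order in each case. The paper labels the cases in the opposite order and is terser about reflexivity, but the underlying argument is the same.
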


\begin{proof} 
We obviously have~$x \sim x$ for each~$x \in X$. It remains to show symmetry and transitivity. Throughout this proof, we let~$\varphi$ be an arbitrary function from the IBE.

We first show that~$\sim$ is symmetric. Let~$x \sim y$. Then, for any~$A \subseteq X$ with~$\{x,y\} \cap A = \{y\}$ we have
\begin{align*}
\varphi (A \setminus \{y\} \cup \{x\}) = \varphi ((A \setminus \{y\} \cup \{x\}) \setminus \{x\} \cup \{y\}) = \varphi (A).
\end{align*}
The first equality holds, since~$x \sim y$ and~$\{x,y\} \cap (A \setminus \{y\} \cup \{x\}) = \{x\}$. We thus have~$y \sim x$. Consequently,~$\sim$ is symmetric.

It remains to show transitivity. Let~$x \sim y$ and let~$y \sim z$. We show that~$x \sim z$. To this end, let~$A \subseteq X$ with~$A \cap \{x,z\} = \{x\}$ and consider the following cases.

\textbf{Case 1:} $y \in A$\textbf{.} Then, we have~$A \cap \{y,z\} = \{y\}$. Together with~$y \sim z$, this implies~$\varphi(A) = \varphi(A \setminus \{y\} \cup \{z\})$. Since~$\{x,y\} \cap (A \setminus \{y\} \cup \{z\}) = \{x\}$ and~$x \sim y$, we have~$\varphi(A \setminus \{y\} \cup \{z\})=\varphi( (A \setminus \{y\} \cup \{z\}) \setminus \{x\} \cup \{y\} )$, and thus
\begin{align*}
\varphi(A) = \varphi( (A \setminus \{y\} \cup \{z\}) \setminus \{x\} \cup \{y\} ) = \varphi(A \setminus \{x\} \cup \{z\}).
\end{align*}
Consequently, we have~$x \sim z$.

\textbf{Case 2:} $y \not \in A$\textbf{.} Then, we have~$\{x,y\} \cap A = \{x\}$. Together with~$x \sim y$, this implies~$\varphi(A) = \varphi(A \setminus\{x\} \cup \{y\})$. Since~$\{y,z\} \cap (A \setminus \{x\} \cup \{y\}) = \{y\}$ and~$y \sim z$, we have~$\varphi(A \setminus\{x\} \cup \{y\})=  \varphi((A \setminus\{x\} \cup \{y\}) \setminus \{y\} \cup \{z\})$, and thus
\begin{align*}
\varphi(A) = \varphi((A \setminus\{x\} \cup \{y\}) \setminus \{y\} \cup \{z\}) = \varphi(A \setminus \{x\} \cup \{z\}).
\end{align*}
Consequently, we have~$x \sim z$.
\end{proof}

The main idea of our algorithm is that target equivalent elements can be treated equally as they have the same influence on the target value. When considering concrete problems, we always use a simple pairwise relation between the elements leading to a partition of~$X$ into classes of pairwise target equivalent elements. These classes do not necessarily need to be maximal under this constraint. Thus, it suffices to consider the following relaxation of the equivalence classes.

\begin{definition}
Let~$X$ be a set, let~$b \in \mathds{N}$, and let~$(\varphi_i)_{i\in [1,b]}$ be an IBE for~$X$ and~$b$. A tuple~$(X_1, \dots, X_{\tau})$ of disjoint sets with~$\bigcup_{j=1}^{\tau} X_j = X$ is called a \emph{type partition} of~$X$ if the elements of each~$X_j$ are pairwise target equivalent. For every~$j \in [1, \tau]$, we say that the elements of~$X_j$ \emph{have type~$j$}.
\end{definition}

Obviously, the equivalence classes for~$\sim$ always form a type partition with the minimum number of sets. Throughout this work, we assume that each instance~$I:=(b,X,(\varphi_i)_i, f, k)$ of \LSGBP is associated with a specified type partition, and the parameter \emph{number of types}~$\tau:=\tau(I)$ is defined as the number of sets of the type partition associated with~$I$. 

\subsection{Algorithmic Results for \LSGBP}
Our goal is to study \LSGBP parameterized by~$k+\tau$. 
Applying this on concrete problems then leads to FPT algorithms for a great range of parameterized local search versions of well-known computational problems.
The interpretation of the parameter~$\tau$ always depends on the concrete problem. We provide the following algorithmic results. Recall that~$\Phi$ denotes the running time needed for one evaluation of a value~$\varphi_i(X')$ with~$X' \subseteq X$.

\begin{theorem} \label{Theorem: Framework Algorithms}
\LSGBP can be solved
\begin{enumerate}[$a)$]
\item in $\tau^k \cdot 2^{\Oh(k)} \cdot b \cdot \Phi \cdot |X|^{\Oh(1)}$~time, and
\item in $k^{\Oh(\tau)} \cdot b \cdot \Phi \cdot |X|^{\Oh(1)}$~time
\end{enumerate} 
if a type partition~$(X_1, \dots, X_\tau)$ is additionally given as part of the input.
\end{theorem}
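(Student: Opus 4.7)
The plan is to reduce \LSGBP to an optimization over the \emph{count matrix} $(m_{i,j})$, where $m_{i,j} := |f'^{-1}(i) \cap X_j|$ describes how many type-$j$ elements the sought improving partition $f'$ places in bin $i$. Writing $n_{i,j} := |f^{-1}(i) \cap X_j|$, target equivalence implies that $\val(f')$ depends only on the count matrix, and moreover that the minimum flip distance to $f$ among all $f'$ realising a given count matrix equals $\sum_{i,j} \max(0, n_{i,j} - m_{i,j})$: in each bin $i$ one keeps $\min(n_{i,j}, m_{i,j})$ of the original type-$j$ elements, displacing the remaining $\max(0, n_{i,j} - m_{i,j})$. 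Thus \LSGBP reduces to: find $(m_{i,j})$ with $m_{i,j} \geq 0$, $\sum_i m_{i,j} = |X_j|$ for every $j$, $\sum_{i,j} \max(0, n_{i,j} - m_{i,j}) \leq k$, and $\bigoplus_{i} \varphi_i((m_{i,j})_j) < \val(f)$.

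For part~(a), I would first enumerate, for every type $j$, a \emph{move budget} $k_j := \sum_i \max(0, n_{i,j} - m_{i,j})$; since $\sum_j k_j \leq k$ with each $k_j \in \nn$, the number of budget vectors is at most $(\tau+1)^k = \tau^k \cdot 2^{\Oh(k)}$. For each such budget I run a DP over the bins $i = 1, \dots, b$ whose state after bin $i_0$ stores, for every type $j$, two counters $(\mathrm{rs}_j, \mathrm{rd}_j) \in [0, k_j]^2$ giving the numbers of sources and destinations of type-$j$ moves still to be placed in bins $[i_0+1, b]$; the table stores the best partial value $\bigoplus_{i' \le i_0} \varphi_{i'}((m_{i',j})_j)$. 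Using $k_j + 1 \leq e^{k_j}$ together with $\sum_j k_j \leq k$, the state space per bin is $\prod_j (k_j + 1)^2 \leq e^{2k} = 2^{\Oh(k)}$, and the number of transitions out of each state is bounded by the same product, since at bin $i$ one picks, for every $j$, values $\mathrm{src}_{i,j} \in [0, \min(\mathrm{rs}_j, n_{i,j})]$ and $\mathrm{dst}_{i,j} \in [0, \mathrm{rd}_j]$, then sets $m_{i,j} := n_{i,j} - \mathrm{src}_{i,j} + \mathrm{dst}_{i,j}$ and combines $\varphi_i((m_{i,j})_j)$ into the running value via $\oplus$. Requiring $\mathrm{rs}_j = \mathrm{rd}_j = 0$ in the final state enforces $\sum_i m_{i,j} = |X_j|$ and a flip distance of at most $\sum_j k_j \leq k$, and multiplying the per-budget work by $(\tau+1)^k$ yields the claimed $\tau^k \cdot 2^{\Oh(k)} \cdot b \cdot \Phi \cdot |X|^{\Oh(1)}$ bound.

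For part~(b), I would instead run a single DP that encodes per-type budgets implicitly. The state after bin $i_0$ stores, for every type $j$, the running imbalance $D_j := \sum_{i' \leq i_0}(m_{i',j} - n_{i',j}) \in [-k, k]$, together with the consumed flip budget $F \in [0, k]$, giving $b \cdot (2k+1)^\tau \cdot (k+1) = b \cdot k^{\Oh(\tau)}$ states. At bin $i$ a transition chooses, for every type $j$, a change $\delta_j := m_{i,j} - n_{i,j} \in [-k, k]$ subject to $m_{i,j} \geq 0$, $|D_j + \delta_j| \leq k$, and $F + \sum_j \max(0, -\delta_j) \leq k$; this gives $(2k+1)^\tau = k^{\Oh(\tau)}$ options per state, and requiring $D_j = 0$ for every $j$ in the final state ensures $\sum_i m_{i,j} = |X_j|$. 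Together this yields the claimed $k^{\Oh(\tau)} \cdot b \cdot \Phi \cdot |X|^{\Oh(1)}$ bound.

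The main obstacle will be establishing the key bound $\prod_j (k_j + 1)^2 \leq 2^{\Oh(k)}$ in part~(a), which is what brings the exponential dependence from $k^{\Oh(\tau)}$ down to $\tau^k \cdot 2^{\Oh(k)}$; once this is in hand, the remainder is routine DP bookkeeping, a correctness argument that every candidate count matrix with flip distance at most~$k$ is reached by some transition sequence (take $\mathrm{src}_{i,j} = \max(0, n_{i,j} - m_{i,j})$ and $\mathrm{dst}_{i,j} = \max(0, m_{i,j} - n_{i,j})$), and a minor note that for the paper's instantiations of $\oplus$ (integer summation, or integer multiplication of nonnegative values) maintaining a running minimum inside the DP is standard.
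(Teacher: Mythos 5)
Your proof is correct and follows essentially the same route as the paper's own argument (\Cref{Lemma: DP Algorithm} together with the proof of \Cref{Theorem: Framework Algorithms}): enumerate a per-type flip-budget vector (the paper's type specification~$\vec{\delta}$), then run a bin-by-bin dynamic program whose state records, per type, how many removals and insertions have been committed so far, bounded coordinatewise by~$\vec{\delta}$. The bound $\prod_j(k_j+1)^2\le e^{2k}$ that you isolate as the crux is exactly the quantity the paper bounds by viewing $\vec{\delta}$ as a labeled multiset of~$k$ identifiers and counting its $2^k$ subsets, and folding the budget enumeration into a single DP state for part~(b) is only a minor reorganization of the paper's separate enumeration-then-DP.
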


To present the algorithms behind \Cref{Theorem: Framework Algorithms}, we introduce the notion of \emph{type specification} and \emph{type specification operations}. Recall that elements of the same type behave in the same way when assigned to a bin. Thus, when evaluating the target function, one may consider the types of the assigned elements instead of the concrete elements. A type specification is a vector~$\vec{p}$ that intuitively corresponds to a collection of elements in~$X$ containing exactly~$p_j$ elements from the class~$X_j$. 

\begin{definition} \label{Definition: Type Specification}
Let~$b$ be an integer, let~$X$ be a set, and let~$(\varphi_i)_{i\in [1,b]}$ be an IBE. Moreover, let~$(X_1, \dots, X_{\tau})$ be a type partition. A \emph{type specification} is a vector~$\vec{p} = (p_1, \dots, p_\tau) \in {\mathds{N}_0}^\tau$ with~$p_j \in [0,|X_j|]$ for each~$j \in [1,\tau]$.
\end{definition}

Since elements of the same type have the same impact on the target function, we can address the change of target values~$\varphi_i(X')$ by just specifying the types of elements added to and removed from~$X'$ with~$X' \subseteq X$. To ensure that after adding and removing elements of specific types from a set~$X'$ corresponds to an actual subset of~$X$, we introduce the notion of subtractive and additive compatibility.  
\begin{definition} \label{Definition: additive and subtractive compatible}
Let~$b$ be an integer, let~$X$ be a set, and let~$(\varphi_i)_{i\in [1,b]}$ be an IBE. Moreover, let~$(X_1, \dots, X_{\tau})$ be a type partition. We say that a type specification~$\vec{p}$ is
\begin{enumerate}[$a)$]
\item \emph{subtractive compatible} with a set~$X'\subseteq X$ if for every~$j \in [1,\tau]$, we have~$|X_j \cap X'| \geq p_j$.
\item \emph{additive compatible} with a set~$X'\subseteq X$ if for every~$j \in [1,\tau]$, we have~$|X_j \cap (X\setminus X')| \geq p_j$.
\end{enumerate}
For given~$\vec{p}$ and~$\vec{q}$, we use the notation~$(\vec{p},\vec{q}) \propto X'$ if~$\vec{p}$ is subtractive compatible with~$X'$ and~$\vec{q}$ is additive compatible with~$X'$. 
\end{definition}

\begin{definition} \label{Definition: type vector operation}
Let~$b$ be an integer, let~$X$ be a set, and let~$(\varphi_i)_{i\in [1,b]}$ be an IBE. Moreover, let~$(X_1, \dots, X_{\tau})$ be a type partition. For given type specifications~$\vec{p}$ and~$\vec{q}$ with~$(\vec{p},\vec{q}) \propto X'$, the \emph{type vector operation} for each~$i \in [1,b]$ is defined as
\begin{align*}
\varphi_i ((X' \setminus \vec{p}) \cup \vec{q}) := \varphi_i ((X' \setminus X_{\vec{p}}) \cup X_{\vec{q}}),
\end{align*}
where~$X_{\vec{p}} \subseteq X$ is some set containing exactly~$p_j$ arbitrary elements from~$X_j \cap X'$ for every~$j  \in [1,\tau]$, and~$X_{\vec{q}} \subseteq X$ is some set containing exactly~$q_j$ arbitrary elements from~$X_j \cap (X\setminus X')$ for every~$j  \in [1,\tau]$.
\end{definition}

The notion of target equivalence (\Cref{Definition: Types}) together with the notion of subtractive and additive compatibility guarantees the following.

\begin{proposition} \label{Prop: TypeVectorOp Well-def}
The type vector operation is well-defined.
\end{proposition}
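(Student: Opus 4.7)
The plan is to show that for any two admissible choices $(X_{\vec p}, X_{\vec q})$ and $(X_{\vec p}', X_{\vec q}')$ the resulting sets $A := (X' \setminus X_{\vec p}) \cup X_{\vec q}$ and $A' := (X' \setminus X_{\vec p}') \cup X_{\vec q}'$ satisfy $\varphi_i(A) = \varphi_i(A')$ for every $i \in [1,b]$. Here the compatibility assumption guarantees that each $X_{\vec p}^{(\prime)}$ really sits inside $X'$ and each $X_{\vec q}^{(\prime)}$ really sits inside $X \setminus X'$, so the defining expressions do produce subsets of $X$ in the first place. I would then argue by transforming $A$ into $A'$ through a sequence of single-element swaps, each of which preserves $\varphi_i$ by virtue of target equivalence (\Cref{Definition: Types}) and the fact that $\sim$ is an equivalence relation (\Cref{Proposition: sim is ER}).

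The first step is a bookkeeping lemma: for every type $j \in [1,\tau]$, one has $|A \cap X_j| = |A' \cap X_j|$. This is because $X_{\vec p}, X_{\vec p}' \subseteq X'$ both contain exactly $p_j$ elements of $X_j$, while $X_{\vec q}, X_{\vec q}' \subseteq X \setminus X'$ both contain exactly $q_j$ elements of $X_j$, so both intersections have size $|X' \cap X_j| - p_j + q_j$. Consequently $|(A \setminus A') \cap X_j| = |(A' \setminus A) \cap X_j|$ for every type $j$.

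The second step is an induction on the size of the symmetric difference $|A \triangle A'|$. In the base case $A = A'$ the claim is trivial. Otherwise the bookkeeping lemma gives a type $j$ with $(A \setminus A') \cap X_j \neq \emptyset$ and $(A' \setminus A) \cap X_j \neq \emptyset$; pick $x \in (A \setminus A') \cap X_j$ and $y \in (A' \setminus A) \cap X_j$. Then $x \sim y$, $x \in A$, and $y \notin A$, so $\{x,y\} \cap A = \{x\}$. \Cref{Definition: Types} yields $\varphi_i(A) = \varphi_i((A \setminus \{x\}) \cup \{y\})$. The new set has symmetric difference with $A'$ of size $|A \triangle A'| - 2$, so the inductive hypothesis closes the argument.

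I do not expect any serious obstacle: the only care needed is to verify that after a swap the count $|A \cap X_j|$ stays equal to $|A' \cap X_j|$ (so the induction can be continued), which is immediate since one element of type $j$ is removed from $A$ and one of type $j$ is added. The only subtlety worth flagging explicitly is that, although \Cref{Definition: Types} is stated with $\{x,y\} \cap A = \{x\}$, the symmetry of $\sim$ established in \Cref{Proposition: sim is ER} ensures we may apply the relation in either direction when picking the swap partner.
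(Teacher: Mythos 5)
Your proof is correct and rests on the same mechanism as the paper's: a single swap of same-type elements preserves $\varphi_i$ by target equivalence, and these swaps are chained together. The only difference is presentational — you work directly with the resulting sets $A, A'$ and make the chain explicit via the type-count bookkeeping and an induction on $|A \triangle A'|$, whereas the paper performs the swap inside $X_{\vec p}$ (resp.\ $X_{\vec q}$) and leaves the iteration implicit; your version is arguably the more careful writeup of the same idea.
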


\begin{proof}
Since~$\vec{p}$ is subtractive compatible with~$X'$, there are at least~$p_j$ elements in~$X_j \cap X'$ and thus, the set~$X_{\vec{p}} \subseteq X$ exists. Analogously, the set~$X_{\vec{q}}$ exists due to the additive compatibility of~$\vec{q}$ with~$X'$. Consequently,~$(X' \setminus X_{\vec{p}}) \cup X_{\vec{q}} \subseteq X$ and therefore,~$\varphi_i ((X' \setminus X_{\vec{p}}) \cup X_{\vec{q}})$ is defined.

It remains to show that the value~$\varphi_i ((X' \setminus X_{\vec{p}}) \cup X_{\vec{q}})$ does not depend on the choice of elements in~$X_{\vec{p}}$ and~$X_{\vec{q}}$. First, let~$x \in X_{\vec{p}}$ and let~$y \not \in X_{\vec{p}}$ with~$y \sim x$. Then,
\begin{align*}
\varphi_i (X' \setminus (X_{\vec{p}} \setminus \{x\} \cup \{y\}) \cup X_{\vec{q}}) 
= \varphi_i ((X' \setminus X_{\vec{p}} \cup X_{\vec{q}}) \setminus \{y\} \cup \{x\})  = \varphi_i (X' \setminus X_{\vec{p}} \cup X_{\vec{q}}),
\end{align*}
by \Cref{Definition: Types}. Analogously, let~$x \in X_{\vec{q}}$ and let~$y \not \in X_{\vec{q}}$ with~$x \sim y$ we have
\begin{align*}
\varphi_i (X' \setminus X_{\vec{p}} \cup (X_{\vec{q}} \setminus \{x\} \cup \{y\}))
= \varphi_i ((X' \setminus X_{\vec{p}} \cup X_{\vec{q}}) \setminus \{x\} \cup \{y\})
= \varphi_i (X' \setminus X_{\vec{p}} \cup X_{\vec{q}}).
\end{align*} Therefore, the type vector operation is well-defined.
\end{proof}

With the notion of type specifications at hand, we next present the algorithmic results. Intuitively, the algorithms behind \Cref{Theorem: Framework Algorithms} work as follows: One considers every possibility of how many elements of which type belong to the (at most)~$k$ elements in~$D_{\rm flip}(f,f')$. For each such choice, one computes the best improving flip corresponding to the choice. 
Note that the information how many elements of which type are flipped, can be encoded as a type specification~$\vec{\delta}$ in the way that all~$\delta_j$ correspond to the number of flipped elements of type~$j$. We first describe the subroutine computing the improving~$b$-partition~$f'$ corresponding to a given type specification~$\vec{\delta}$. Afterwards, we describe how to efficiently iterate over all possible~$\vec{\delta}$ to obtain the running times stated in \Cref{Theorem: Framework Algorithms}.

We use the notation~$\vec{p} \leq \vec{q}$ to indicate that~$p_i \leq q_i$ for all vector entries, and we let~$\vec{p} - \vec{q}$ denote the component-wise difference between~$\vec{p}$ and~$\vec{q}$.

\begin{lemma} \label{Lemma: DP Algorithm}
Let~$b$ be an integer, let~$X$ be a set, let~$(\varphi_i)_{i \in [1,b]}$ be an IBE, and let~$f:X \rightarrow [1,b]$ be a~$b$-partition. Moreover, let~$(X_1, \dots, X_{\tau})$ be a type partition and let~$\vec{\delta}$ be a type specification and we let~$k:= \sum_{j=1}^{\tau} \delta_j$. A~$b$-partition~$f': X \rightarrow [1,b]$ with
\begin{align*}
(|X_1 \cap D_{\rm flip}(f,f')|, \dots, |X_\tau \cap D_{\rm flip}(f,f')|) \leq \vec{\delta}
\end{align*}
that minimizes~$\val(f')$ among all such~$b$-partitions can be computed in
\begin{enumerate}
\item[$a)$] $2^{\Oh(k)} \cdot b \cdot \Phi \cdot |X|^{\Oh(1)}$~time, or in
\item[$b)$] $(\lceil \frac{k}{\tau} \rceil + 1)^{4\tau} \cdot b \cdot \Phi \cdot |X|^{\Oh(1)}$~time.
\end{enumerate}
\end{lemma}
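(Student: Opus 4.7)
The plan is to set up a dynamic program over the bins $1,2,\dots,b$. A state is a triple $(i, \vec r, \vec a)$, where $\vec r$ and $\vec a$ are type specifications with $\vec r, \vec a \leq \vec \delta$; intuitively, $r_j$ (resp.\ $a_j$) counts how many elements of type $j$ have already been removed from (resp.\ added to) the bins $1,\dots,i$ when transforming $f$ into a candidate $f'$. The entry $T[i,\vec r,\vec a]$ stores the minimum value of $\bigoplus_{i' \le i} \varphi_{i'}(\text{new content of bin } i')$ consistent with those cumulative counts. To fill it, for each $i$ I would iterate over all pairs $(\vec r^{(i)}, \vec a^{(i)})$ that are subtractive, respectively additive, compatible with $f^{-1}(i)$ in the sense of \Cref{Definition: type vector operation}, and satisfy $\vec r + \vec r^{(i)} \leq \vec \delta$ and $\vec a + \vec a^{(i)} \leq \vec \delta$; the update combines in $\varphi_i((f^{-1}(i) \setminus \vec r^{(i)}) \cup \vec a^{(i)})$, which is well-defined by \Cref{Prop: TypeVectorOp Well-def}. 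An optimal $b$-partition $f'$ is then read off from any minimum entry $T[b, \vec r^\star, \vec a^\star]$ with $\vec r^\star = \vec a^\star$, by materializing the recorded removal and addition type vectors into concrete elements bin by bin.

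For correctness I would establish a two-way correspondence. Any candidate $f'$ with $|X_j \cap D_{\rm flip}(f,f')| \leq \delta_j$ for all $j$ induces a valid DP trace by letting $\vec r^{(i)}$ and $\vec a^{(i)}$ be the type vectors of $f^{-1}(i) \setminus {f'}^{-1}(i)$ and ${f'}^{-1}(i) \setminus f^{-1}(i)$; the final cumulative totals agree ($\vec r^\star = \vec a^\star$) because each flipped element is simultaneously removed from one bin and inserted into another of the same type, and this trace accumulates exactly $\val(f')$. Conversely, any DP trace ending with $\vec r^\star = \vec a^\star \leq \vec \delta$ can be realized as a genuine $b$-partition by matching removed type-$j$ elements to addition slots of type $j$ in an arbitrary fashion; by \Cref{Definition: Types} together with \Cref{Prop: TypeVectorOp Well-def} the choice of representatives does not affect $\val$, and the resulting $b$-partition has $|X_j \cap D_{\rm flip}(f,f')| \leq r_j^\star \leq \delta_j$, as required.

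The running-time bounds reduce to bounding the number $N := \prod_{j=1}^{\tau}(\delta_j+1)$ of type specifications that are coordinate-wise at most $\vec \delta$. For part~$a)$ I would use the elementary inequality $\delta_j + 1 \leq 2^{\delta_j}$ for $\delta_j \geq 1$, which yields $N \leq 2^{\sum_j \delta_j} = 2^k$; the number of states is $b \cdot N^2$ and the number of transitions per state is at most $N^2$, each costing one call to $\varphi_i$, giving a total of $b \cdot N^4 \cdot \Phi \cdot |X|^{\Oh(1)} = 2^{\Oh(k)} \cdot b \cdot \Phi \cdot |X|^{\Oh(1)}$. For part~$b)$ I would apply AM-GM to the positive numbers $\delta_j + 1$, whose sum equals $k + \tau$: this gives $N \leq ((k+\tau)/\tau)^{\tau} \leq (\lceil k/\tau \rceil + 1)^{\tau}$, and the same counting yields the stated bound $(\lceil k/\tau \rceil + 1)^{4\tau} \cdot b \cdot \Phi \cdot |X|^{\Oh(1)}$.

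The main obstacle I anticipate is not the mechanics of the DP but the bookkeeping in its correctness argument. Specifically, one must verify that replacing concrete element-to-bin moves by their type-vector aggregates is lossless---both that evaluating $\varphi_i$ via the type vector operation matches the true value for every concrete realization, and that the global supply-equals-demand condition $\vec r^\star = \vec a^\star$ is exactly what is needed to assemble a legitimate $b$-partition whose per-type flip distance stays within budget. Once this is firmly in hand, the running-time estimates are routine counting via $2^k$ and AM-GM.
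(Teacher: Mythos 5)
Your proposal is correct and follows essentially the same approach as the paper: the identical dynamic program over bins indexed by cumulative removal/insertion type vectors, the same base/recurrence structure, the same materialization of type vectors into a concrete $b$-partition via an arbitrary type-respecting matching of removed elements to insertion slots, and equivalent AM-GM and $2^k$ bounds for parts~$b)$ and~$a)$. The only minor difference is that you read off the optimum from the minimum over all diagonal entries $T[b,\vec{r}^\star,\vec{a}^\star]$ with $\vec{r}^\star = \vec{a}^\star \leq \vec{\delta}$ (and you bound the number of type specifications via $\delta_j+1 \leq 2^{\delta_j}$ rather than the paper's encoding of subvectors as subsets of a $k$-element multiset), which is a slightly more careful way to cover all flip vectors $\leq \vec{\delta}$ but does not change the method.
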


\begin{proof}
We first provide an algorithm based on dynamic programming and afterwards we discuss the running times~$a)$ and~$b)$.

\textit{Intuition.} Before we formally describe the dynamic programming algorithm, we provide some intuition: Every~$x \in D_{\rm flip}(f,f')$ gets removed from its bin and is inserted into a new bin. We fix an arbitrary ordering of the bins and compute solutions for prefixes of this ordering in a bottom-up manner. For every bin, there is a set of removed elements~$R$ and a set of inserted elements~$I$. Since the target value depends on the types of the elements rather than the concrete sets~$R$ and~$I$, we may only consider how many elements of which type are removed from a bin and inserted into a bin. Therefore, we compute the partial solutions for given~$\vec{p} \leq \vec{\delta}$ and~$\vec{q} \leq \vec{\delta}$, where~$\vec{p}$ corresponds to the removed types, and~$\vec{q}$ corresponds to the inserted types. After the computation, we consider the solution, where~$\vec{p}=\vec{q}=\vec{\delta}$, that is, the solution where the exact same types were removed and inserted. Going back from abstract types to concrete elements then yields the resulting~$b$-partition~$f'$.

\textit{Dynamic programming algorithm.} The dynamic programming table has entries of the form~$T[\vec{p}, \vec{q}, \ell]$ where~$\ell \in [1,b]$,~$\vec{q} \leq \vec{\delta}$, and~$\vec{p} \leq \vec{\delta}$. One such table entry corresponds to the minimal value of 
$$\bigoplus_{i=1}^{\ell} \varphi_i ((f^{-1}(i) \setminus R_i) \cup I_i)$$
under every choice of sets~$R_i \subseteq f^{-1}(i)$ and~$I_i \subseteq X \setminus f^{-1}(i)$ for~$i \in [1,\ell]$ that satisfy
\begin{align*}
&( \sum_{i=1}^{\ell} |R_i \cap X_1|, \dots, \sum_{i=1}^{\ell} |R_i \cap X_{\tau}|) =  \vec{p} \text{\,\,\,\, and \,\,\,\,}
( \sum_{i=1}^{\ell} |I_i \cap X_1|, \dots, \sum_{i=1}^{\ell} |I_i \cap X_{\tau}|) =  \vec{q} \text{.}
\end{align*}
The table is filled for increasing values of~$\ell$. As base case, we set~$T[\vec{p},\vec{q},1] := \varphi_1((f^{-1}(1) \setminus \vec{p}) \cup \vec{q})$ if~$(\vec{p},\vec{q}) \propto f^{-1}(1)$. Recall that~$(\vec{p'},\vec{q'}) \propto f^{-1}(1)$ is true if and only if~$\vec{p'}$ is subtractive compatible with~$f^{-1}(1)$ and~$\vec{q'}$ is additive compatible with~$f^{-1}(1)$.  If~$\vec{p}$ or~$\vec{q}$ is incompatible, we set~$T[\vec{p},\vec{q},1] := \infty$. The recurrence to compute an entry with~$\ell>1$~is
\begin{align*}
T[\vec{p}, \vec{q}, \ell] := \min_{\substack{\vec{p'} \leq \vec{p},~\vec{q'} \leq \vec{q} \\ (\vec{p'},\vec{q'}) \propto f^{-1}(\ell)}} T[\vec{p}-\vec{p'},\vec{q}-\vec{q'},\ell-1] \oplus \varphi_{\ell}((f^{-1}(\ell) \setminus \vec{p'}) \cup \vec{q'}) \text{,}
\end{align*}
Note that~$(\vec{0},\vec{0}) \propto f^{-1}(\ell)$ is always true, and therefore, such a minimum always exists. Herein,~$\vec{0}$ denotes the~$\tau$-dimensional vector where each entry is~$0$.

\begin{claim}

The recurrence is correct.
\end{claim}

\begin{claimproof}
Throughout the proof of this claim, we call the properties posed on the sets~$(R_1, \dots, R_\ell)$ and~$(I_1, \dots, I_\ell)$ in the definition of the table entries the \emph{desired properties for~$\vec{p}$, $\vec{q}$, and~$\ell$}.

We show that the recurrence is correct via induction over~$\ell$. If~$\ell=1$, we have~$T[\vec{p},\vec{q},1] := \varphi_1((f^{-1}(1) \setminus \vec{p}) \cup \vec{q})$. Thus, by \Cref{Definition: type vector operation}, there are sets~$R \subseteq f^{-1}(1)$ and~$I \subseteq X \setminus f^{-1}(1)$ where~$R$ is a set containing exactly~$p_j$ elements from~$X_j \cap f^{-1}(1)$ for every~$j  \in [1,\tau]$, and~$I$ is a set containing exactly~$q_j$ elements from~$X_j \cap (X\setminus f^{-1}(1))$ for every~$j  \in [1,\tau]$ and we have~$T[\vec{p},\vec{q},1]=\varphi_{1}((f^{-1}(1) \setminus R) \cup I)$. Since the value is invariant under the concrete choices of elements in~$R$ and~$I$ due to \Cref{Prop: TypeVectorOp Well-def}, the value $\varphi_{1}((f^{-1}(1) \setminus R) \cup I)$ is minimal among all such~$R$ and~$I$. Thus, the Recurrence holds for the base case~$\ell=1$.

Next, let~$\ell>1$ and assume that the recurrence holds for~$\ell-1$. Let~$(R_1, \dots, R_{\ell})$ and~$(I_1, \dots, I_{\ell})$ be sets with the desired properties for~$\vec{p}$, $\vec{q}$, and~$\ell$. We show
$$\bigoplus_{i=1}^{\ell} \varphi_i ((f^{-1}(i) \setminus R_i) \cup I_i) = T[\vec{p}, \vec{q}, \ell].$$

$(\geq)$ Since~$\oplus$ is associative and commutative, we have
\begin{align*}
\bigoplus_{i=1}^{\ell} \varphi_i ((f^{-1}(i) \setminus R_i) \cup I_i) = \underbrace{\left( \bigoplus_{i=1}^{\ell-1} \varphi_i ((f^{-1}(i) \setminus R_i) \cup I_i) \right)}_{=:Z} \oplus  \varphi_{\ell} ((f^{-1}(\ell) \setminus R_\ell) \cup I_\ell).
\end{align*}
Note that the vectors~$\vec{a}:= (|R_\ell \cap X_1|, \dots, |R_\ell \cap X_{\tau}|)$ and~$\vec{b}:= (|I_\ell \cap X_1|, \dots, |I_\ell \cap X_{\tau}|)$ satisfy~$\vec{a} \leq \vec{p}$ and~$\vec{b} \leq \vec{q}$. Moreover, since~$R_\ell \subseteq f^{-1}(\ell)$ and~$I_\ell \subseteq X \setminus f^{-1}(\ell)$, it holds that~$(\vec{a},\vec{b}) \propto f^{-1}(\ell)$ is true. Thus, the minimum of the right-hand-side of the recurrence includes~$\vec{a}$ and~$\vec{b}$, and by the definition of type specification operations and \Cref{Prop: TypeVectorOp Well-def}, we have~$\varphi(\ell) ((f^{-1} (\ell) \setminus \vec{a}) \cup \vec{q}) = \varphi(\ell) ((f^{-1} (\ell) \setminus R_\ell) \cup I_\ell)$. Since by the inductive hypothesis we have~$T[\vec{p}-\vec{a}, \vec{q}- \vec{b}, \ell-1] \leq Z$, we conclude~$\bigoplus_{i=1}^{\ell} \varphi_i ((f^{-1}(i) \setminus R_i) \cup I_i) \geq T[\vec{p}, \vec{q}, \ell]$.

$(\leq)$ Let~$\vec{a}$ and~$\vec{b}$ be the vectors minimizing the right-hand-side of the recurrence. By the definition of type specification operations, there are sets~$\widetilde{R_\ell} \subseteq f^{-1}(\ell)$ and~$\widetilde{I_\ell} \subseteq X \setminus f^{-1}(\ell)$ with~$(|\widetilde{R_\ell} \cap X_1|, \dots, |\widetilde{R_\ell} \cap X_{\tau}|) = \vec{a}$ and~$(|\widetilde{I_\ell} \cap X_1|, \dots, |\widetilde{I_\ell} \cap X_{\tau}|) = \vec{b}$. By inductive hypothesis, the value~
$T[\vec{p}-\vec{a}, \vec{q}- \vec{b}, \ell-1]$ corresponds to the minimal value of~$\bigoplus_{i=1}^{\ell-1} \varphi_i ((f^{-1}(i) \setminus \widetilde{R_i}) \cup \widetilde{I_i})$ for sets~$(\widetilde{R_1}, \dots, \widetilde{R_{\ell-1}})$ and~$(\widetilde{I_1}, \dots, \widetilde{I_{\ell-1}})$ that satisfy the desired properties for~$\vec{p}-\vec{a}$, $\vec{q}-\vec{b}$, and~$\ell-1$. Then,~$T[\vec{p},\vec{q},\ell]$ corresponds to the value~$\bigoplus_{i=1}^{\ell} \varphi_i ((f^{-1}(i) \setminus \widetilde{R_i}) \cup \widetilde{I_i})$ for sets~$(\widetilde{R_1}, \dots, \widetilde{R_{\ell}})$ and~$(\widetilde{I_1}, \dots, \widetilde{I_{\ell}})$ that satisfy the desired properties for~$\vec{p}$, $\vec{q}$, and~$\ell$. Thus, it follows that~$\bigoplus_{i=1}^{\ell} \varphi_i ((f^{-1}(i) \setminus R_i \cup I_i)) \leq T[\vec{p}, \vec{q}, \ell]$.
\end{claimproof}

\textit{Computation of a solution~$f'$.} We next describe how to compute the desired~$b$-partition~$f'$ after filling the dynamic programming table~$T$. The sets~$(R_1, \dots, R_b)$ and~$(I_1, \dots, I_b)$ corresponding to the table entry~$T[\vec{\delta}, \vec{\delta}, b]$ can be found via trace back. We let~$\mathcal{R}:= \bigcup_{i=1}^{b} R_i$. Due to the property~$R_i \subseteq f^{-1}(i)$ for all~$i \in [1,b]$, all~$R_i$ are disjoint. Together with the properties on the vectors~$\vec{p}$ and~$\vec{q}$, this implies~$
|\mathcal{R} \cap X_j| = \sum_{i=1}^{b} |R_i \cap X_j| = \sum_{i=1}^{b} |I_i \cap X_j| = \delta_j$
for every~$j \in [1,\tau]$. 
Consequently, for every~$j \in [1,\tau]$, there is a mapping~$\gamma_j: \mathcal{R} \cap X_j \rightarrow [1, b]$, that maps exactly~$|I_i \cap X_j|$ elements of~$\mathcal{R} \cap X_j$ to~$i$ for every~$i \in [1,b]$. Intuitively, the mapping~$\gamma_j$ assigns a new bin to each item of type~$j$ that was removed from some bin.

The~$b$-partition~$f'$ is defined via the mappings~$\gamma_j$ as follows: For every~$x \in X \setminus \mathcal{R}$, we set~$f'(x):=f(x)$. For every~$x \in \mathcal{R}$, we set~$f'(x) := \gamma_j (x)$ for the corresponding~$j$ with~$x \in \mathcal{R} \cap X_j$. 

By the definition of~$f'$, the~$b$-partitions $f$ and~$f'$ may only differ on~$\mathcal{R}$ and therefore,~$|\mathcal{R} \cap X_j| = \delta_j$ implies~$
(|X_1 \cap D_{\rm flip}(f,f')|, \dots, |X_\tau \cap D_{\rm flip}(f,f')|) \leq \vec{\delta}$.
It remains to show that~$\val(f')$ is minimal. By the construction of~$f'$ we have~$f'^{-1}(i) = f^{-1}(i) \setminus R_i \cup I'_i$, where~$I'_i := \bigcup_{j=1}^{\tau} \{ x \in \mathcal{R} \cap X_j \mid \gamma_j(x) = i \}$.
Then, by the construction of the~$\gamma_j$ we have
\begin{align*}
(|I'_i \cap X_1| , \dots, |I'_i \cap X_\tau|)
=&\, (|\{ x \in \mathcal{R} \cap X_1 \mid \gamma_1(x) = i \}| , \dots, |\{ x \in \mathcal{R} \cap X_{\tau} \mid \gamma_{\tau}(x) = i \}|)\\
=&\, (|I_i \cap X_1| , \dots, |I_i \cap X_{\tau}|).
\end{align*}
Thus, the sets~$(f^{-1}(i) \setminus R_i) \cup I'_i$ and~$(f^{-1}(i) \setminus R_i) \cup {I}_i$ contain the exact same number of elements from each~$X_j$. Consequently, we have~$\varphi_i( (f^{-1}(i) \setminus R_i) \cup I'_i) = \varphi((f^{-1}(i) \setminus R_i) \cup {I}_i)$ for every~$i \in [1,b]$ and thus, by the definition of types, the minimality of~$T[\vec{\delta},\vec{\delta},b]$ implies the minimality of~$\val(f')$.

\textit{Running time.} We finally provide two different ways to analyze the running time.

$a)$ The vector~$\vec{\delta}$ can be regarded as a set~$M$ containing~$\sum_{j=1}^{\tau} \delta_j = k$ individual identifiers from which exactly~$\delta_j$ are labeled with type~$j$ for each~$j \in [1,\tau]$. With this view on~$\vec{\delta}$, the vectors~$\vec{p} \leq \vec{\delta}$ and~$\vec{q} \leq \vec{\delta}$ correspond to subsets of~$M$, so we have~$2^k$ possible choices for~$\vec{p}$ and~$2^k$ possible choices for~$\vec{q}$. 
Consequently, the size of the table~$T$ is~$2^{2k} \cdot b$. To compute one entry, one needs to consider up to~$2^{2k}$ choices of~$\vec{p'}$ and~$\vec{q'}$. For each such choice,~$(\vec{p'},\vec{q'}) \propto f^{-1}(\ell)$ can be checked in~$|X|^{\Oh(1)}$~time. With the evaluation of the~IBE, this leads to a total running time of~$2^{4k} \cdot b \cdot \Phi \cdot |X|^{\Oh(1)}$.

$b)$ The number of possible vectors that are component-wise smaller than~$\vec{\delta}$ is~$\prod_{j=1}^{\tau} (\delta_j + 1)$, since each entry is an element of~$[0,\delta_j]$.
Since~$\sum_{j=1}^{\tau} \delta_j = k$, the product of all~$(\delta_j +1)$ is maximal if all~$\delta_j$ have roughly the same size~$\frac{k}{\tau}$. Thus, we have~$\prod_{j=1}^{\tau} (\delta_j + 1) \leq (\frac{k}{\tau} + 1)^{\tau}$.
Consequently, the size of~$T$ is upper-bounded by~$(\lceil \frac{k}{\tau} \rceil + 1)^{2\tau} \cdot b$. To compute one entry, one needs to consider up to~$(\lceil \frac{k}{\tau} \rceil + 1)^{2\tau}$ choices of~$\vec{p'}$ and~$\vec{q'}$. For each such choice,~$(\vec{p'},\vec{q'}) \propto f^{-1}(\ell)$ can be checked in~$|X|^{\Oh(1)}$~time. With the evaluation of the~IBE, this leads to a total running time of~$(\lceil \frac{k}{\tau} \rceil + 1)^{4\tau} \cdot b \cdot \Phi \cdot |X|^{\Oh(1)}$.
\end{proof}

We next use \Cref{Lemma: DP Algorithm} to prove \Cref{Theorem: Framework Algorithms}. 

\begin{proof}[Proof of \Cref{Theorem: Framework Algorithms}]
Recall that the idea is to consider every possible vector~$\vec{\delta}$ specifying how many elements of which type belong to the elements of the flip. For each possible~$\vec{\delta}$ we then use the algorithm behind \Cref{Lemma: DP Algorithm}. It remains to describe how to efficiently iterate over the possible choices of~$\vec{\delta}$ to obtain the running times~$a)$ and~$b)$.

$a)$ Let~$\vec{e_1}, \dots, \vec{e_\tau}$ be the~$\tau$-dimensional unit vectors.
That is, only the~$j$th entry of~$\vec{e_j}$ equals~$1$ and all other entries equal~$0$. We enumerate all~$\vec{\delta}$ with~$\sum_{i=1}^{\tau} \delta_i \leq k$ by considering all~$\tau^{k}$ possibilities to repetitively draw up to~$k$-times from the set~$\{\vec{e_1}, \dots, \vec{e_\tau}\}$.
For each such choice~$\vec{\delta}$, we check whether~$\delta_i \leq |X_i|$ for each~$i \in [1,\tau]$ to ensure that~$\vec{\delta}$ is a type specification. If this is the case, we apply the algorithm behind \Cref{Lemma: DP Algorithm}. With the running time from \Cref{Lemma: DP Algorithm}~$a)$, this leads to a total running time of~$\tau^k \cdot 2^{\Oh(k)} \cdot b \cdot \Phi \cdot |X|^{\Oh(1)}$.

$b)$ Note that for a vector~$\vec{\delta}$ with~$\sum_{i=1}^{\tau} \delta_i = k$, we have~$\delta_i \in [0,k]$ for every~$i \in [1,\tau]$. Thus, in time~$(k+1)^{\tau} \cdot k^{\Oh(1)}$ we can enumerate all possible~$\vec{\delta}$. Analogously to~$a)$, we check whether~$\delta_i \leq |X_i|$ for each~$i \in [1,\tau]$. With the running time from \Cref{Lemma: DP Algorithm}~$b)$, this leads to a total running time of~$k^{\Oh(\tau)} \cdot b \cdot \Phi \cdot |X|^{\Oh(1)}$.
\end{proof}

\section{Applications of the Framework} \label{Section: Framework Application}
We next study parameterized local search versions of a wide range of well-known problems. For each of these problems, we consider parameterization by the search radius in combination with some structural parameter. 
All results rely on the algorithm behind \Cref{Theorem: Framework Algorithms}. Therefore, it suffices to express an instance~$I$ of a local search problem as a corresponding instance~$J = (b, X, (\varphi_i)_{i \in [1,b]}, f,k)$ of \LSGBP. The proofs in this section are given in the following structure:
\begin{enumerate}[1.]
\item \emph{GBP Construction:} Given the instance~$I$ of the local search problem, we specify the universe~$X$, the number of bins~$b$ and the IBE~$(\varphi_i)_{i \in [1,b]}$ of the instance~$J$. Since the running time~$\Phi$ to evaluate the IBE is part of the running time in \Cref{Theorem: Framework Algorithms}, the GBP construction also includes a description of how to efficiently evaluate the IBE from the input instance~$I$.
\item \emph{Type Partition:} We express how a type partition for~$J$ is computed from~$I$ and describe how the targeted structural parameter corresponds to the number of types as defined in \Cref{Section: Framework}.
\item \emph{Solution Correspondence:} Recall that an instance of a local search problem always contains a solution of the underlying problem. Let~$s$ be the solution given in the instance~$I$. To show the solution correspondence, we describe how solutions of~$I$ can be transformed into solutions of~$J$ and vice versa, such that the following holds: A solution~$s'$ of~$I$ is \emph{better} than the given solution~$s$ of~$I$ if and only if the corresponding $b$-partition~$f_{s'}$ has strictly smaller (larger) target value than~$f_{s}$ with respect to the~IBE~$(\varphi_i)_{i \in [1,b]}$.
\end{enumerate}

\subsection{Max $c$-Cut}
\label{sec:MC}
Let~$c \in \nn$ and let~$G=(V,E)$ be an undirected graph.
We say that an edge~$e\in E$ is \emph{properly colored} by a~$c$-partition~$\colo$ of~$V$, if~$\colo$ assigns distinct colors to the endpoints of~$e$.

\defproblem{\MC}{An undirected graph~$G=(V,E)$.}{Find a~$c$-partition~$\colo$ of~$V$ that maximizes the number of properly colored edges under~$\colo$.}

Note that the goal of~\MC can also be equivalently redefined as: 
Find a~$c$-partition~$\colo$ of~$V$ that minimizes the number of edges that are not properly colored under~$\colo$, that is, a~$c$-partition~$\colo$ that minimizes~$\faults(\chi) := \sum_{i\in [1,c]} |E_G(\colo^{-1}(i))|$.

The input of the corresponding local search problem \LMC additionally consists of a~$c$-partition~$\chi$ and some~$k \in \mathds{N}$, and one aims to find a~$c$-partition~$\chi'$ with~$\faults(\chi') < \faults(\chi)$ and~$\dflip(\chi,\chi') \leq k$.

\LMC is W[1]-hard parameterized by~$k$~\cite{GGKM23} and cannot be solved in $f(k)\cdot n^{o(k)}$~time unless the ETH is false~\cite{M24}.
Form a positive side, it was shown that the problem can be solved in $2^{\Oh(k)}\cdot |I|^{\Oh(1)}$~time on apex-minor-free graphs~\cite{FFLRSV12}.
Moreover, on general graphs, the problem can be solved in~$\Delta^{\Oh(k)}\cdot |I|^{\Oh(1)}$~time~\cite{GGKM23}, where~$\Delta$ denotes the maximum degree of the input graph.
This algorithms found successful application as a post-processing algorithm for a state-of-the-art heuristic for~\MC~\cite{GGKM23}.

We show the following.

\begin{theorem} \label{Theorem: Algo Max-c-Cut}
\LMC can be solved in~$\nd^k \cdot 2^{\Oh(k)} \cdot n^{\Oh(1)}$ and~$k^{\Oh(\nd)} \cdot n^{\Oh(1)}$~time.
\end{theorem}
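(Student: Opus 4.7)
The plan is to recast \LMC as an instance of \LSGBP and invoke \Cref{Theorem: Framework Algorithms}. Given the input $(G,\chi,k)$, I would set $X := V(G)$, $b := c$, and for every $i \in [1,c]$ define $\varphi_i(S) := |E_G(S)|$, taking $\oplus$ to be integer addition. Since \MC is equivalent to minimizing $\faults(\chi) = \sum_{i \in [1,c]} |E_G(\chi^{-1}(i))|$, the resulting target value satisfies $\val(\chi) = \faults(\chi)$ for every $c$-partition $\chi$, and each value $\varphi_i(S)$ is evaluable in $n^{\Oh(1)}$ time, so $\Phi = n^{\Oh(1)}$.

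For the type partition, I would group the vertices of $V$ by their neighborhood class, producing at most $\nd(G)$ parts, hence $\tau \leq \nd(G)$. The main step that requires a small argument (though still routine) is verifying target equivalence within each class. Concretely, for two vertices $u,w$ with $N(u) \setminus \{w\} = N(w) \setminus \{u\}$ and any $A \subseteq V$ with $A \cap \{u,w\} = \{u\}$, let $A' := (A \setminus \{u\}) \cup \{w\}$. The edges of $G[A]$ that avoid $u$ coincide with the edges of $G[A']$ that avoid $w$, since both reduce to the edge set of $G[A \setminus \{u\}]$. The remaining edges are of the form $\{u,v\}$ with $v \in (A \setminus \{u\}) \cap N(u)$ on one side and $\{w,v\}$ with $v \in (A \setminus \{u\}) \cap N(w)$ on the other. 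Using $w \notin A$ and $u \notin A \setminus \{u\}$, every relevant $v$ lies outside $\{u,w\}$, so the neighborhood-class equality yields $(A \setminus \{u\}) \cap N(u) = (A \setminus \{u\}) \cap N(w)$. Hence these two edge sets have equal cardinality and $\varphi_i(A) = \varphi_i(A')$ for every $i$, matching \Cref{Definition: Types}.

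Finally, the solution correspondence is immediate: a $c$-partition of $V$ in $I$ is exactly a $b$-partition of $X$ in the constructed instance, the flip distance $\dflip$ is preserved by construction, and $\faults(\chi') < \faults(\chi)$ holds if and only if $\val(\chi') < \val(\chi)$. Feeding $\tau \leq \nd(G)$, $b = c \leq n$, and $\Phi = n^{\Oh(1)}$ into \Cref{Theorem: Framework Algorithms} then yields both claimed running times $\nd^k \cdot 2^{\Oh(k)} \cdot n^{\Oh(1)}$ and $k^{\Oh(\nd)} \cdot n^{\Oh(1)}$.
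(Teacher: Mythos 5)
Your proposal is correct and follows essentially the same route as the paper's proof: the same GBP instance with $X=V$, $b=c$, $\varphi_i(S)=|E_G(S)|$, $\oplus$ being integer addition, the same type partition via neighborhood classes, and the same appeal to \Cref{Theorem: Framework Algorithms}. Your verification of target equivalence is a bit more explicit than the paper's one-line argument, but it is the same calculation.
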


\begin{proof}
The proof relies on the algorithm behind \Cref{Theorem: Framework Algorithms}. We describe how to use this algorithm to solve an instance~$I:=(G=(V,E), \chi, k)$ of \LMC.

\textit{1. GBP Construction:} We describe how to obtain an instance~$J:=(b,X,(\varphi_i)_{i \in [1,b]},f,k)$ of \LSGBP. Our universe~$X$ is exactly the vertex set~$V$ of~$G$, and the number~$b$ of bins is exactly the number~$c$ of colors. We next define the IBE. For each~$i \in [1,c]$ and each vertex set~$S \subseteq V$, we define~$\varphi_i(S) := |E_G(S)|$. Note that for each value~$\varphi(S)$ can obviously be computed in~$n^{\Oh(1)}$ time, and thus, the IBE can be evaluated in polynomial time from the input instance~$I$. Our operation~$\oplus$ is the sum of integer numbers.

\textit{2. Type Partition:} Our type partition is the collection of neighborhood classes of~$G$ that can be computed in~$\Oh(n+m)$~time~~\cite{HM91}. 

To show that this collection is in fact a type partition, we show that two vertices from the same neighborhood class are target equivalent according to \Cref{Definition: Types}. To this end, let~$C$ be a neighborhood class of~$G$, let~$u$ and~$v$ be vertices of~$C$.
We show that~$u$ and~$v$ are target equivalent. 
Let~$S\subseteq V$ be a vertex set with~$S \cap \{u,v\} = \{u\}$.
We show that for each~$i\in [1,b]$, $\varphi_i(S) = \varphi_i((S \setminus \{u\}) \cup \{v\})$.
Let~$S' := (S \setminus \{u\}) \cup \{v\}$.
Since all IBEs~$\IBEs$ are identically defined, it suffices to only consider~$i=1$.
By the fact that~$C$ is a neighborhood class of~$G$, $u$ and~$v$ have the same neighbors in~$S\setminus \{u\} = S' \setminus \{v\}$.
Hence, $\varphi_1(S) = \varphi_1(S')$.
This implies that~$u$ and~$v$ are target equivalent.

\textit{3. Solution Correspondence:} Note that the set of solutions~$\chi'$ of \LMC is exactly the set of all~$c$-partitions of~$V$. Since~$X=V$ and~$b=c$, the solutions of~$I$ have a one-to-one correspondence to the solutions of~$J$. Moreover, note that the definition of the IBE is based on the reformulation of the objective function of~\MC. Since~$\oplus$ is the sum of integer numbers, we have~$\val(\chi') = \faults(\chi')$ for every~$c$-partition~$\chi'$. Therefore,~$\chi'$ is a better solution than~$\chi$ for \LMC if and only if~$\chi'$ is a smaller target value than~$\chi$ with respect to the IBE.
\end{proof}

\subsection{Nash Social Welfare}\label{sec:NSW} 
Let~$n,m$ be integers.
In this section, $\mathcal{A}$ is a set of $n$~\emph{agents} and~$\mathcal{S}$ is a set of $m$~\emph{items}.
Furthermore, we have~$n$ \emph{utility functions}~$(u_i:\mathcal{S}\to\mathds{N})_{i\in[1,n]}$. 
An $n$-partition~$\colo$ of~$\mathcal{S}$ is called an \emph{allocation}.
For an allocation~$\chi$, the function~$
\Nash(\chi) := \prod_{i\in[1,n]} \left( \sum_{s \in \chi^{-1}(i)} u_i(s) \right)$
is called the \emph{Nash score} of~$\chi$.

\defproblem{\NSW}
{A set~$\mathcal{A}$ of agents, a set~$\mathcal{S}$ of items, and a set~$(u_i)_{i\in[1,n]}$ of utilities.}
{Find an allocation~$\chi$ that maximizes~$\Nash(\chi)$.}

The input of the corresponding local search problem \NSWLoc additionally consists of an allocation~$\chi$ and some~$k \in \mathds{N}$ and one asks for an allocation~$\chi'$ with~$\Nash(\chi') > \Nash(\chi)$ and~$\dflip(\chi,\chi') \leq k$.

We say that two items~$j_1$ and~$j_2$ are \emph{identical} if each agents values~$j_1$ and~$j_2$ similarly, that is, if~$u_i(j_1)=u_i(j_2)$ for each~$i\in[n]$. Two items are \emph{distinct} if they are not identical. 

To the best of our knowledge, parameterized local search for \NSW has not yet been studied. We now show the following.

\begin{theorem} \label{Theorem: Algo Nash}
Let~$\tau$ denote the maximum number of pairwise distinct items in an instance of \NSWLoc. \NSWLoc can be solved in $\tau^k \cdot 2^{\Oh(k)} \cdot (n+m)^{\Oh(1)}$~time and in $k^{\Oh(\tau)} \cdot (n+m)^{\Oh(1)}$~time.
\end{theorem}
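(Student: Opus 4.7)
The plan is to follow the template established in the proof of \Cref{Theorem: Algo Max-c-Cut} and translate an instance of \NSWLoc into an instance of \LSGBP so that \Cref{Theorem: Framework Algorithms} can be invoked. Since \NSW is a maximization problem (with Nash score being a product), the maximization variant of \LSGBP is used, with $\inf = -\infty$ and the operation $\oplus$ chosen as integer multiplication; note that all utility values are non-negative, so $-\infty$ is never produced.

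For the \emph{GBP Construction}, given~$I:=(\mathcal{A},\mathcal{S},(u_i)_{i\in[1,n]},\chi,k)$, I would set the universe $X := \mathcal{S}$, the number of bins $b := n$, and define $\varphi_i(S) := \sum_{s \in S} u_i(s)$ for every agent~$i \in [1,n]$ and every~$S \subseteq \mathcal{S}$. Each such value is computable in $(n+m)^{\Oh(1)}$~time, so~$\Phi$ is polynomial. For the \emph{Type Partition}, I would group the items into equivalence classes under the relation ``identical'', i.e.\ $s_1$ and $s_2$ lie in the same class iff $u_i(s_1) = u_i(s_2)$ for every $i\in[1,n]$. This partition has exactly~$\tau$ parts by definition of~$\tau$, and can be computed in polynomial time by bucket-sorting items according to their utility vectors. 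Target equivalence is then immediate: for any $A \subseteq \mathcal{S}$ with~$A\cap \{s_1,s_2\} = \{s_1\}$ and any~$i$,
\begin{align*}
\varphi_i((A\setminus \{s_1\}) \cup \{s_2\}) = u_i(s_2) + \!\!\sum_{s\in A\setminus \{s_1\}}\!\! u_i(s) = u_i(s_1) + \!\!\sum_{s\in A\setminus \{s_1\}}\!\! u_i(s) = \varphi_i(A).
\end{align*}

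For the \emph{Solution Correspondence}, allocations are by definition $n$-partitions of~$\mathcal{S}$, so solutions of~$I$ are in bijection with $b$-partitions of~$X$. Under $\oplus = \cdot$, the target value of an allocation~$\chi'$ is~$\val(\chi') = \prod_{i\in[1,n]}\varphi_i(\chi'^{-1}(i)) = \prod_{i\in[1,n]} \sum_{s\in\chi'^{-1}(i)} u_i(s) = \Nash(\chi')$. Hence~$\chi'$ is strictly better than~$\chi$ for \NSWLoc iff it has strictly larger target value in the constructed \LSGBP instance, and~$d_{\rm flip}$ is defined identically on both sides. Applying \Cref{Theorem: Framework Algorithms} then yields the two claimed running times.

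The only real subtlety I anticipate is making sure the maximization version of the framework is invoked cleanly (since the statements of \Cref{Definition: Target Function} through \Cref{Theorem: Framework Algorithms} are phrased for minimization) and that multiplication fits the axiomatic requirements on~$\oplus$ given non-negative integer utilities; beyond that, the verification is a routine unfolding of definitions, and no step should require ideas beyond those already used in \Cref{Theorem: Algo Max-c-Cut}.
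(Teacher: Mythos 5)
Your proposal matches the paper's proof essentially verbatim: same universe $X = \mathcal{S}$, same bins $b = n$, same IBE $\varphi_i(X') = \sum_{s\in X'} u_i(s)$ with $\oplus$ being multiplication, same type partition by identical utility vectors, and same one-to-one solution correspondence with $\val(\chi') = \Nash(\chi')$. The only difference is that you spell out the target-equivalence verification and the maximization-vs-minimization caveat more explicitly than the paper does, which is a point in your favor but does not change the argument.
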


\begin{proof}
The proof relies on the algorithm behind \Cref{Theorem: Framework Algorithms}. We describe how to use this algorithm to solve an instance~$I:=(\mathcal{A}, \mathcal{S},(u_i)_{i\in[1,n]}, \chi,k)$ of \NSWLoc.

\textit{1. GBP construction:}
We describe how~$I$ corresponds to an instance~$J:= (b,X, (\varphi_i)_{i \in [1,b]},f,k)$ of \LSGBP. Our universe~$X$ is exactly the set~$\mathcal{S}$ of items and~$b$ is exactly the number~$n$ of agents. It remains to define the IBE. For~$i \in [1,n]$ we let~$\varphi_i(X') := \sum_{s \in X'} u_i(s)$. Given the instance~$I$, each value~$\varphi_i(X')$ can obviously be computed in polynomial time in~$|I|$, where~$|I|$ denotes the total encoding size of the instance~$I$. Our operation~$\oplus$ is the multiplication of integer numbers.

\textit{2. Type Partition:}
Let~$(X_1, \dots, X_{\tau})$ be a partition of~$\mathcal{S}$ where each~$X_j$ is an inclusion-wise maximal set of pairwise identical elements. Note that this partition can clearly be computed in~$(n+m)^{\Oh(1)}$~time. Moreover, since identical elements have the same values for each agent, the elements of each~$X_j$ are pairwise target-equivalent according to \Cref{Definition: Types}.

\textit{3. Solution Correspondence:}
Since~$\oplus$ is the multiplication of integer numbers and~$\varphi_i(X') := \sum_{s \in X'} u_i(s)$ for each~$i \in  [1,n]$, we have~$\val(\chi) = \Nash (\chi)$ for every allocation~$\chi$.
\end{proof}

In \Cref{NSWisHard} we complement the above algorithm by showing that \NSWLoc is W[1]-hard parameterized by~$k$ alone and that an algorithm with running time~$n^{o(k)}$ would refute the ETH.

\subsection{Cluster Editing}
\label{sec:CE}
A \emph{cluster graph} is an undirected graph in which each connected component forms a clique. Let~$G=(V,E)$ be an undirected graph. In \CE, one aims to apply a minimum number of edge modification (edge insertions and edge deletions) on~$G$, such that the resulting graph is a cluster graph. We let~$\binom{V}{2}$ denote the set of two-element subsets of vertices corresponding to possible edge modifications. Given a set~$E' \subseteq \binom{V}{2}$, we let~$E \triangle E' := (E \setminus E') \cup (E' \setminus E)$ denote the \emph{symmetric difference} corresponding to the application of the graph modifications. 

\defproblem{\CE}{An undirected graph~$G=(V,E)$.}{Find a set~$E' \subseteq \binom{V}{2}$ such that~$(V,E \triangle E')$ is a cluster graph and~$|E'|$ is minimal under this property.}

Note that the maximum number of clusters corresponds to the number of vertices~$n$. We consider an equivalent definition of \CE where one asks for an~$n$-labeling~$\chi$ with partitioning~$V$ into~$n$ (possibly empty) classes~$\chi^{-1}(1), \dots, \chi^{-1}(n)$, such that
\begin{align*}
\score(\chi) := \sum_{i=1}^{n} |E(\chi^{-1}(i))| - ( \binom{|\chi^{-1}(i)|}{2} - |E(\chi^{-1}(i))|)
\end{align*}
is maximal~\cite{M24}. Intuitively, maximizing this score corresponds to maximizing the number of present edges in the connected components minus the required edge insertions such that these components form a cliques.

We study a parameterized local search version of \CE where one flip in the flip neighborhood corresponds to moving single vertices from their current cluster into another cluster. 
The input of the corresponding local search problem \LSCE consists of the input of \CE, together with an additional integer~$k$ and an~$n$-labeling~$\chi$ of~$V$ for some. The task is to compute a~$n$-labeling~$\chi'$ such that~$\score(\chi')> \score(\chi)$ and~$\dflip(\chi,\chi') \leq k$. 

With respect to this neighborhood, \LSCE is known to be W[1]-hard parameterized by~$k$ and cannot be solved in $f(k)\cdot n^{o(k)}$~time unless the ETH is false~\cite{GMNW23}.
Form a positive side, it was shown that the problem can be solved in $\Delta^{\Oh(k)}\cdot |I|^{\Oh(1)}$~time~\cite{GMNW23}, where~$\Delta$ denotes the maximum degree of the input graph.

We now show the following.

\begin{theorem} \label{Theorem: Algo Cluster Editing}
\LSCE can be solved in~$\nd^k \cdot 2^{\Oh(k)} \cdot n^{\Oh(1)}$~time and in~$k^{\Oh(\nd)} \cdot n^{\Oh(1)}$~time.
\end{theorem}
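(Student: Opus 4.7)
The plan is to reduce an instance $I := (G = (V,E), \chi, k)$ of \LSCE to an instance of \LSGBP and then invoke \Cref{Theorem: Framework Algorithms}, exactly in the style of the \MC and \NSW applications given above. Since \LSCE asks to \emph{maximize} $\score$ while the framework minimizes $\val$, the first design choice is to encode $-\score$ as the target value. Concretely, I would set $X := V$, $b := n$ (one bin per potential cluster), $f := \chi$, and, for each $i \in [1,n]$ and each $S \subseteq V$,
\begin{align*}
\varphi_i(S) := \binom{|S|}{2} - 2\,|E_G(S)|,
\end{align*}
with $\oplus$ being the integer sum. Each $\varphi_i(S)$ is evaluable in $n^{\Oh(1)}$ time, so $\Phi$ is polynomial. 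Summing the contributions over all bins then yields $\val(\chi') = -\score(\chi')$ for every $n$-partition $\chi'$, so that $\val(\chi') < \val(\chi)$ iff $\score(\chi') > \score(\chi)$; this takes care of the solution correspondence step automatically.

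For the type partition I would again use the neighborhood classes of $G$, which are computable in $\Oh(n+m)$ time and whose count equals $\nd(G)$. The main verification, and the step I expect to be the principal obstacle, is target equivalence: for any two vertices $u,v$ in a common neighborhood class and any $A \subseteq V$ with $A \cap \{u,v\} = \{u\}$, I need $\varphi_i(A) = \varphi_i((A \setminus \{u\}) \cup \{v\})$ for every $i \in [1,n]$. The cardinality, and hence the binomial term, is clearly preserved under the swap; the edge count is preserved because $N(u) \setminus \{v\} = N(v) \setminus \{u\}$ implies that $u$ and $v$ have the same neighbors inside $A \setminus \{u\} = A' \setminus \{v\}$, where $A' := (A \setminus \{u\}) \cup \{v\}$. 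The slightly delicate point is that the set $\{u,v\}$ itself might or might not be an edge of $G$; a brief case distinction on whether $\{u,v\} \in E(G)$ shows that in both subcases the count of edges from the ``swapped'' vertex to the remaining elements of $A$ is identical for $u$ and $v$, so that $|E_G(A)| = |E_G(A')|$ and therefore $\varphi_i(A) = \varphi_i(A')$.

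With both pieces in place, \Cref{Theorem: Framework Algorithms}, applied with $\tau = \nd(G)$, $b = n$, and $\Phi = n^{\Oh(1)}$, yields the two claimed running times $\nd^k \cdot 2^{\Oh(k)} \cdot n^{\Oh(1)}$ and $k^{\Oh(\nd)} \cdot n^{\Oh(1)}$ directly. The rest of the write-up would follow the three-step template (\emph{GBP Construction}, \emph{Type Partition}, \emph{Solution Correspondence}) already used for \LMC and \NSWLoc, with only the edge-swap bookkeeping of the second step requiring more than a few lines.
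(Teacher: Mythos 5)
Your proposal is correct and follows essentially the same route as the paper: universe $X=V$, $b=n$ bins, neighborhood classes as the type partition, and an IBE built from $|E_G(S)|$ and $\binom{|S|}{2}$; the only difference is cosmetic, in that the paper keeps the maximization form $\varphi_i(S) = 2|E_G(S)| - \binom{|S|}{2} = \score$-contribution directly (using the maximization variant of \LSGBP), whereas you negate it to fit the minimization convention. One small remark: the case distinction on whether $\{u,v\}\in E(G)$ is actually unnecessary, since $A\cap\{u,v\}=\{u\}$ forces $v\notin A$ and $u\notin A'$, so the pair $\{u,v\}$ can never contribute to $|E_G(A)|$ or $|E_G(A')|$.
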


\begin{proof}
The proof relies on the algorithm behind \Cref{Theorem: Framework Algorithms}. We describe how to use this algorithm to solve an instance~$I:=(G=(V,E),k,\chi)$. 

\textit{1. GBP Construction:} We describe how~$I$ corresponds to an \LSGBP-instance $J:=(b,X,(\varphi_i)_{i \in [1,b]},f,k)$. Our universe~$X$ is exactly the vertex set~$V$ of~$G$, and the number~$b$ of bins is exactly~$n$. We next define the IBE. For each~$i \in [1,n]$ and each vertex set~$S \subseteq V$, we define~$\varphi_i(S):=|E(S)| - (\binom{|S|}{2} - |E(S)|)$. Note that each value~$\varphi_i(S)$ can obviously be computed in~$n^{\Oh(1)}$~time from the input graph~$G$. Our operation~$\oplus$ is the sum of integer numbers.

\textit{2. Type Partition:} Our type partition is the collection of neighborhood classes of~$G$. Recall that this collection can be computed in~$\Oh(n+m)$~time.

To show that this collection is in fact a type partition, we show that two vertices from the same neighborhood class of~$G$ are target equivalent according to \Cref{Definition: Types}. To this end, let~$C$ be a neighborhood class of~$G$ and let~$u$ and~$v$ be vertices of~$c$. Let~$S \subseteq V$ be a vertex set with~$S \cap \{u,v\} = \{u\}$. We obviously have~$|S| = |(S \setminus \{u\}) \cup \{v\}|$. Moreover, since~$u$ and~$v$ have the exact same neighbors in~$G$, we have~$|E(S)| = |E((S \setminus \{u\}) \cup \{v\})|$. Then, by the definition of the IBE we have~$\varphi_i(S) = \varphi_i((S \setminus \{u\}) \cup \{v\})$ for all~$i \in [1,n]$.

\textit{3. Solution Correspondence:} Note that the set of solutions~$\chi'$ of \LSCE is exactly the set of all~$n$-partitions of~$V$. Since~$X=V$ and~$b=n$, the solutions of~$J$ have a one-to-one correspondence to the solutions of~$J$. Moreover, note that the definition of the IBE is based on the reformulation of the objective function of~\CE and~$\oplus$ is the sum of integer numbers, we have~$\val(\chi') = \score(\chi')$ for every~$n$-partition~$\chi'$. Therefore,~$\chi'$ is a better solution than~$\chi$ for~\CE if and only if~$\chi'$ has a larger target value than~$\chi$ with respect to the IBE.
\end{proof}

\subsection{Vector Bin Packing}
\label{sec:VBP}

Let~$b,d$ be integers. 
In this section for~$i\in[1,b]$, $B_i$ is a \emph{bin}.
Each bin~$B_i$ is associated with a \emph{weight vector}~$\omega_i\in\mathds{N}^d$.
Let~$\mathcal{S}$ be a set of vectors from~$\mathds{N}_0^d$. A $b$-partition~$\chi$ of~$\mathcal{S}$ is called a \emph{bin assignment}.

Given a subset~$S \subseteq \mathcal{S}$, we say that the \emph{overload of bin~$B_i$ with respect to~$S$} is~$\oh(B_i,S) := \sum_{j=1}^d \max{(0,(\sum_{v \in S} v_j) - w_j)}$. Our target is to find an assignment~$\chi$ minimizing the total overload, which is defined as $\sum_{i\in [1,b]} \oh(B_i,\chi^{-1}(i))$. 

\defproblem{\Bin}
{Integers~$b,d$, a set~$\mathcal{S}$ of vectors from~$\mathds{N}_0^d$, a vector~$\omega_i\in\mathds{N}_0^d$ for each~$i\in[1,b]$.}
{Find a bin assignment~$\chi$ of~$\mathcal{S}$ such that the total overload is minimized, that is, $\sum_{i\in [1,b]} \oh(B_i,\chi^{-1}(i))$ is minimized.}

The input of the corresponding local search problem \BinLoc additionally consists of a bin assignment~$\chi$ and some~$k \in \mathds{N}$, and one aims to find a bin assignment~$\chi'$ with~$\sum_{i\in b} \oh(B_i,\chi'^{-1}(i)) < \sum_{i\in b} \oh(B_i,\chi^{-1}(i)))$ and~$\dflip(\chi,\chi') \leq k$.

We say that two vectors~$\vect_1$ and~$\vect_2$ are \emph{identical} if they agree in all dimensions, that is, if~$\vect_1(j)=\vect_2(j)$ for each dimensions~$j\in[1,d]$ and two vectors are \emph{distinct} otherwise.

To the best of our knowledge, parameterized local search for \Bin has not yet been studied. We now show the following.

\begin{theorem} \label{Theorem: Algo Vector Bin Packing}
Let~$\tau$ denote the maximum number of pairwise distinct items in an instance of \BinLoc. \BinLoc can be solved in~$\tau^k \cdot 2^{\Oh(k)} \cdot (b+d+|\mathcal{S}|)^{\Oh(1)}$~time and in~$k^{\Oh(\tau)} \cdot (b+d+|\mathcal{S}|)^{\Oh(1)}$~time.
\end{theorem}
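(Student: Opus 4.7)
The plan is to mirror the three-step structure used in the proofs of \Cref{Theorem: Algo Max-c-Cut}, \Cref{Theorem: Algo Nash}, and \Cref{Theorem: Algo Cluster Editing} to reformulate a given instance~$I := (b, d, \mathcal{S}, (\omega_i)_{i\in[1,b]}, \chi, k)$ of \BinLoc as an instance~$J := (b', X, (\varphi_i)_{i\in[1,b']}, f, k)$ of \LSGBP and then invoke \Cref{Theorem: Framework Algorithms}.

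\textbf{1. GBP Construction.} I would set the universe~$X := \mathcal{S}$ and keep the number of bins~$b' := b$, and further set~$f := \chi$. For each~$i \in [1,b]$ and each vector set~$S \subseteq \mathcal{S}$ I would define the IBE as
\begin{align*}
\varphi_i(S) := \oh(B_i, S) = \sum_{j=1}^{d} \max\bigl(0, \bigl(\textstyle\sum_{v \in S} v_j\bigr) - \omega_i(j)\bigr),
\end{align*}
and take~$\oplus$ to be integer addition. A single evaluation~$\varphi_i(S)$ clearly takes~$\Phi = (b+d+|\mathcal{S}|)^{\Oh(1)}$~time, so the IBE is efficiently evaluable from the input.

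\textbf{2. Type Partition.} I would let~$(X_1, \dots, X_\tau)$ be the partition of~$\mathcal{S}$ into inclusion-wise maximal classes of pairwise identical vectors; this can clearly be computed in~$(b+d+|\mathcal{S}|)^{\Oh(1)}$~time and, by the definition of~$\tau$, consists of exactly~$\tau$ classes. To verify that it is a valid type partition in the sense of \Cref{Definition: Types}, I would fix identical vectors~$u, v$ of the same class and an arbitrary~$A \subseteq \mathcal{S}$ with~$A \cap \{u, v\} = \{u\}$. Since~$u(j) = v(j)$ for every dimension~$j \in [1, d]$, the coordinate sums~$\sum_{x \in A} x_j$ and~$\sum_{x \in (A \setminus \{u\}) \cup \{v\}} x_j$ coincide, and hence~$\varphi_i(A) = \varphi_i((A \setminus \{u\}) \cup \{v\})$ for every~$i \in [1, b]$, as required.

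\textbf{3. Solution Correspondence and Running Time.} Because~$X = \mathcal{S}$ and~$b' = b$, the bin assignments of~$I$ are in one-to-one correspondence with the~$b$-partitions of~$J$, and the flip distance is preserved. Since~$\oplus$ is integer addition and~$\varphi_i$ is exactly the overload of~$B_i$, we have~$\val(\chi') = \sum_{i \in [1, b]} \oh(B_i, \chi'^{-1}(i))$ for every bin assignment~$\chi'$, so strictly improving the total overload within flip distance~$k$ coincides with strictly decreasing~$\val$ within flip distance~$k$. Applying \Cref{Theorem: Framework Algorithms} to~$J$ with the computed type partition then yields the claimed running times~$\tau^k \cdot 2^{\Oh(k)} \cdot (b + d + |\mathcal{S}|)^{\Oh(1)}$ and~$k^{\Oh(\tau)} \cdot (b + d + |\mathcal{S}|)^{\Oh(1)}$.

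I do not expect any real obstacle: the only nontrivial step is the target-equivalence verification, and it follows immediately from the additive, coordinate-wise nature of the overload function. The proof is genuinely a mechanical translation, in the same spirit as the \NSW proof, with the (additive) overload playing the role of the (multiplicative) Nash product.
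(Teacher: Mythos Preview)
Your proposal is correct and follows essentially the same approach as the paper: the same GBP construction with~$X=\mathcal{S}$, $b'=b$, $\varphi_i=\oh(B_i,\cdot)$, and~$\oplus=+$, the same type partition into classes of identical vectors, and the same solution correspondence via~$\val(\chi)=\sum_i \oh(B_i,\chi^{-1}(i))$. If anything, your target-equivalence verification is cleaner than the paper's.
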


\begin{proof}
The proof relies on the algorithm behind \Cref{Theorem: Framework Algorithms}. We describe how to use this algorithm to solve an instance~$I := (b,d,\mathcal{S}, (\omega_i)_{i \in [1,b]}, \chi,k)$ of \BinLoc.

\textit{1. GBP construction:} Initially, we describe how~$I$ corresponds to an instance $J:=(b,X,(\varphi_i)_{i \in [1,b]},f,k)$ of \LSGBP. Our universe~$X$ is exactly the set~$\mathcal{S}$ and the number~$b$ is the number of bins in~$I$. It remains to define the~IBE. For~$i \in [1,b]$, we define~$\varphi_i(X'):= \oh(B_i,X')$. That is, the~$i$th IBE corresponds to the overload of~$B_i$ with respect to the vectors~$X'$. Our operation~$\oplus$ is the sum of integer numbers.

\textit{2. Type Partition:} Let~$(X_1, \dots, X_{\tau})$ be a partition of~$\mathcal{S}$ where each~$X_j$ is an inclusion-wise maximal set of pairwise identical elements. Note that this partition can clearly be computed in~$|\mathcal{S}|^{\Oh(1)}$~time.

To show that this collection is in fact a type partition, we show that two elements from one set~$X_j$ are target equivalent according to \Cref{Definition: Types}.
Let~$\vect_1$ and~$\vect_2$ be two vectors of the same set~$X_j$.
Let~$S\subseteq \mathcal{S}$ be a set with~$S \cap \{\vect_1, \vect_2\} = \{\vect_1\}$.
Let~$S' \coloneqq (S \setminus \{\vect_1\}) \cup \{\vect_2\}$.
We show that for each~$i\in [1,b]$, $\varphi_i(S) = \varphi_i(S')$.
Without loss of generality, we consider the IBE~$\varphi_i$.
Since~$\vect_1$ and~$\vect_2$ agree in all dimensions, we have~$S\setminus \{\vect_1\} = S' \setminus \{\vect_2\}$.
Consequently, $\varphi_i(S\cup \{\vect_2\}) = \varphi_i(S' \cup \{\vect_1\})$ and thus~$\vect_1$ and~$\vect_2$ are target equivalent. 

\textit{3. Solution Correspondence:} Since~$\oplus$ is the sum of integer numbers and~$\varphi_i(X') = \oh(B_i,X')$ for all~$i \in [1,b]$, we have~$\val(\chi) = \sum_{i\in b} \oh(B_i,\chi^{-1}(i))$ for every bin assignment~$\chi$.
\end{proof}

In \Cref{prop-ls-bin-packing} we complement the above algorithm by showing that \BinLoc is W[1]-hard parameterized by~$k$ and cannot be solved in $f(k)\cdot n^{o(k)}$~time unless the ETH fails.

\subsection{Multi Knapsack}
\label{sec:MK}
Recall that by Definition~\ref{Definition: Target Function} it is allowed that a function~$\varphi_i$ evaluates to~$\infty$ or~$-\infty$. While the previous examples only use IBEs evaluating to finite values, we now use~$-\infty$ to mark infeasible solutions for instances of \textsc{Multi Knapsack}. 

Let~$m \in \mathds{N}$ be a number of knapsacks, with their capacities~$W_1, \dots, W_m \in \mathds{N_0}$ and let~$[1,n]$ be a set of items with values~$v_{(\ell,i)} \in \mathds{N_0}$ and weights~$w_{(\ell,i)} \in \mathds{N_0}$ for all~$\ell \in [1,n]$ and~$i \in [1,m]$. We say that an~$(m+1)$-partition~$\chi$ of~$[1,n]$ \emph{fits into the knapsacks}, if~$\sum_{i\in\chi^{-1}(\ell)} w_{(\ell,i)} \leq W_i$ for every~$i \in [1,m]$. We define the \emph{score of~$\chi$} as~$\score(\chi) := \sum_{i=1}^{m} \sum_{\ell \in \chi^{-1}(i)} v_{(\ell,i)}$. Intuitively, for~$i \in [1,m]$, the set~$\chi^{-1}(i)$ corresponds to the chosen items for the~$i$th knapsack and~$\chi^{-1}(m+1)$ corresponds to the set of non-chosen items.

\defproblem{\MK}{Knapsacks with capacities~$W_1, \dots, W_m \in \mathds{N_0}$, items~$[1,n]$ with values~$v_{(\ell,i)} \in \mathds{N_0}$ and weights~$w_{(\ell,i)} \in \mathds{N_0}$ for all~$\ell \in [1,n]$ and~$i \in [1,m]$.}{Find an~$(m+1)$-partition~$\chi$ of~$[1,n]$ that fits into the knapsacks, such that~$\score(\chi)$ is maximal.}

The input of the corresponding local search problem~\MKLoc additionally consists of an~$m$-partition~$\chi$ that fits into the knapsacks and some~$k \in \mathds{N}$ and one asks for an~$(m+1)$-partition~$\chi'$ with~$\dflip(\chi, \chi') \leq k$ that fits into the knapsacks and satisfies~$\score(\chi')>\score(\chi)$.

We say that two items~$\ell \in [1,n]$ and~$\ell' \in [1,n]$ are \emph{identical} if~$v_{(\ell,i)} = v_{(\ell',i)}$ and~$w_{(\ell,i)} = w_{(\ell',i)}$ for all~$i \in [1,m]$. Two elements are distinct if they are not identical. 

To the best of our knowledge, parameterized local search for \MK has not yet been studied. We now show the following.

\begin{theorem}
 \label{Theorem: Algo Multi-Knapsack}
Let~$\tau$ denote the maximum number of pairwise distinct items in an instance of \MKLoc. \MKLoc can be solved in~$\tau^k \cdot 2^{\Oh(k)} \cdot (n+m)^{\Oh(1)}$~time and in~$k^{\Oh(\tau)} \cdot (n+m)^{\Oh(1)}$~time.
\end{theorem}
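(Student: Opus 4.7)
The plan is to once again reduce the instance to an instance of \LSGBP and invoke \Cref{Theorem: Framework Algorithms}, following the same three-step recipe as in the previous subsections. The only genuinely new ingredient compared to \NSW and \Bin is that \MK has a hard feasibility constraint (the knapsack capacities), which we encode using the special value~$\inf = -\infty$ provided by the framework; the rest of the argument will mirror the proof of \Cref{Theorem: Algo Nash}.

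\textbf{GBP construction.} Given~$I := (W_1, \dots, W_m, (v_{(\ell,i)}), (w_{(\ell,i)}), \chi, k)$, I will set~$X := [1,n]$ and~$b := m+1$, so that bin~$m+1$ plays the role of the ``not chosen'' collection. The IBE will be defined for~$i \in [1,m]$ by
\begin{align*}
\varphi_i(X') := \begin{cases} \sum_{\ell \in X'} v_{(\ell,i)} & \text{if } \sum_{\ell \in X'} w_{(\ell,i)} \le W_i, \\ -\infty & \text{otherwise,} \end{cases}
\end{align*}
and~$\varphi_{m+1}(X') := 0$. I take~$\oplus$ to be integer addition (with the convention~$a \oplus (-\infty) = -\infty$) and treat this as a maximization problem with~$\inf = -\infty$, as sketched in the maximization variant of the framework. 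Each evaluation of~$\varphi_i$ is clearly doable in~$(n+m)^{\Oh(1)}$~time.

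\textbf{Type partition.} I will let~$(X_1,\dots,X_\tau)$ be the partition of~$[1,n]$ into inclusion-maximal classes of pairwise identical items, which can be computed in~$(n+m)^{\Oh(1)}$~time by sorting. For two identical items~$\ell, \ell'$ in the same class and any~$S \subseteq [1,n]$ with~$S \cap \{\ell,\ell'\} = \{\ell\}$, the multiset of weights and values contributing to any bin~$i$ is unchanged when~$\ell$ is replaced by~$\ell'$, so both~$\sum w_{(\cdot,i)}$ and~$\sum v_{(\cdot,i)}$ over~$S$ coincide with those over~$(S \setminus \{\ell\}) \cup \{\ell'\}$; hence~$\varphi_i$ agrees on the two sets (including which side of the capacity threshold they lie on), showing target equivalence as in \Cref{Definition: Types}. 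The parameter~$\tau$ of the resulting \LSGBP instance is exactly the number of pairwise distinct items of~$I$.

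\textbf{Solution correspondence.} The $(m+1)$-partitions of~$[1,n]$ are in bijection with the candidate solutions of \MKLoc. For any such partition~$\chi'$ that fits into the knapsacks, every~$\varphi_i(\chi'^{-1}(i))$ is finite and~$\val(\chi') = \sum_{i=1}^{m} \sum_{\ell \in \chi'^{-1}(i)} v_{(\ell,i)} + 0 = \score(\chi')$; partitions that violate a capacity have~$\val(\chi') = -\infty$, so they can never strictly improve on a feasible~$\chi$. Therefore an improving flip of flip-distance at most~$k$ for \MKLoc exists iff such a flip exists in the \LSGBP instance, and the two running times follow by plugging~$b = m+1$, $|X| = n$, and~$\Phi = (n+m)^{\Oh(1)}$ into \Cref{Theorem: Framework Algorithms}. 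I do not anticipate a real obstacle here; the only subtlety is being explicit that the~$-\infty$ encoding of infeasibility is compatible with the framework's maximization variant, and that the ``unchosen'' bin~$m+1$ is simply a constant-zero bin with no effect on the target value.
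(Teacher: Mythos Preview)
Your proposal is correct and follows essentially the same approach as the paper: the same choice of~$X=[1,n]$, $b=m+1$, the same IBE with the~$-\infty$ encoding of capacity violations and the constant-zero bin for unchosen items, the same type partition into classes of identical items, and the same solution-correspondence argument via~$\val(\chi')=\score(\chi')$ for feasible partitions. The paper additionally states the solution correspondence as an explicit claim (that~$\chi$ fits iff~$\val(\chi)\neq -\infty$, and then~$\val(\chi)=\score(\chi)\ge 0$), but this is exactly what you sketched.
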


\begin{proof}

The proof relies on the algorithm behind \Cref{Theorem: Framework Algorithms}. We describe how to use this algorithm to solve an instance
$$I := ( (W_i)_{i \in [1,m]}, (v_{(\ell,i)})_{\ell \in [1,n],i \in [1,m]},  (w_{(\ell,i)})_{\ell \in [1,n],i \in [1,m]}, \chi, k)$$
of \MKLoc.

\textit{1. GBP Construction:} 
We describe how~$I$ corresponds to an instance $J:=(b,X,(\varphi_i)_{i \in [1,b]},f,k)$ of~\LSGBP.
Our universe~$X$ is exactly the set~$[1,n]$ and the number of bins~$b$ is exactly~$m+1$, which is the number of knapsacks~$m$ plus one set for the non-chosen items. It remains to define the~IBE~$(\varphi_i)_{i \in [1,m+1]}$. Intuitively, for~$i \in [1,m]$, the mapping~$\varphi_i$ assigns the total value received from a set of items assigned to the knapsack with the capacity~$W_i$. However, if the total weight exceeds~$W_i$, the value of~$\varphi_i$ equals~$-\infty$.  Formally, for every~$i \in [1,m]$ we define~$\varphi_i: 2^X \rightarrow \mathds{Z}$~by
\begin{align*}
\varphi_i(X') :=
\begin{cases}
 -\infty & \text{ if } \sum_{\ell \in X'} w_{(\ell,i)} > W_j, \text{ or}\\
 \sum_{\ell \in X'} v_{(\ell,i)} & \text{otherwise.}
 \end{cases}
\end{align*}
Furthermore, we let~$\varphi_{m+1}(X'):=0$ for every~$X' \subseteq X$. Given the instance~$I$, one value~$\varphi_i(X')$ can obviously be computed in polynomial time in~$|I|$, where~$|I|$ denotes the total encoding size of the instance~$I$. Our operation~$\oplus$ is the sum of integer numbers.

\textit{2. Type Partition:} Next, let~$(X_1, \dots, X_{\tau})$ be a partition of~$[1,n]$ where each~$X_j$ is an inclusion-wise maximal set of pairwise similar elements. This partition can be computed form~$I$ in~$\Oh(n+m)$-time. Furthermore, by the definition of the IBE, the elements in each~$X_j$ are pairwise target-equivalent according to \Cref{Definition: Types}. Consequently,~$(X_1, \dots, X_{\tau})$ is a type partition. This completes our description of the instance~$J$ of \LSGBP together with its type partition.

\textit{3. Solution Correspondence:}
Note that the set of solutions~$\chi'$ of \MKLoc is exactly the set of all~$(m+1)$-partitions of~$[1,n]$. Since~$X=[1,n]$ and~$b=m+1$, the solutions of~$J$ have a one-to-one correspondence to the solutions of~$I$. We next establish in which sense the IBE corresponds to the score of MK.

\begin{claim} \label{Claim: Multi-Knapsack IBE}
Let~$\chi$ be an~$(m+1)$-partition of~$[1,n]$. Then,~$\chi$ fits into the knapsacks if and only if~$\val(\chi) \neq -\infty$. Moreover, if~$\chi$ fits into the knapsacks, we have~$ \val(\chi) = \score(\chi) \geq 0$.
\end{claim}

\begin{claimproof}
If~$\chi$ does not fit into the knapsacks, we have~$\varphi_i(\chi^{-1}(i)) = -\infty$ for some~$i \in [1,m]$ and therefore~$\val(\chi) = -\infty$. Otherwise, if~$\chi$ fits into the knapsacks, we have~$\varphi_i(\chi^{-1}(i)) = \sum_{\ell \in \chi^{-1}(i)} v_{(\ell,i)}$ for all~$i \in [1,m]$ and~$\varphi_{m+1}(\chi^{-1}(m+1))=0$. Thus,~$ \val(\chi) = \score(\chi) \geq 0$.
\end{claimproof}

Next let~$\chi$ be the input solution of the instance~$I$. We use \Cref{Claim: Multi-Knapsack IBE} to show that a solution~$\chi'$ is a better solution of~$I$ if and only inf~$\val(\chi') > \val(\chi)$.

$(\Rightarrow)$ If~$\chi'$ is a better solution than~$\chi$, the~$(m+1)$-partition~$\chi'$ also fits into the knapsacks and thus, we have~$\val(\chi') = \score (\chi') > \score(\chi) = \val(\chi)$.

$(\Leftarrow)$ Conversely, let~$\chi'$ be an~$m$-partition with~$\val(\chi') > \val(\chi)$. Then~$\val(\chi') \geq 0$ and thus,~$\chi'$ fits into the knapsacks. Consequently, we have~$\val(\chi') = \score (\chi')$ and thus,~$\chi'$ is a better solution than~$\chi$.
\end{proof}

In \Cref{MKisHard} we complement the above algorithm by showing that \MKLoc is W[1]-hard parameterized by~$k$ and that an algorithm with running time~$n^{o(k)}$ contradicts the ETH.

\subsection{Vertex Deletion Distance to Specific Graph Properties}
\label{sec:VDD}

We next consider a general class of graph problems. 
Given a graph, one aims to delete a minimum number of vertices such that the remaining graph satisfies a specific property (for example: being bipartite). For any fixed graph-property~$\Pi$ that can be verified in polynomial time, we define the following problem.

\defproblem{\PiDel}{A graph~$G=(V,E)$.}{Find a subset~$S \subseteq V$ such that~$G - S$ fulfills property~$\Pi$ and~$|S|$ is minimal.}

Parameterized local search was considered for~\PiDel for general graph properties~$\Pi$~\cite{FFLRSV12} and in particular for the special case of~$\Pi$ being the family of all edgeless graphs, that is, for~\VC~\cite{FFLRSV12,GKO+12,KK17,KM22}. 
For the corresponding local search problems~\LSPiDel and~\LVC, the considered local neighborhood is the~\emph{$k$-swap neighborhood}, that is, the set of all solutions that have a symmetric difference with the current solution of size at most~$k$.
This neighborhood coincides with the~$k$-flip neighborhood if a solution for~\PiDel and~\VC is represented as a~$2$-coloring~$\chi$ of the vertex set~$V$, where~$G[\chi^{-1}(1)]$ fulfills property~$\Pi$ and the goal is to minimize the number of vertices that receive color~$2$ under~$\chi$.

It was shown that \LSPiDel is~\W-hard when parameterized by~$k$ for all hereditary graph properties~$\Pi$ that contain all edgeless graphs but not all cliques or vice versa~\cite{FFLRSV12}.
This holds in particular for the special case of~\LVC~\cite{FFLRSV12,GKO+12,KM22}. 
Furthermore, a closer inspection shows that \LVC cannot be solved in $f(k)\cdot n^{o(k)}$~time unless the ETH fails~\cite[Theorem~3.7]{KM22}.
Since \LVC is a special case of \LSPiDel, we also obtain the same hardness results for the more general \LSPiDel.
From the positive side, \LVC admits an FPT-algorithm for~$k$, if the input graph has a bounded local treewidth~\cite{FFLRSV12}. 
Moreover, algorithms for~\LVC are known that run in $f(k) \cdot \ell^{\Oh(k)} \cdot n^{\Oh(1)}$~time, where~$\ell$ is any of (i)~the maximum degree~$\Delta$~\cite{KK17}, (ii)~the~$h$-index of the input graph~\cite{KM22}, or (iii)~the treewidth of the input graph~\cite{KM22}.
In particular, the performance of the algorithm combining~$k$ and~$\Delta$ was shown to be very successful~\cite{KK17}.

In the following, we show that \LSPiDel can be solved in~$\nd^k \cdot 2^{\Oh(k)} \cdot n^{\Oh(1)}$~time and in~$k^{\Oh(\nd)} \cdot n^{\Oh(1)}$~time.
To this end, we show that our framework is applicable for an even more general problem in which we aim to find a vertex set of minimal size to remove, so that the remaining vertices can be partitioned into~$c$ sets, that each fulfill property~$\Pi$.
Formally, the problem is defined as follows for polynomial-time checkable graph properties~$\Pi$.

\defproblem{\MCPi}{A graph~$G=(V,E)$ and an integer~$c$.}{Find a~$c+1$-partition~$\chi$ of~$V$ such that~$G[\chi^{-1}(i)]$ has property~$\Pi$ for each~$i \in [1,c]$, and~$|\chi^{-1}(c+1)|$ is minimal.}

\MCPi generalizes~\PiDel and also other well-known graph problems.
For example, if~$\Pi$ is the family of complete graphs, \MCPi corresponds to the vertex-deletion distance to obtain a graph having a~$c$-clique-cover.

The input of the corresponding local search problem~\LSMCPi additionally consists of a~$c+1$-partition~$\chi$ of~$V$ such that~$G[\chi^{-1}(i)]$ has property~$\Pi$ for each~$i \in [1,c]$ together with some~$k \in \mathds{N}$ and one asks for a~$c+1$-partition~$\chi'$ with~$\dflip(\chi, \chi') \leq k$ and~$|\chi'^{-1}(c+1)|<|\chi^{-1}(c+1)|$. 

We now show the following.

\begin{theorem} \label{Theorem: Algo Vertex del to Pi}
\LSMCPi can be solved in~$\nd^k \cdot 2^{\Oh(k)} \cdot n^{\Oh(1)}$~time and in~$k^{\Oh(\nd)} \cdot n^{\Oh(1)}$~time.
\end{theorem}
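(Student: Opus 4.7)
The plan is to follow the same three-step recipe used throughout \Cref{Section: Framework Application}: given an instance $I = (G=(V,E), c, \chi, k)$ of \LSMCPi, I would build an instance $J = (b, X, (\varphi_i)_{i \in [1,b]}, f, k)$ of \LSGBP together with a type partition of size~$\nd$, and then invoke \Cref{Theorem: Framework Algorithms}.

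For the GBP construction, I would set $X := V$ and $b := c+1$, where the first $c$ bins play the role of the $\Pi$-parts and bin $c+1$ collects the vertices that are effectively deleted. The IBE must simultaneously enforce that each of the first $c$ bins induces a $\Pi$-graph and count the size of the $(c+1)$-th bin. This is done by setting, for $i \in [1,c]$, $\varphi_i(S) := 0$ if $G[S]$ satisfies $\Pi$ and $\varphi_i(S) := \infty$ otherwise, and $\varphi_{c+1}(S) := |S|$. Taking $\oplus$ to be integer addition and $\inf = \infty$, each IBE is evaluable in $|I|^{\Oh(1)}$~time since $\Pi$ is polynomial-time checkable, and $\val(\chi')$ equals $|\chi'^{-1}(c+1)|$ when $G[\chi'^{-1}(i)]$ has property $\Pi$ for every $i \in [1,c]$, and $\val(\chi') = \infty$ otherwise.

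For the type partition, I would use the collection of neighborhood classes of $G$, which has exactly $\nd$ parts and can be computed in $\Oh(n+m)$~time. To verify target equivalence according to \Cref{Definition: Types}, let $u$ and $v$ belong to the same neighborhood class and let $S \subseteq V$ with $S \cap \{u,v\} = \{u\}$. Setting $S' := (S \setminus \{u\}) \cup \{v\}$, the map that swaps $u$ and $v$ and fixes every other vertex is a graph isomorphism from $G[S]$ to $G[S']$, because $u$ and $v$ share the same neighbors outside of $\{u,v\}$. Assuming, as is standard for graph properties, that $\Pi$ is invariant under isomorphism, we obtain $\varphi_i(S) = \varphi_i(S')$ for every $i \in [1,c]$, while $|S| = |S'|$ takes care of $\varphi_{c+1}$.

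For the solution correspondence, the set of $(c+1)$-partitions of $V$ coincides with the set of candidate solutions of both $I$ and $J$, and the computation of $\val$ above shows that a $(c+1)$-partition $\chi'$ with $d_{\rm flip}(\chi,\chi') \leq k$ is an improving valid solution of $I$ exactly when $\val(\chi') < \val(\chi)$ in $J$. Feeding $J$ and its type partition into \Cref{Theorem: Framework Algorithms} then yields both claimed running times. The only step with real content is the target-equivalence argument, and it reduces cleanly to isomorphism invariance of $\Pi$; the remaining pieces are direct reformulations analogous to those in the previous subsections.
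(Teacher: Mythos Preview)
Your proposal is correct and follows essentially the same route as the paper: set $X=V$, $b=c+1$, encode the $\Pi$-constraint on the first $c$ bins via $0/\infty$ values and count $|\chi^{-1}(c+1)|$ in the last bin, use the neighborhood classes as the type partition, and invoke \Cref{Theorem: Framework Algorithms}. Your target-equivalence argument via the isomorphism $G[S]\cong G[S']$ is slightly more explicit than the paper's one-line justification, but the content is the same.
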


\begin{proof}
The proof relies on the algorithm behind \Cref{Theorem: Framework Algorithms}. We describe how to use this algorithm to solve an instance~$I := (G=(V,E), c, \chi, k)$ of \LSMCPi. 

\textit{1. GBP Construction:} We describe how~$I$ corresponds to an instance~$J:=(b,X,(\varphi_i)_{i \in [1,b]},f,k)$ of \LSGBP. Our universe~$X$ is exactly the vertex set~$V$ of the input graph~$G$, and the number of bins~$b$ is~$c+1$. It remains to define the IBE~$(\varphi_i)_{i \in [1,b]}$. Intuitively, $\phi_i$ for~$i \in [1,c]$ essentially check whether~$G[\chi^{-1}(i)]$ has property~$\Pi$ by assigning~$\infty$ if this is not the case. For~$i \in [1,c]$ we thus define
\begin{align*}
\varphi_i(X') := 
\begin{cases}
\infty &\text{ if }G[X']\text{ satisfies property }\Pi\text{, or}\\
0	&\text{otherwise}.
\end{cases}
\end{align*}
Moreover, we define~$\varphi_{c+1} (X'):=|X'|$ for every~$X' \subseteq X$. Since we only consider properties~$\Pi$ that can be checked in polynomial time, one value~$\varphi(X')$ can be computed in polynomial time in~$|I|$, where~$I$ denotes the total encoding size of the instance~$I$. Our operation~$\oplus$ is the sum of integer numbers.

\textit{2. Type Partition:} Let~$(X_1, \dots, X_{\nd})$ be the collection of neighborhood classes of the input graph~$G$, which can be computed in~$\Oh(n+m)$~time~\cite{HM91}. Two vertices from the same neighborhood class are obviously target equivalent according to \Cref{Definition: Types} as replacing any vertex by a vertex with the same neighborhood results in the same induced subgraph structure.

\textit{3. Solution Correspondence:} Note that the set of solutions~$\chi'$ of \LSMCPi is exactly the set of all~$c+1$-partitions of~$V$. Since~$X=V$ and~$b=c+1$, the solutions of~$J$ have a one-to-one correspondence to the solutions of~$I$. Note that we have the following correspondence between the value from \Cref{Definition: Target Function} and the value~$|\chi^{-1}(c+1)|$.

\begin{claim} \label{Claim: Pi Deletion IBE}
Let~$\chi$ be a~$c+1$-partition of~$V$. Then,~$G[\chi^{-1}(i)]$ satisfies~$\Pi$ for all~$i \in [1,c]$ if and only if~$\val(\chi) \neq \infty$. Moreover, if~$G[\chi^{-1}(i)]$ satisfies~$\Pi$ for all~$i \in [1,c]$, then~$\val(\chi) = |\chi^{-1}(c+1)|$.
\end{claim}

Next, let~$\chi$ be the input solution of the instance~$I$. We use \Cref{Claim: Pi Deletion IBE} to show that a solution~$\chi'$ is a better solution of~$I$ if and only if~$\val(\chi') < \val(\chi)$. 

$(\Rightarrow)$ If~$\chi'$ is better than~$\chi$, all induced subgraphs~$G[\chi'^{-1}(i)]$ satisfy~$\Pi$ for~$i \in [1,c]$ and~$|\chi'^{-1}(c+1)| < |\chi^{-1}(c+1)|$. Thus,~$\val(\chi') < \val(\chi)$.

$(\Leftarrow)$ Conversely, let~$\chi'$ be a~$c+1$-partition with~$\val(\chi') < \val(\chi)$. Then~$\val(\chi') \neq \infty$ and thus~$G[\chi'^{-1}(i)]$ satisfies~$\Pi$ for~$i \in [1,c]$. Consequently, we have~$\val(\chi') = |\chi'^{-1}(c+1)|$ and thus,~$\chi'$ is a better solution than~$\chi$.

\end{proof}

This implies the following for the most-frequently considered parameterized local search problems~\LVC and~\LSPiDel.

\begin{corollary} 
\LSPiDel and~\LVC can be solved in~$\nd^k \cdot 2^{\Oh(k)} \cdot n^{\Oh(1)}$~time and in~$k^{\Oh(\nd)} \cdot n^{\Oh(1)}$~time.
\end{corollary}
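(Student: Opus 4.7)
The plan is to obtain the corollary as a direct consequence of \Cref{Theorem: Algo Vertex del to Pi} by observing that both \LSPiDel and \LVC are special cases of \LSMCPi.

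First I would handle \LSPiDel. Given an instance~$(G=(V,E), S, k)$ of \LSPiDel with a current solution~$S\subseteq V$ such that~$G-S$ satisfies~$\Pi$, I would construct an equivalent instance of \LSMCPi by setting~$c := 1$, defining the $2$-partition~$\chi$ of~$V$ by~$\chi^{-1}(1) := V \setminus S$ and~$\chi^{-1}(2) := S$, and keeping the same parameter~$k$. Then~$G[\chi^{-1}(1)]$ fulfills~$\Pi$ by assumption, and minimizing~$|S|$ is exactly minimizing~$|\chi^{-1}(2)| = |\chi^{-1}(c+1)|$. Moreover, the $k$-swap neighborhood of~$S$ (solutions whose symmetric difference with~$S$ has size at most~$k$) coincides with the $k$-flip neighborhood of~$\chi$, since a vertex swap is exactly a change between the two bins. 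Applying \Cref{Theorem: Algo Vertex del to Pi} to the resulting instance therefore yields an algorithm with running time~$\nd^k \cdot 2^{\Oh(k)} \cdot n^{\Oh(1)}$ and~$k^{\Oh(\nd)} \cdot n^{\Oh(1)}$ for \LSPiDel.

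For \LVC, I would observe that \VC is precisely the special case of \PiDel where~$\Pi$ is the family of edgeless graphs, which is clearly checkable in polynomial time (verify that the induced subgraph contains no edge). Hence \LVC is in turn a special case of \LSPiDel, and the bound proven in the previous paragraph applies verbatim.

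The proof is essentially a reduction chain and carries no real obstacle; the only point that deserves a brief sentence is the equivalence between the $k$-swap neighborhood used in the definitions of \LSPiDel and \LVC and the $k$-flip neighborhood used in \LSMCPi, which the paper already notes in the paragraph preceding the corollary. With this observation in place, the corollary follows immediately from \Cref{Theorem: Algo Vertex del to Pi}.
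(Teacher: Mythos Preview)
Your proposal is correct and matches the paper's approach: the paper states the corollary immediately after \Cref{Theorem: Algo Vertex del to Pi} without a separate proof, simply noting that \LSPiDel is the special case~$c=1$ of \LSMCPi and that \LVC is in turn the special case of \LSPiDel with~$\Pi$ being the edgeless graphs. Your explicit spelling-out of the swap/flip neighborhood equivalence is exactly the observation the paper makes in the text preceding the corollary, so nothing further is needed.
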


\section{Intractability Results for the Considered Local Search Problems}
\label{sec:hardness}
In this section, we present (parameterized) intractability results for the considered local search problems in this work and tight running-time lower bounds with respect to the $\tau^k \cdot 2^{\Oh(k)}\cdot |I|^{\Oh(1)}$-time algorithms derived in~\Cref{Section: Framework Application}.
As already discussed in the previous sections, some of the considered problems in this work are known to be~\W-hard when parameterized by~$k$ and cannot be solved in $f(k) \cdot |I|^{o(k)}$~time, unless the ETH fails.
This is the case for~\LMC~\cite{M24}, \LSCE~\cite{GMNW23}, \LSPiDel~\cite{FFLRSV12,GKO+12,KM22}.
In this section we thus only show that these intractability results also hold for the remaining local search problems considered in this work, that is, for \MKLoc, \NSWLoc, and \BinLoc.

\subsection{Multi Knapsack and Nash Social Welfare}

\label{sec:Hardness NSW}
\newcommand{\dSum}{\textsc{$d$-Sum}\xspace}
\newcommand{\ledSum}{\textsc{Positive $d$-Sum}\xspace}

We start by presenting our (parameterized) intractability results and running-time lower bounds for the local search problems~\MKLoc and~\NSWLoc.
To this end, we reduce in both cases from the following problem.

\defdecproblem{\ledSum}{A set~$S$ of~$n$ positive (binary encoded) integers, a target value~$t$, and~$d\in \mathds{N}$.}{Is there a subset~$S'\subseteq S$ of size exactly~$d$, such that~$\sum_{s\in S'} s = t$.}

In \ledSum, all integers have a binary encoding. To derive the desired intractability results, we first show that~\ledSum provides these intractability results even on restricted instances.
Let~$I:=(S,d,t)$ be an instance of~\ledSum.
We say that~$I$ is~\emph{size-limiting} if for each subset~$S^*\subseteq S$ with~$|S^*|\neq d$, $\sum_{s\in S^*} s \neq t$.

\begin{lemma}\label{ledSum is hard}
Even on size-limiting instances, \ledSum is \W-hard with respect to~$d$ and cannot be solved in $n^{o(d)}$~time, unless the ETH fails.
\end{lemma}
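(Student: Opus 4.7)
The plan is to first invoke the standard W[1]-hardness and ETH lower bound for \ledSum on arbitrary instances, and then close the gap to size-limiting instances by a cardinality-rigidifying shift.

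The starting point is the classical result that \ledSum (often called \textsc{$k$-Sum} with cardinality constraint) is W[1]-hard parameterized by~$d$ and admits no $f(d) \cdot n^{o(d)}$-time algorithm unless the ETH fails. This can be obtained via a standard parameterized reduction from \textsc{Multicolored Clique}, encoding vertex/edge choices as positional integers in a sufficiently large base so that a $k$-clique corresponds to a collection of $\binom{k}{2}+k$ integers with a prescribed sum; combined with the lower bound of Chen et al.\ for \textsc{Clique}, this yields the desired $n^{o(d)}$ lower bound.

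Given an arbitrary instance $(S,d,t)$ of \ledSum, I would then construct a size-limiting equivalent instance $(S',d,t')$ by shifting. Set $T := \sum_{s \in S} s$, pick $M := T + t + 1$, and define
\begin{align*}
S' := \{\, s + M \mid s \in S \,\} \quad \text{and} \quad t' := t + d \cdot M.
\end{align*}
Any subset of $S'$ of size $k$ has sum $kM + \sigma$, where $\sigma \in [0,T]$ is the sum of the underlying original elements. Equating with $t'$ yields $(k-d) M + \sigma = t$, which for $k = d$ is exactly the original condition $\sigma = t$. For $k \neq d$, the left-hand side satisfies $|(k-d)M| \geq M > T + t \geq |t - \sigma|$, so no subset of cardinality different from $d$ can hit the target. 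Hence $(S', d, t')$ is size-limiting and is a yes-instance iff $(S,d,t)$ is.

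This transformation preserves the parameter $d$, keeps $n$ unchanged, and produces integers of bit-length polynomial in $|I|$, so it is a polynomial parameterized reduction. Consequently, both the W[1]-hardness and the ETH-based $n^{o(d)}$ lower bound transfer verbatim to size-limiting instances. The only mildly subtle point is ensuring that the shift forbids cardinalities on both sides of $d$ simultaneously; the choice $M > T + t$ handles the ``too few'' case (since the needed $\sigma = t + (d-k)M > T$ is unreachable) and the ``too many'' case (since the needed $(k-d)M \leq t$ is violated) in a single shot, which is the main thing to get right. No other obstacle is anticipated.
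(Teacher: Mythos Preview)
Your proposal is correct and follows essentially the same approach as the paper: both arguments rigidify the cardinality via a large additive shift and then argue that subsets of the wrong size miss the target for magnitude reasons. The only cosmetic difference is the starting point---the paper reduces from \textsc{$d$-Sum} with integer inputs and target~$0$ (citing known hardness and ETH bounds), so its shift simultaneously produces positivity and size-limitation, whereas you assume hardness of \ledSum on arbitrary instances and shift only for size-limitation; the core idea and analysis are the same.
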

\begin{proof}
We reduce from \dSum.

\defdecproblem{\dSum}{A set~$S$ of~$n$ integers with and~$d\in \mathds{N}$.}{Is there a subset~$S'\subseteq S$ of size exactly~$d$, such that~$\sum_{s\in S'} s = 0$.}

In \dSum, all integers have a binary encoding. It is known that~\dSum is~\W-hard when parameterized by~$d$~\cite{ALW14} and cannot be solved in $|S|^{o(d)}$~time, unless the ETH fails~\cite{PW10}.
Let~$I:=(S:= \{s_1, \dots, s_n\},d)$ be an instance of~\dSum and let~$\alpha := \max_{s\in S} |s|$.
For each~$i\in [1,n]$, we set~$r_i := s_i + (2\cdot d + 1)\cdot \alpha$.
Additionally, we set~$R:= \{r_i\colon 1 \leq i \leq n\}$ and~$t = (2\cdot d + 1) \cdot d \cdot \alpha$ and show that~$I$ is a yes-instance of~\dSum if and only if~$I':=(R,t,d' := d)$ is a yes-instance of~\ledSum.
This follows by the fact that for each index set~$J \subseteq [1,n]$, $\sum_{i\in J} r_i = (\sum_{i\in J} s_i) + |J| \cdot (2\cdot d + 1)\cdot \alpha$.
That is, if~$J$ has size exactly~$d$, then~$\sum_{i\in J} s_i = 0$ if and only if~$\sum_{i\in J} r_i = |J| \cdot (2\cdot d + 1)\cdot \alpha = t$.
Hence, $I$ is a yes-instance of~\dSum if~$I'$ is a yes-instance of~\ledSum.

Next we show that~$I'$ is size-limiting.
To this end, note that each number in~$R$ is between~$2\cdot d \cdot \alpha$ and~$2\cdot d \cdot \alpha + 2\cdot \alpha$.
Hence, each set~$R' \subseteq R$ of size less than~$d$ fulfills~$\sum_{r_i\in R'} r_i \leq |R'| \cdot (2\cdot d \cdot \alpha + 2\cdot \alpha) \leq d \cdot 2\cdot d \cdot \alpha  < d \cdot (2\cdot d + 1) \cdot \alpha = t$.
This implies that no subset of~$R'$ of size less than~$d$ can sum up to~$t$.
Similarly, each set~$R' \subseteq R$ of size larger than~$d$ fulfills~$\sum_{r_i\in R'} r_i \geq |R'| \cdot (2\cdot d \cdot \alpha) \geq (d+1) \cdot 2\cdot d \cdot \alpha = t + d \cdot \alpha > t$, since~$\alpha$ is non-zero in all reasonable instances of~\dSum.
This implies that no subset of~$R'$ of size larger than~$d$ can sum up to~$t$.
In total, only subsets of~$R$ of size exactly~$d$ can sum up to~$t$.

Recall that \dSum is~\W-hard when parameterized by~$d$~\cite{ALW14} and cannot be solved in $|S|^{o(d)}$~time, unless the ETH fails~\cite{PW10}.
Since~$|S| = |R|$ and~$d'=d$, this implies that even on size-limiting instances, \ledSum is~\W-hard when parameterized by~$d$ and cannot be solved in $|R|^{o(d)}$~time, unless the ETH fails.
\end{proof}

Based on this lemma, we are now ready to show our intractability results for~\MKLoc.

\begin{theorem}\label{MKisHard}
\MKLoc (with binary encoded weights and values) is \W-hard with respect to~$k$ and cannot be solved in $n^{o(k)}$~time, unless the ETH fails, where~$n$ denotes the number of items in the input instance.
This holds even if 
\begin{itemize}
\item there is only a single knapsack,
\item the weight of each item is equal to its value, and
\item the initial solution is locally optimal if and only if it is globally optimal.
\end{itemize} 
\end{theorem}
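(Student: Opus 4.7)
The plan is to reduce from the size-limiting variant of~\ledSum. By~\Cref{ledSum is hard}, this variant is~\W-hard with respect to~$d$ and cannot be solved in~$n^{o(d)}$~time unless the ETH fails, and moreover, every integer~$s_i$ in the instance produced by that construction satisfies $s_i \geq 2d\alpha \geq 2$. Given such a size-limiting instance~$(S = \{s_1, \dots, s_n\}, t, d)$, I construct a~\MKLoc instance with a single knapsack of capacity~$W_1 := t$, items~$I_1, \dots, I_n$ each having weight and value equal to~$s_i$, and one extra ``blocker'' item~$X$ with weight and value both equal to~$t - 1$. The initial assignment~$\chi$ places~$X$ inside the knapsack and every~$I_i$ into the outside bin, giving $\score(\chi) = t - 1$; the flip budget is set to~$k := d + 1$. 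By construction the instance has a single knapsack, every item satisfies $w = v$, and the parameter correspondence is~$k = d + 1 = \Theta(d)$.

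The correctness will follow from a case analysis showing that some~$\chi'$ with $\score(\chi') > t - 1$ exists (globally, or within~$k$ flips) if and only if~$(S, t, d)$ is a yes-instance of~\ledSum. The key observation is that, since $s_i \geq 2$ for every item, the inequality~$(t - 1) + s_i > t$ rules out placing any~$I_i$ alongside~$X$, so every improving~$\chi'$ must remove~$X$ (spending one flip) and place some subset~$A \subseteq \{I_1, \dots, I_n\}$ into the knapsack. The capacity constraint $\sum_{i \in A} s_i \leq t$ combined with the improvement requirement $\sum_{i \in A} s_i \geq t$ forces $\sum_{i \in A} s_i = t$; the size-limiting property then pins $|A| = d$, for a total flip distance of~$d + 1 = k$. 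Hence an improving solution exists if and only if~$S$ admits a size-$d$ subset summing to~$t$, and the same argument without the flip budget shows that~$t - 1$ is also the global maximum whenever no such subset exists, which establishes the ``locally optimal iff globally optimal'' property of~$\chi$.

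The most delicate step I expect is ruling out improving~$\chi'$ at every intermediate flip distance~$j \leq d$: this needs both the lower bound~$s_i \geq 2$ (to forbid coexistence of~$X$ with any~$I_i$ in the knapsack) and the size-limiting property (to forbid any subset of size other than~$d$ from summing exactly to~$t$, so that any size-$j$ subset with~$j \neq d$ fitting in capacity~$t$ has sum, and hence value, strictly less than~$t$, i.e., at most~$t - 1$). Once this analysis is complete, a hypothetical $n^{o(k)}$-time algorithm for~\MKLoc applied to our~$(n+1)$-item instance with $k = d + 1$ would solve the size-limiting~\ledSum instance in~$(n+1)^{o(d+1)} = n^{o(d)}$~time, contradicting~\Cref{ledSum is hard}; analogously, an~\FPT algorithm for~$k$ would yield an~\FPT algorithm for~$d$, so~\W-hardness for~$d$ lifts to~\W-hardness for~$k$.
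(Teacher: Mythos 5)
Your proposal is correct and provides a genuinely valid reduction, though it is a ``mirror image'' of the paper's construction rather than identical. The paper also reduces from size-limiting~\ledSum with~$k:=d+1$ and a single knapsack, but it sets the capacity to~$t^*+1$ (where~$t^*$ is the sum of all~$s_i$), gives the extra item~$x^*$ weight/value~$t+1$, and starts with \emph{every} item except~$x^*$ inside the knapsack. Improving thus means flipping~$x^*$ \emph{in} and~$d$ items of total weight~$t$ \emph{out}. Your construction inverts this: capacity~$t$, blocker of weight~$t-1$, initial solution is just the blocker inside the knapsack, and improving means flipping the blocker \emph{out} and~$d$ items of total weight~$t$ \emph{in}. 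Both analyses pin the flip distance at exactly~$d+1$ and use the size-limiting property to tie the subset size to~$d$.

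The one place your version is slightly less self-contained is the reliance on~$s_i \geq 2$: you correctly observe that the construction inside the proof of~\Cref{ledSum is hard} produces~$r_i \geq 2d\alpha \geq 2$, but this is a property of that specific reduction, not part of the Lemma statement. The paper's variant avoids needing any lower bound on the numbers because the blocker~$x^*$ has weight~$t+1$ strictly above capacity when combined with the full remaining set, so it forces removals regardless of item sizes. Your construction instead forces the blocker out because every item pushes the knapsack over capacity, which is where the~$s_i \geq 2$ bound enters. If you wanted to make this fully modular, you could either strengthen the statement of~\Cref{ledSum is hard} to assert~$s_i \geq 2$, or fall back to the paper's ``start with everything in'' orientation. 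As written, though, the argument is sound: the observation about the \ledSum construction is easy to verify, and every step of the case analysis (no item can coexist with the blocker, capacity plus improvement forces sum~$=t$, size-limiting forces~$|A|=d$, hence flip distance exactly~$d+1$) checks out, as does the locally-optimal-iff-globally-optimal property.
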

\begin{proof}
We reduce from~\ledSum on size-limiting instances.
Let~$I:=(S,d,t)$ be a size-limiting of~\ledSum.
Moreover, let~$t^*$ denote the sum of all numbers in~$S$.
We define a set of items~$X$ as follows: for each number~$s_i\in S$, we add an item~$x_i$ to~$X$ with both value and weight equal to~$s_i$.
Additionally, we add an item~$x^*$ to~$X$ of weight and value equal to~$t+1$.
Finally, we define a unique knapsack of capacity~$t^*+1$ which contains exactly the items from~$X\setminus \{x^*\}$ in the initial solution~$\chi$.
Let~$I':=(t^*+1,X,\chi,k)$ be the resulting instance of~\MKLoc, where~$k:=d+1$.

We show that~$I$ is a yes-instance of~\ledSum if and only if~$I'$ is a yes-instance of~\MKLoc.
To this end, note that 
\begin{enumerate}
\item[(i)] the initial solution has value~$t^*$,
\item[(ii)] each solution has value at most~$t^*+1$ (since the value of each item is equal to its weight and the knapsack can only be filled with weight at most~$t^*+1$), and
\item[(iii)] each solution with value larger than~$t^*$ has to contain item~$x^*$.
\end{enumerate}
Since the weight and value of item~$x^*$ are~$t+1$, this implies that each improving flip (of arbitrary size) has to (a)~flip~$x^*$ into the knapsack and (b)~flip items out of the knapsack for which the sum of weights (sum of values) is equal to~$t$.
Since the set of weights of the items in~$X\setminus \{x^*\}$ is exactly the set of numbers~$S$, there is a subset~$S'$ of of numbers in~$S$ that sum up to exactly~$t$ if and only if such an improving flip of size~$|S|+1$ exists.
Since~$I$ is size-limiting, only subsets of~$S$ of size exactly~$d$ may sum up to~$t$.
This implies that the following statements are equivalent:
\begin{itemize}
\item $I$ is a yes-instance of~\ledSum, 
\item $I'$ is a yes-instance of~\MKLoc, and
\item the initial solution for the~\MKLoc-instance~$I'$ is not globally optimal.
\end{itemize}

Due to~\Cref{ledSum is hard}, even on size-limiting instances, \ledSum is \W-hard with respect to~$d$ and cannot be solved in $|S|^{o(d)}$~time, unless the ETH fails.
Since~$|X| = |S|+1$ and~$k = d+1$, this implies that under the stated restrictions \MKLoc is \W-hard with respect to~$k$ and cannot be solved in $n^{o(k)}$~time, unless the ETH fails.
\end{proof}

With a similar reduction from~\ledSum, we can also derive our intractability results for~\NSWLoc.

\begin{theorem}\label{NSWisHard}
\NSWLoc (with binary encoded evaluations) is \W-hard with respect to~$k$ and cannot be solved in $n^{o(k)}$~time, unless the ETH fails.
This holds even if  
\begin{itemize}
\item there are only two agents,
\item each item is evaluated equally by both agents, and
\item the initial solution is locally optimal if and only if it is globally optimal.
\end{itemize} 
\end{theorem}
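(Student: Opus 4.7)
The plan is to reduce from \ledSum on size-limiting instances, mirroring the structure of the proof of \Cref{MKisHard}. Given a size-limiting instance $I = (S, d, t)$ of \ledSum with $t^* = \sum_{s \in S} s$, I will construct an instance $I'$ of \NSWLoc with two agents that value every item identically. The items of $I'$ consist of one item $x_i$ with $u(x_i) = s_i$ for each $s_i \in S$, together with one (or possibly two) carefully chosen special items. The initial allocation places the $x_i$'s on one agent and the special item(s) on the other, and I will set $k = d + O(1)$.

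Since both agents have the same utility function, the Nash score of an allocation is $A \cdot B$ where $A, B$ are the sums of the two parts and $A + B$ is fixed by the item values. The improvement condition $A' B' > A_0 B_0$ thus translates to the new $A'$ being strictly closer to the midpoint $(A_0 + B_0)/2$ than $A_0$ is, which in turn corresponds to net transfers lying in a specific open interval. I will choose the values of the special item(s) so that (i)~merely moving a subset of $\{x_1,\dots,x_n\}$ between agents never produces an improving transfer and (ii)~the only improving flips require moving the special item(s) together with a subset $M \subseteq \{x_1,\dots,x_n\}$ whose total value equals $t$. The size-limiting property then forces $|M| = d$, so every improving flip has size exactly $d + O(1)$. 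In the \ledSum no-instance case, no such subset exists, so the initial allocation is globally optimal, fulfilling the ``locally optimal iff globally optimal'' requirement of the theorem.

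The main obstacle I anticipate is the precise tuning of the special items' values. Unlike the clean capacity cutoff exploited in \Cref{MKisHard}, the NSW improvement region is an open interval rather than a single threshold, so to rule out spurious improvements from flips that move subsets of sizes different from $d$, I will have to carefully leverage the disjoint bands of achievable size-$j$ subset sums that the size-limiting reduction provides, together with auxiliary adjustments of the initial imbalance $|A_0 - B_0|$. Once the correspondence is verified, the \W-hardness parameterized by $d$ and the $n^{o(d)}$ lower bound from \Cref{ledSum is hard} transfer immediately to \NSWLoc under the parameter $k$ via $k = d + O(1)$, completing the proof under all three restrictions stated in the theorem.
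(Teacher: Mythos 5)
Your reduction source and high-level structure match the paper's, and you correctly identify the central obstacle: because both agents share one utility function, the Nash improvement condition amounts to pushing the sum assigned to one agent strictly closer to the midpoint, which defines an open \emph{interval} of improving net transfers rather than a single threshold. But the crucial step is left unresolved — how to choose the special item(s) so that this interval collapses to a single achievable transfer value — and the route you gesture at (ruling out spurious flips by fitting the disjoint size-$j$ sum bands of the size-limiting instance into the improvement window, plus ``auxiliary adjustments'' of the imbalance) is considerably messier than what is actually needed, and it is not clear it closes cleanly.

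The paper's resolution is to use exactly two special items with utilities $t+1$ and $t^*-t+1$, whose sum is $t^*+2$. Putting the $x_i$'s on agent 1 and both special items on agent 2 gives initial utilities $(t^*,\,t^*+2)$, an imbalance of exactly $2$. Since $A+B = 2t^*+2$ is fixed and all utilities are integers, the only allocation with strictly larger product is the balanced one $(t^*+1,\,t^*+1)$ — the ``open interval'' collapses to a singleton, and no band analysis is needed at all. The balanced state also forces the two special items apart (together they exceed $t^*+1$), so an improving flip must move exactly one of them to agent 1, and then the remaining transfer out of agent 1 must sum to exactly $t$ (if $x^*$ moves) or $t^*-t$ (if $y^*$ moves). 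An added observation that $|S|>2d$ (otherwise the instance can be solved directly) kills the $y^*$ branch, and the size-limiting property pins the $x^*$ branch to a size-$d$ subset, giving $k=d+1$. Your hesitation between one and two special items is thus a real fork: with a single special item the construction either admits no improvement at all or admits improvements that do not encode the $d$-Sum structure, so two items with these specific values are what make the reduction tight.
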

\begin{proof}
We reduce from~\ledSum on size-limiting instances.
Let~$I:=(S,d,t)$ be a size-limiting instance of~\ledSum.
Moreover, let~$t^*$ denote the sum of all numbers in~$S$.
Note that the above restriction implies that only subsets of~$S$ of size exactly~$|S|-d$ may sum up to~$t^*-t$.
Hence, we can assume that no subset of size at most~$d$ can sum up to~$t^*-t$, as otherwise~$|S|\leq 2\cdot d$ and we could solve the instance~$I$ of~\ledSum in~$f(d)$~time.

We define a set of items~$X$ as follows: for each number~$s_i\in S$, we add an item~$x_i$ to~$X$ which gives both agents a utility of~$s_i$.
Additionally, we add two further items~$x^*$ and~$y^*$ to~$X$.
The item~$x^*$ gives both agents a utility of~$t+1$ and the item~$y^*$ gives both agents a utility of~$t^*-t+1$.
Finally, we set~$k:= d+1$ and define an initial allocation~$\chi$ that assigns (i)~all items of~$X\setminus \{x^*,y^*\}$ to the first agent and (ii)~both~$x^*$ and~$y^*$ to the second agent.
Let~$I'$ be the resulting instance of~\NSWLoc.
In the following, we show that~$I$ is a yes-instance of~\ledSum if and only if~$I'$ is a yes-instance of~\NSWLoc.

Note that in the initial solution~$\chi$, the first agent is assigned a total utility of~$t^*$ and the second agent is assigned a total utility of~$t^*+2$.
Hence, the Nash score of the initial solution~$\chi$ is~$t^* \cdot (t^*+2)$.
This implies that in every solution~$\chi'$ that improves over~$\chi$, each agent is assigned a total utility of exactly~$t^*+1$.
Consequently, in each such improving solution~$\chi'$, $x^*$ and~$y^*$ are not assigned to the same agent, since their total utility exceeds~$t^* + 1$.
Hence, the flip between~$\chi$ and any such improving solution consist of either 
\begin{itemize}
\item flipping~$x^*$ from the second to the first agent and a set of items of total utility~$t$ from the first agent to the second agent, or
\item flipping~$y^*$ from the second to the first agent and a set of items of total utility~$t^*-t$ from the first agent to the second agent.
\end{itemize}
In other words, the initial solution~$\chi$ is \emph{not} globally optimal if and only if there is a set of items~$X'\subseteq X\setminus\{x^*,y^*\}$ of total utility either~$t$ or~$t^*-t$.
Since the total utility of~$X\setminus\{x^*,y^*\}$ is~$t^*$, this is the case if and only if there is a set of items~$X'\subseteq X\setminus\{x^*,y^*\}$ of total utility~$t$.
Recall that the set of utilities of the items in~$X'\subseteq X\setminus\{x^*,y^*\}$ is exactly the set of numbers~$S$.
Moreover, recall that (a)~$I$ is size-limiting, that is, only subsets of~$S$ of size exactly~$d = k-1$ may sum up to~$t$ and (b)~that no subset of~$S$ of size at most~$d$ can sum up to~$t^*-t$.
Hence, the following statements are equivalent:
\begin{itemize}
\item $I$ is a yes-instance of~\ledSum, 
\item $I'$ is a yes-instance of~\NSWLoc, and
\item the initial solution for the~\NSWLoc-instance~$I'$ is not globally optimal.
\end{itemize}

Due to~\Cref{ledSum is hard}, even on size-limiting instances, \ledSum is \W-hard with respect to~$d$ and cannot be solved in $|S|^{o(d)}$~time, unless the ETH fails.
Since~$|X| = n+2$ and~$k= d+1$, the above reduction thus implies that even under the stated restrictions, \NSWLoc is \W-hard with respect to~$k$ and cannot be solved in $|X|^{o(k)}$~time, unless the ETH fails.
\end{proof}

\subsection{Bin Packing}

Now, we provide matching hardness results for \BinLoc.
More precisely, we provide two hardness results even if each entry in each vector is only~0 or~1.
First, we show that \BinLoc is W[1]-hard with respect to~$k$ and that an algorithm with running time~$f(k) \cdot |I|^{o(k)}$ violates the ETH.
Consequently, our $\tau^{k}\cdot 2^{\Oh(k)}\cdot |I|^{\Oh(1)}$~time algorithm presented in \Cref{Theorem: Algo Vector Bin Packing} is tight if the ETH is true.
Second, we show W[1]-hardness for \BinLoc parameterized by~$k+q$. 
Here, $q:=\max_{v \in \SSS} \lVert v \rVert_1$ is the maximal sum of entries over all vectors.
Recall that~$q$ is usually smaller than~$\tau$ in our real-world application.

\begin{theorem}\label{prop-ls-bin-packing}
\BinLoc is \W-hard with respect to~$k$ and cannot be solved in $f(k) \cdot |I|^{o(k)}$~time, unless the ETH fails.
This holds even if 
\begin{itemize}
\item there are only two bins,
\item each entry in each vector is only~0 or~1,
\item each entry of the vector~$\omega$ for each bin is~1, and
\item the initial solution is locally optimal if and only if it is globally optimal.
\end{itemize}
\end{theorem}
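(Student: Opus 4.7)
The plan is to reduce from \LMC[2], the local-search version of Max Cut, which is \W-hard parameterized by the search radius~$k$ and cannot be solved in $f(k)\cdot n^{o(k)}$~time unless the ETH fails~\cite{M24}. The crucial observation is that two-bin \BinLoc on $0/1$-vectors with the all-ones capacity vector is already a reformulation of Max Cut once each edge corresponds to a coordinate in which exactly two vectors carry a~$1$: placing two such vectors into the same bin creates load~$2$ and hence overload~$1$ on that coordinate, while splitting them yields overload~$0$.

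Given an \LMC[2]-instance~$(G=(V,E),\chi,k)$, I would build the \BinLoc-instance by exactly the bin-packing encoding of Max Cut used in Section~\ref{sec:MC}, namely one vector~$v_u\in\{0,1\}^{|E|}$ per vertex~$u$ of~$G$, with~$v_u$ carrying a~$1$ on coordinate~$e$ iff~$e$ is incident to~$u$; two bins with capacity vector~$\omega=(1,\dots,1)$; and the initial bin assignment taken to be~$\chi$ itself. The coordinate-wise analysis above yields $\sum_{i=1}^{2}\oh(B_i,\chi'^{-1}(i)) = \faults(\chi')$ for every $2$-partition~$\chi'$, while the two notions of flip distance coincide trivially. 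Hence the reduction is a solution-preserving bijection of flip neighborhoods with matching objective values, and the \W-hardness in~$k$ as well as the ETH lower bound of the form $f(k)\cdot|I|^{o(k)}$ both transfer immediately (the input size of the constructed \BinLoc-instance is polynomial in~$|V|+|E|$, so $n^{\Omega(k)}$ on the source yields $|I|^{\Omega(k)}$ on the target). The $0/1$- and all-ones-capacity requirements, together with the ``exactly two bins'' requirement, hold by construction.

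The only remaining restriction to verify is the promise that the initial solution is locally optimal iff globally optimal. Since the reduction preserves improving $k$-flips exactly, it suffices to obtain this promise on the \LMC[2]-side. If the construction of~\cite{M24} does not already satisfy it, I would pad the constructed graph with additional ``trivially correct'' gadgets (for instance, a collection of isolated bichromatic edges that are already optimally coloured in~$\chi$) whose sole purpose is to guarantee that every non-global $2$-colouring differs from an optimal one by a flip of size at most~$k$; such gadgets neither introduce new improving $k$-flips nor disturb the correspondence between improving $k$-flips and the combinatorial witness used by the source reduction. The main obstacle I expect is precisely this last step: carefully re-inspecting (or augmenting) the \LMC[2]-reduction to rule out ``shallow'' local optima, i.e., $2$-colourings that are locally optimal for the $k$-flip neighbourhood but not globally optimal. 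This is standard gadget engineering within the parameterized-local-search toolbox, but it is the only part of the argument that does not follow mechanically from the Max Cut $\leftrightarrow$ two-bin \BinLoc correspondence.
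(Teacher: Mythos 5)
Your reduction is the same as the paper's: one $0/1$-vector per vertex with a coordinate per edge, all-ones capacity, and the source coloring carried over directly; the coordinate-wise overload then equals $\faults$, and flip distances are preserved. The one place you hedge---whether the source \LMC[2] instance can be assumed to satisfy ``locally optimal iff globally optimal''---is not actually a gap that needs new gadgetry: the paper simply invokes the fact (from the \LMC hardness of~\cite{GGKM23}) that the reduction producing W[1]-hard \LMC instances already yields starting colorings that are locally optimal exactly when they are globally optimal, so the restriction transfers for free through your (and the paper's) objective-preserving bijection. Your proposed ``padding with isolated bichromatic edges'' would in any case not obviously rule out shallow local optima, so the correct move is to lean on the existing restricted hardness statement rather than to engineer it; with that substitution your argument matches the paper's proof.
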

\begin{proof}
We present a simple reduction from~\LMC which provides the desired intractability results.
Let~$c \geq 2$ and let~$I:=(G=(V,E),\chi: V \to [1,c],k)$ be an instance of~\LMC where~$\chi$ is locally optimal if and only if~$\chi$ is globally optimal.
Even under these restrictions, \LMC is \W-hard with respect to~$k$ and cannot be solved in $f(k) \cdot n^{o(k)}$~time, unless the ETH fails~\cite{GGKM23}.
Let~$m:= |E|$ and let~$E := \{e_1, \dots, e_m\}$.
We obtain an equivalent instance~$I':=(c,m,\SSS,(\omega_i)_{i\in [1,c]},\psi,k')$ of~\LVBP as follows:
For each bin~$j\in [1,c]$, we define the vector~$\omega_j$ as the vector of length~$m$ that has a 1 in each dimension.
For each vertex~$v\in V$, we create a vector~$x_v$ that has a 1 at dimension~$i\in [1,m]$ if and only if vertex~$v$ is incident with edge~$e_i$.
Let~$\SSS$ be the set of these vectors and let~$\psi$ be the coloring obtained from~$\chi$ by assigning for each vertex~$v\in V$, color~$\chi(v)$ to vector~$x_v$, that is, $\chi(v) = \psi(x_v)$.
Finally, we set~$k':=k$.

For the sake of simplicity, we may identify each vector~$x_v$ by its corresponding vertex~$v$.
Similarly, we may consider~$c$-partitions of~$V$ instead of~$c$-partitions of~$\SSS$ as solutions for~$I'$ based on the obvious bijection between vertices and vectors.
Next, we show that~$I$ is a yes-instance of~\LMC if and only if~$I'$ is a yes-instance of~\BinLoc.
To this end, we first analyze the objective functions of both instances with respect to corresponding solutions.

Let~$\chi'$ be a~$c$-coloring of~$V$ and let~$\psi'$ be the corresponding~$c$-coloring of~$\SSS$.
Let~$e_i:=\{u,v\}$ be an edge of~$G$ having endpoints of distinct color under~$\chi'$. 
Then, no bin produces an overload in dimension~$i$ because only the vectors~$x_u$ and~$x_v$ have a 1 at dimension~$i$ and both vectors receive distinct colors under~$\psi'$.
Similarly, let~$e_i:=\{u,v\}$ be an edge of~$G$ having endpoints of the same color~$\alpha\in [1,c]$ under~$\chi'$.
Then, no bin except~$\alpha$ produces an overload at dimension~$i$ and bin~$\alpha$ produces an overhead of exactly~$1$ in dimension~$i$, because only the vectors~$x_u$ and~$x_v$ have a 1 at dimension~$i$ and both vectors receive color~$\alpha$ under~$\psi'$.
Consequently, the total overload of~$\psi'$ over all dimensions and all bins equals~$|E|$ minus the number of properly colored edges of~$G$ under~$\chi'$.
In other words, $\chi'$ is a better solution for~$I$ than~$\chi$ if and only if~$\psi'$ is a better solution for~$I'$ than~$\psi$.
Since~$\dflip(\chi,\chi') = \dflip(\psi,\psi')$, this implies that~$I$ is a yes-instance of~\LMC if and only if~$I'$ is a yes-instance of~\BinLoc.
Furthermore, since~$\chi$ is a locally optimal solution if and only if~$\chi$ is a globally optimal solution, $\psi$ is a locally optimal solution if and only if~$\psi$ is a globally optimal solution.

Recall that \LMC is \W-hard with respect to~$k$ and cannot be solved in $f(k) \cdot n^{o(k)}$~time, unless the ETH fails~\cite{GGKM23}.
Since~$|\SSS| = n$, each vector has~$m \leq n^2$ dimensions, and~$k' = k$, this implies that \BinLoc is \W-hard with respect to~$k'$ and cannot be solved in $f(k') \cdot |I'|^{o(k')}$~time, unless the ETH fails.
\end{proof}

Now, we present our second hardness result for the parameter~$k$ plus~$q$, the maximal sum of entries over all vectors.

\begin{theorem}
\label{thm-w-hardness-ls-bin-packing}
\BinLoc is \W-hard with respect to~$k+q$, even if 
\begin{itemize}
\item each entry in each vector is only~0 or~1, and
\item each entry of the vector~$\omega$ for each bin is~1.
\end{itemize}
\end{theorem}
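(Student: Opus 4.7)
My plan is to reduce from \textsc{Multicolored Clique}, which is known to be \W-hard parameterized by the clique size~$k'$. Given an instance~$(G = (V_1 \cup \cdots \cup V_{k'}, E))$ with~$|V_i|=n$, I would construct an equivalent \BinLoc instance whose flip budget~$k$ and whose maximum vector support~$q$ are both bounded by a function of~$k'$, so that \W-hardness for~$k'$ transfers to \W-hardness for~$k+q$. Crucially, unlike in \Cref{prop-ls-bin-packing}, the statement permits many bins, which is the extra structural freedom that makes small~$q$ possible.

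The construction would use one ``pair bin''~$B_{\{i,j\}}$ for every unordered pair~$i<j$ of color classes, together with a single ``pool bin''~$B_0$. For each vertex~$v \in V_i$ I would introduce~$k'-1$ ``vertex tokens'' (one per other color class), each with Boolean support of size~$O(1)$ in coordinates reserved for the pair~$\{i,j\}$ it belongs to; for each edge~$e \in E$ between classes~$i$ and~$j$ I would introduce a single ``edge token'' with exactly two~$1$-entries, one per endpoint in the pair-$\{i,j\}$ coordinate block. All vector entries and all entries of every capacity vector~$\omega$ would be in~$\{0,1\}$, as required. Initially, every token sits in~$B_0$, while each pair bin~$B_{\{i,j\}}$ contains a small pre-installed ``obstacle'' gadget producing a fixed unit of overload that is designed to be cancellable only by simultaneously moving into~$B_{\{i,j\}}$ a matched triple consisting of one vertex token from~$V_i$, one from~$V_j$, and the edge token for an edge of~$G$ having precisely these two endpoints.

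The main obstacle of the proof, and the part I would spend most effort on, is enforcing \emph{consistency across the $\binom{k'}{2}$ pair bins}: for each color class~$i$, the~$k'-1$ vertex tokens activated in the~$k'-1$ pair bins involving~$i$ must all correspond to the same underlying vertex of~$V_i$. I would handle this via ``binder'' gadgets placed in~$B_0$, one per vertex of~$G$: each binder contributes a private unit of overload to~$B_0$ that is cancellable only when \emph{all} $k'-1$ tokens of the corresponding vertex are flipped out of~$B_0$ in a single coordinated move. Any inconsistent choice then leaves some binder overload un-paid, so inconsistent flips cannot strictly improve the objective.

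Putting this together, an improving flip of size $k = \Theta(k'^2)$ would consist, for each pair~$\{i,j\}$, of moving the three tokens making up a valid triple into~$B_{\{i,j\}}$ and, for each class~$i$, of moving out the~$k'-1$ binder-coupled tokens of one vertex; this is possible if and only if the selected vertices form a~$k'$-multicolored clique in~$G$. Each token has Boolean support of size at most~$O(k')$, each~$\omega_i$ is Boolean, the number of bins is polynomial in~$k'$ and the input, and the flip budget is~$\Theta(k'^2)$, so~$k+q = f(k')$ for some computable~$f$. This reduction would thus transfer the \W-hardness of \textsc{Multicolored Clique} to \BinLoc parameterized by~$k+q$ under the two stated structural restrictions.
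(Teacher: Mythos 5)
Your choice of reduction source (\textsc{Multicolored Clique}) and your high-level diagnosis are right: since the theorem allows arbitrarily many bins, one can afford polynomially many bins while keeping both the flip budget~$k$ and the support size~$q$ bounded by a function of the clique size, and the crux is indeed the cross-bin consistency of the selected vertices. This matches the paper's reduction. However, two of your core gadgets cannot be implemented in the overload model, and they are load-bearing.

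First, the obstacle gadgets in the pair bins~$B_{\{i,j\}}$ are described as ``producing a fixed unit of overload that is cancellable only by moving into~$B_{\{i,j\}}$ a matched triple.'' But the overload of a bin,~$\oh(B,S)=\sum_j\max(0,(\sum_{v\in S}v_j)-w_j)$, is monotone nondecreasing under adding vectors to~$S$: moving vectors \emph{into} a bin can never decrease that bin's contribution to the objective. So moving a triple into a pair bin cannot cancel any overload sitting there. The paper obtains profit in the opposite direction: edge vectors start inside edge bins, which makes those edge bins overfull, and the improvement comes from moving edge vectors \emph{out}; vertex vectors incur a (slightly smaller) cost when moved out of the target bin into their vertex bins, and the $z/2$ versus $z/2-1$ imbalance in the dummy filling is what makes the arithmetic improve by a positive amount exactly when a clique is selected.

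Second, the ``binder'' gadget asks for an all-or-nothing threshold: overload cancelled only after all~$k'-1$ tokens of a vertex leave~$B_0$. This cannot be realized, because~$\max(0,x-w)$ is (piecewise) linear and additive: if the binder and the~$k'-1$ tokens share coordinates in~$B_0$, then removing \emph{any one} token already reduces the sum in those coordinates and hence the overload, so the effect is graded, not thresholded. The paper enforces cross-pair consistency without any threshold gadget: the edge vector~$\vect(u_i,w_j)$ shares the important dimensions with both vertex vectors~$\vect(u_i)$ and~$\vect(w_j)$ inside~$B^*$, so moving it into~$B^*$ is profitable only if both endpoint vectors have already been removed from~$B^*$ (Claim in Step~6), and the budget~$k=\ell+\binom{\ell}{2}$ together with the balance of per-move costs~$(\ell-1)(z/2-1)$ against per-move profits~$z$ (Steps~7.1--7.3) forces exactly~$\ell$ vertex moves and~$\binom{\ell}{2}$ consistent edge moves. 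Any repaired version of your proof would have to replace both gadgets with mechanisms that (i) realize profit by removing vectors from overfull bins rather than by adding vectors, and (ii) enforce consistency via shared-coordinate conflicts and budget arithmetic rather than via threshold coordinates.
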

\begin{proof}
We reduce from the W[1]-hard \textsc{Multicolored Clique} problem~\cite{C+15}.
In \textsc{Multicolored Clique} the input is a graph~$G$ where~$V(G)$ is partitioned into $\ell$~sets~$(V_1,\ldots, V_\ell)$ and the task is to find a multicolored clique, that is, a clique containing exactly one vertex from each partite set.

\textbf{Intuition:}
The \BinLoc instance has three different types of bins:
First, for each vertex~$v_i\in V(G)$ we have a \emph{vertex bin~$B_{v_i}$}, second, for each edge~$\{u_i,w_j\}\in E(G)$ where~$u_i\in V_i$ and~$w_j\in V_j$ we have an \emph{edge bin~$B_{u_i,w_j}$}, and third we have a \emph{target bin~$B^*$}.
Furthermore, for each vertex~$v_i$, we have a \emph{vertex vector~$\vect(v_i)$}, and for each edge~$\{u_i,w_j\}$ we have an \emph{edge vector~$\vect(u_i,w_j)$}.
In the initial assignment~$f$ all vertex vectors are in the target bin~$B^*$ and each edge vector~$\vect(u_i,w_j)$ is in its corresponding edge bin~$B_{u_i,w_j}$.

We set~$k\coloneqq \ell + \binom{\ell}{2}$.
Let~$f'$ be another assignment of the vectors to the bins such that~$d_{\rm flip}(f,f') \le k$ which is improving upon~$f$.
$D_{\rm flip}(f,f')$ contains all $\ell$~vertex vectors corresponding to vertices of a multicolored clique~$C$ and all $\binom{\ell}{2}$~edge vectors corresponding to the edges of~$C$.
Also, each edge vector in~$D_{\rm flip}(f,f')$ is assigned to the target bin~$B^*$ by~$f'$ and each vertex vector~$\vect(v_i)$ in~$D_{\rm flip}(f,f')$ is assigned to its corresponding vertex bin~$B_{v_i}$ by~$f'$.

We achieve this as follows:
For each vertex~$v_i$ and each other color class~$j\in[1,\ell]\setminus\{i\}$ we create a large number~$z$ of \emph{important dimensions}.
Each vertex vector~$\vect(v_i)$ has value~1 in important dimensions with respect to~$(v_i,j)$ for all such~$j$.
Furthermore, each edge vector~$\vect(u_i,w_j)$ has value~1 in all important dimensions with respect to~$(u_i,j)$ and~$(w_j,i)$.
Hence, after the vertex vectors corresponding to a multicolored clique~$C$ are moved out of the target bin~$B^*$ the edge vectors corresponding to~$C$ can be moved into those empty positions in the target bin~$B^*$.
In the end, the overload decreases since the move of each vertex vector~$\vect(v_i)$ into its corresponding vertex bin~$B_{v_i}$ increases the overload by~$z/2-1$ while the move of each edge vector into the target bin~$B^*$ decreases the overload by~$z$.
By setting~$z$ accordingly, we can then show that~$C$ is a multicolored clique.

Finally, we use many \emph{dummy dimensions} and \emph{dummy and blocking vectors} to allow easy arguments on which vectors are contained in an improving $k$-flip and how they are reassigned.

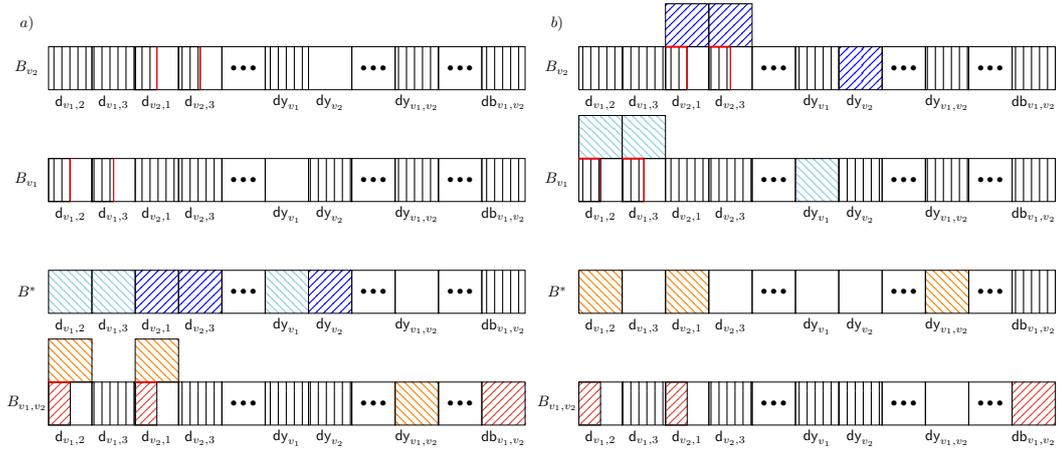
\begin{figure}[t]

\begin{minipage}{0.49\textwidth}
	\centering
	\scalebox{0.57}{
	\begin{tikzpicture}
\def\mydistx{1.0}
\def\mydisty{2.6}

\node[](d) at (-0.5*\mydistx, 0.5 + 2.4*\mydisty)  {$a)$};

\node[](d) at (-0.5*\mydistx, 0.5 + 0*\mydisty)  {$B^*$};
	
\draw[pattern=north west lines, pattern color=skyblue] (0*\mydistx, 0 + 0*\mydisty) rectangle (1*\mydistx, 1 + 0*\mydisty) {};
\node[](d) at (0.5*\mydistx, -0.3 + 0*\mydisty)  {$\di_{v_1,2}$};
\draw[pattern=north west lines, pattern color=skyblue] (1*\mydistx, 0 + 0*\mydisty) rectangle (2*\mydistx, 1 + 0*\mydisty) {};
\node[](d) at (1.5*\mydistx, -0.3 + 0*\mydisty)  {$\di_{v_1,3}$};

\draw[pattern=north east lines, pattern color=blue] (2*\mydistx, 0 + 0*\mydisty) rectangle (3*\mydistx, 1 + 0*\mydisty) {};
\node[](d) at (2.5*\mydistx, -0.3 + 0*\mydisty)  {$\di_{v_2,1}$};
\draw[pattern=north east lines, pattern color=blue] (3*\mydistx, 0 + 0*\mydisty) rectangle (4*\mydistx, 1 + 0*\mydisty) {};
\node[](d) at (3.5*\mydistx, -0.3 + 0*\mydisty)  {$\di_{v_2,3}$};

\draw[] (4*\mydistx, 0 + 0*\mydisty) rectangle (5*\mydistx, 1 + 0*\mydisty) {};
\node[](d) at (4.5*\mydistx, 0.5 + 0*\mydisty)  {\footnotesize $\bullet\bullet\bullet$};

\draw[pattern=north west lines, pattern color=skyblue] (5*\mydistx, 0 + 0*\mydisty) rectangle (6*\mydistx, 1 + 0*\mydisty) {};
\node[](d) at (5.5*\mydistx, -0.3 + 0*\mydisty)  {$\dy_{v_1}$};

\draw[pattern=north east lines, pattern color=blue] (6*\mydistx, 0 + 0*\mydisty) rectangle (7*\mydistx, 1 + 0*\mydisty) {};
\node[](d) at (6.5*\mydistx, -0.3 + 0*\mydisty)  {$\dy_{v_2}$};

\draw[] (7*\mydistx, 0 + 0*\mydisty) rectangle (8*\mydistx, 1 + 0*\mydisty) {};
\node[](d) at (7.5*\mydistx, 0.5 + 0*\mydisty)  {\footnotesize $\bullet\bullet\bullet$};

\draw[] (8*\mydistx, 0 + 0*\mydisty) rectangle (9*\mydistx, 1 + 0*\mydisty) {};
\node[](d) at (8.5*\mydistx, -0.3 + 0*\mydisty)  {$\dy_{v_1,v_2}$};

\draw[] (9*\mydistx, 0 + 0*\mydisty) rectangle (10*\mydistx, 1 + 0*\mydisty) {};
\node[](d) at (9.5*\mydistx, 0.5 + 0*\mydisty)  {\footnotesize $\bullet\bullet\bullet$};     

\draw[pattern=vertical lines] (10*\mydistx, 0 + 0*\mydisty) rectangle (11*\mydistx, 1 + 0*\mydisty) {};
\node[](d) at (10.5*\mydistx, -0.3 + 0*\mydisty)  {$\db_{v_1,v_2}$};

\node[](d) at (-0.5*\mydistx, 0.5 + 1*\mydisty)  {$B_{v_1}$};
	
\draw[] (0*\mydistx, 0 + 1*\mydisty) rectangle (1*\mydistx, 1 + 1*\mydisty) {};
\draw[pattern=vertical lines] (0*\mydistx, 0 + 1*\mydisty) rectangle (0.5*\mydistx, 1 + 1*\mydisty) {};
\draw[red, line width=3pt] (0.5*\mydistx, 0 + 1*\mydisty) rectangle (0.5*\mydistx, 1 + 1*\mydisty) {};
\node[](d) at (0.5*\mydistx, -0.3 + 1*\mydisty)  {$\di_{v_1,2}$};
\draw[] (1*\mydistx, 0 + 1*\mydisty) rectangle (2*\mydistx, 1 + 1*\mydisty) {};
\draw[pattern=vertical lines] (1*\mydistx, 0 + 1*\mydisty) rectangle (1.5*\mydistx, 1 + 1*\mydisty) {};
\draw[red, line width=3pt] (1.5*\mydistx, 0 + 1*\mydisty) rectangle (1.5*\mydistx, 1 + 1*\mydisty) {};
\node[](d) at (1.5*\mydistx, -0.3 + 1*\mydisty)  {$\di_{v_1,3}$};

\draw[pattern=vertical lines] (2*\mydistx, 0 + 1*\mydisty) rectangle (3*\mydistx, 1 + 1*\mydisty) {};
\node[](d) at (2.5*\mydistx, -0.3 + 1*\mydisty)  {$\di_{v_2,1}$};
\draw[pattern=vertical lines] (3*\mydistx, 0 + 1*\mydisty) rectangle (4*\mydistx, 1 + 1*\mydisty) {};
\node[](d) at (3.5*\mydistx, -0.3 + 1*\mydisty)  {$\di_{v_2,3}$};

\draw[] (4*\mydistx, 0 + 1*\mydisty) rectangle (5*\mydistx, 1 + 1*\mydisty) {};
\node[](d) at (4.5*\mydistx, 0.5 + 1*\mydisty)  {\footnotesize $\bullet\bullet\bullet$};

\draw[] (5*\mydistx, 0 + 1*\mydisty) rectangle (6*\mydistx, 1 + 1*\mydisty) {};
\node[](d) at (5.5*\mydistx, -0.3 + 1*\mydisty)  {$\dy_{v_1}$};

\draw[pattern=vertical lines] (6*\mydistx, 0 + 1*\mydisty) rectangle (7*\mydistx, 1 + 1*\mydisty) {};
\node[](d) at (6.5*\mydistx, -0.3 + 1*\mydisty)  {$\dy_{v_2}$};

\draw[] (7*\mydistx, 0 + 1*\mydisty) rectangle (8*\mydistx, 1 + 1*\mydisty) {};
\node[](d) at (7.5*\mydistx, 0.5 + 1*\mydisty)  {\footnotesize $\bullet\bullet\bullet$};

\draw[pattern=vertical lines] (8*\mydistx, 0 + 1*\mydisty) rectangle (9*\mydistx, 1 + 1*\mydisty) {};
\node[](d) at (8.5*\mydistx, -0.3 + 1*\mydisty)  {$\dy_{v_1,v_2}$};

\draw[] (9*\mydistx, 0 + 1*\mydisty) rectangle (10*\mydistx, 1 + 1*\mydisty) {};
\node[](d) at (9.5*\mydistx, 0.5 + 1*\mydisty)  {\footnotesize $\bullet\bullet\bullet$};  

\draw[pattern=vertical lines] (10*\mydistx, 0 + 1*\mydisty) rectangle (11*\mydistx, 1 + 1*\mydisty) {};
\node[](d) at (10.5*\mydistx, -0.3 + 1*\mydisty)  {$\db_{v_1,v_2}$};

\draw[white] (0*\mydistx, 1 + 2*\mydisty) rectangle (1*\mydistx, 2 + 2*\mydisty) {};

\node[](d) at (-0.5*\mydistx, 0.5 + 2*\mydisty)  {$B_{v_2}$};
	
\draw[pattern=vertical lines] (0*\mydistx, 0 + 2*\mydisty) rectangle (1*\mydistx, 1 + 2*\mydisty) {};
\node[](d) at (0.5*\mydistx, -0.3 + 2*\mydisty)  {$\di_{v_1,2}$};
\draw[pattern=vertical lines] (1*\mydistx, 0 + 2*\mydisty) rectangle (2*\mydistx, 1 + 2*\mydisty) {};
\node[](d) at (1.5*\mydistx, -0.3 + 2*\mydisty)  {$\di_{v_1,3}$};

\draw[] (2*\mydistx, 0 + 2*\mydisty) rectangle (3*\mydistx, 1 + 2*\mydisty) {};
\draw[pattern=vertical lines] (2*\mydistx, 0 + 2*\mydisty) rectangle (2.5*\mydistx, 1 + 2*\mydisty) {};
\draw[red, line width=3pt] (2.5*\mydistx, 0 + 2*\mydisty) rectangle (2.5*\mydistx, 1 + 2*\mydisty) {};
\node[](d) at (2.5*\mydistx, -0.3 + 2*\mydisty)  {$\di_{v_2,1}$};
\draw[] (3*\mydistx, 0 + 2*\mydisty) rectangle (4*\mydistx, 1 + 2*\mydisty) {};
\draw[pattern=vertical lines] (3*\mydistx, 0 + 2*\mydisty) rectangle (3.5*\mydistx, 1 + 2*\mydisty) {};
\draw[red, line width=3pt] (3.5*\mydistx, 0 + 2*\mydisty) rectangle (3.5*\mydistx, 1 + 2*\mydisty) {};
\node[](d) at (3.5*\mydistx, -0.3 + 2*\mydisty)  {$\di_{v_2,3}$};

\draw[] (4*\mydistx, 0 + 2*\mydisty) rectangle (5*\mydistx, 1 + 2*\mydisty) {};
\node[](d) at (4.5*\mydistx, 0.5 + 2*\mydisty)  {\footnotesize $\bullet\bullet\bullet$};

\draw[pattern=vertical lines] (5*\mydistx, 0 + 2*\mydisty) rectangle (6*\mydistx, 1 + 2*\mydisty) {};
\node[](d) at (5.5*\mydistx, -0.3 + 2*\mydisty)  {$\dy_{v_1}$};

\draw[] (6*\mydistx, 0 + 2*\mydisty) rectangle (7*\mydistx, 1 + 2*\mydisty) {};
\node[](d) at (6.5*\mydistx, -0.3 + 2*\mydisty)  {$\dy_{v_2}$};

\draw[] (7*\mydistx, 0 + 2*\mydisty) rectangle (8*\mydistx, 1 + 2*\mydisty) {};
\node[](d) at (7.5*\mydistx, 0.5 + 2*\mydisty)  {\footnotesize $\bullet\bullet\bullet$};

\draw[pattern=vertical lines] (8*\mydistx, 0 + 2*\mydisty) rectangle (9*\mydistx, 1 + 2*\mydisty) {};
\node[](d) at (8.5*\mydistx, -0.3 + 2*\mydisty)  {$\dy_{v_1,v_2}$};

\draw[] (9*\mydistx, 0 + 2*\mydisty) rectangle (10*\mydistx, 1 + 2*\mydisty) {};
\node[](d) at (9.5*\mydistx, 0.5 + 2*\mydisty)  {\footnotesize $\bullet\bullet\bullet$}; 

\draw[pattern=vertical lines] (10*\mydistx, 0 + 2*\mydisty) rectangle (11*\mydistx, 1 + 2*\mydisty) {};
\node[](d) at (10.5*\mydistx, -0.3 + 2*\mydisty)  {$\db_{v_1,v_2}$};

\node[](d) at (-0.5*\mydistx, 0.5 + -1*\mydisty)  {$B_{v_1,v_2}$};
	
\draw[] (0*\mydistx, 0 + -1*\mydisty) rectangle (1*\mydistx, 1 + -1*\mydisty) {};
\draw[pattern=north east lines, pattern color=vermilion] (0*\mydistx, 0 + -1*\mydisty) rectangle (0.5*\mydistx, 1 + -1*\mydisty) {};
\node[](d) at (0.5*\mydistx, -0.3 + -1*\mydisty)  {$\di_{v_1,2}$};
\draw[pattern=north west lines, pattern color=orange] (0*\mydistx, 1 + -1*\mydisty) rectangle (1*\mydistx, 2 + -1*\mydisty) {};
\draw[red, line width=3pt] (0*\mydistx, 1 + -1*\mydisty) rectangle (0.5*\mydistx, 1 + -1*\mydisty) {};

\draw[pattern=vertical lines] (1*\mydistx, 0 + -1*\mydisty) rectangle (2*\mydistx, 1 + -1*\mydisty) {};
\node[](d) at (1.5*\mydistx, -0.3 + -1*\mydisty)  {$\di_{v_1,3}$};

\draw[] (2*\mydistx, 0 + -1*\mydisty) rectangle (3*\mydistx, 1 + -1*\mydisty) {};
\node[](d) at (2.5*\mydistx, -0.3 + -1*\mydisty)  {$\di_{v_2,1}$};
\draw[pattern=north east lines, pattern color=vermilion] (2*\mydistx, 0 + -1*\mydisty) rectangle (2.5*\mydistx, 1 + -1*\mydisty) {};
\draw[pattern=north west lines, pattern color=orange] (2*\mydistx, 1 + -1*\mydisty) rectangle (3*\mydistx, 2 + -1*\mydisty) {};
\draw[red, line width=3pt] (2*\mydistx, 1 + -1*\mydisty) rectangle (2.5*\mydistx, 1 + -1*\mydisty) {};

\draw[pattern=vertical lines] (3*\mydistx, 0 + -1*\mydisty) rectangle (4*\mydistx, 1 + -1*\mydisty) {};
\node[](d) at (3.5*\mydistx, -0.3 + -1*\mydisty)  {$\di_{v_2,3}$};

\draw[] (4*\mydistx, 0 + -1*\mydisty) rectangle (5*\mydistx, 1 + -1*\mydisty) {};
\node[](d) at (4.5*\mydistx, 0.5 + -1*\mydisty)  {\footnotesize $\bullet\bullet\bullet$};

\draw[pattern=vertical lines] (5*\mydistx, 0 + -1*\mydisty) rectangle (6*\mydistx, 1 + -1*\mydisty) {};
\node[](d) at (5.5*\mydistx, -0.3 + -1*\mydisty)  {$\dy_{v_1}$};

\draw[pattern=vertical lines] (6*\mydistx, 0 + -1*\mydisty) rectangle (7*\mydistx, 1 + -1*\mydisty) {};
\node[](d) at (6.5*\mydistx, -0.3 + -1*\mydisty)  {$\dy_{v_2}$};

\draw[] (7*\mydistx, 0 + -1*\mydisty) rectangle (8*\mydistx, 1 + -1*\mydisty) {};
\node[](d) at (7.5*\mydistx, 0.5 + -1*\mydisty)  {\footnotesize $\bullet\bullet\bullet$};

\draw[pattern=north west lines, pattern color=orange] (8*\mydistx, 0 + -1*\mydisty) rectangle (9*\mydistx, 1 + -1*\mydisty) {};
\node[](d) at (8.5*\mydistx, -0.3 + -1*\mydisty)  {$\dy_{v_1,v_2}$};

\draw[] (9*\mydistx, 0 + -1*\mydisty) rectangle (10*\mydistx, 1 + -1*\mydisty) {};
\node[](d) at (9.5*\mydistx, 0.5 + -1*\mydisty)  {\footnotesize $\bullet\bullet\bullet$}; 

\draw[pattern=north east lines, pattern color=vermilion] (10*\mydistx, 0 + -1*\mydisty) rectangle (11*\mydistx, 1 + -1*\mydisty) {};
\node[](d) at (10.5*\mydistx, -0.3 + -1*\mydisty)  {$\db_{v_1,v_2}$};
     	
\end{tikzpicture}
}
\end{minipage}
\begin{minipage}{0.49\textwidth}
\centering
	\scalebox{0.57}{
	\begin{tikzpicture}
\def\mydistx{1.0}
\def\mydisty{2.6}

\node[](d) at (-0.5*\mydistx, 0.5 + 2.4*\mydisty)  {$b)$};

\node[](d) at (-0.5*\mydistx, 0.5 + 0*\mydisty)  {$B^*$};
	
\draw[pattern=north west lines, pattern color=orange] (0*\mydistx, 0 + 0*\mydisty) rectangle (1*\mydistx, 1 + 0*\mydisty) {};
\node[](d) at (0.5*\mydistx, -0.3 + 0*\mydisty)  {$\di_{v_1,2}$};
\draw[] (1*\mydistx, 0 + 0*\mydisty) rectangle (2*\mydistx, 1 + 0*\mydisty) {};
\node[](d) at (1.5*\mydistx, -0.3 + 0*\mydisty)  {$\di_{v_1,3}$};

\draw[pattern=north west lines, pattern color=orange] (2*\mydistx, 0 + 0*\mydisty) rectangle (3*\mydistx, 1 + 0*\mydisty) {};
\node[](d) at (2.5*\mydistx, -0.3 + 0*\mydisty)  {$\di_{v_2,1}$};
\draw[] (3*\mydistx, 0 + 0*\mydisty) rectangle (4*\mydistx, 1 + 0*\mydisty) {};
\node[](d) at (3.5*\mydistx, -0.3 + 0*\mydisty)  {$\di_{v_2,3}$};

\draw[] (4*\mydistx, 0 + 0*\mydisty) rectangle (5*\mydistx, 1 + 0*\mydisty) {};
\node[](d) at (4.5*\mydistx, 0.5 + 0*\mydisty)  {\footnotesize $\bullet\bullet\bullet$};

\draw[] (5*\mydistx, 0 + 0*\mydisty) rectangle (6*\mydistx, 1 + 0*\mydisty) {};
\node[](d) at (5.5*\mydistx, -0.3 + 0*\mydisty)  {$\dy_{v_1}$};

\draw[] (6*\mydistx, 0 + 0*\mydisty) rectangle (7*\mydistx, 1 + 0*\mydisty) {};
\node[](d) at (6.5*\mydistx, -0.3 + 0*\mydisty)  {$\dy_{v_2}$};

\draw[] (7*\mydistx, 0 + 0*\mydisty) rectangle (8*\mydistx, 1 + 0*\mydisty) {};
\node[](d) at (7.5*\mydistx, 0.5 + 0*\mydisty)  {\footnotesize $\bullet\bullet\bullet$};

\draw[pattern=north west lines, pattern color=orange] (8*\mydistx, 0 + 0*\mydisty) rectangle (9*\mydistx, 1 + 0*\mydisty) {};
\node[](d) at (8.5*\mydistx, -0.3 + 0*\mydisty)  {$\dy_{v_1,v_2}$};

\draw[] (9*\mydistx, 0 + 0*\mydisty) rectangle (10*\mydistx, 1 + 0*\mydisty) {};
\node[](d) at (9.5*\mydistx, 0.5 + 0*\mydisty)  {\footnotesize $\bullet\bullet\bullet$};     

\draw[pattern=vertical lines] (10*\mydistx, 0 + 0*\mydisty) rectangle (11*\mydistx, 1 + 0*\mydisty) {};
\node[](d) at (10.5*\mydistx, -0.3 + 0*\mydisty)  {$\db_{v_1,v_2}$};

\node[](d) at (-0.5*\mydistx, 0.5 + 1*\mydisty)  {$B_{v_1}$};
	
\draw[red, line width=3pt] (0.5*\mydistx, 0 + 1*\mydisty) rectangle (0.5*\mydistx, 1 + 1*\mydisty) {};
\draw[] (0*\mydistx, 0 + 1*\mydisty) rectangle (1*\mydistx, 1 + 1*\mydisty) {};
\draw[pattern=vertical lines] (0*\mydistx, 0 + 1*\mydisty) rectangle (0.5*\mydistx, 1 + 1*\mydisty) {};
\draw[pattern=north west lines, pattern color=skyblue] (0*\mydistx, 1 + 1*\mydisty) rectangle (1*\mydistx, 2 + 1*\mydisty) {};
\node[](d) at (0.5*\mydistx, -0.3 + 1*\mydisty)  {$\di_{v_1,2}$};
\draw[red, line width=3pt] (0*\mydistx, 1 + 1*\mydisty) rectangle (0.5*\mydistx, 1 + 1*\mydisty) {};

\draw[] (1*\mydistx, 0 + 1*\mydisty) rectangle (2*\mydistx, 1 + 1*\mydisty) {};
\draw[pattern=vertical lines] (1*\mydistx, 0 + 1*\mydisty) rectangle (1.5*\mydistx, 1 + 1*\mydisty) {};
\draw[red, line width=3pt] (1.5*\mydistx, 0 + 1*\mydisty) rectangle (1.5*\mydistx, 1 + 1*\mydisty) {};
\draw[pattern=north west lines, pattern color=skyblue] (1*\mydistx, 1 + 1*\mydisty) rectangle (2*\mydistx, 2 + 1*\mydisty) {};
\node[](d) at (1.5*\mydistx, -0.3 + 1*\mydisty)  {$\di_{v_1,3}$};
\draw[red, line width=3pt] (1*\mydistx, 1 + 1*\mydisty) rectangle (1.5*\mydistx, 1 + 1*\mydisty) {};

\draw[pattern=vertical lines] (2*\mydistx, 0 + 1*\mydisty) rectangle (3*\mydistx, 1 + 1*\mydisty) {};
\node[](d) at (2.5*\mydistx, -0.3 + 1*\mydisty)  {$\di_{v_2,1}$};
\draw[pattern=vertical lines] (3*\mydistx, 0 + 1*\mydisty) rectangle (4*\mydistx, 1 + 1*\mydisty) {};
\node[](d) at (3.5*\mydistx, -0.3 + 1*\mydisty)  {$\di_{v_2,3}$};

\draw[] (4*\mydistx, 0 + 1*\mydisty) rectangle (5*\mydistx, 1 + 1*\mydisty) {};
\node[](d) at (4.5*\mydistx, 0.5 + 1*\mydisty)  {\footnotesize $\bullet\bullet\bullet$};

\draw[pattern=north west lines, pattern color=skyblue] (5*\mydistx, 0 + 1*\mydisty) rectangle (6*\mydistx, 1 + 1*\mydisty) {};
\node[](d) at (5.5*\mydistx, -0.3 + 1*\mydisty)  {$\dy_{v_1}$};

\draw[pattern=vertical lines] (6*\mydistx, 0 + 1*\mydisty) rectangle (7*\mydistx, 1 + 1*\mydisty) {};
\node[](d) at (6.5*\mydistx, -0.3 + 1*\mydisty)  {$\dy_{v_2}$};

\draw[] (7*\mydistx, 0 + 1*\mydisty) rectangle (8*\mydistx, 1 + 1*\mydisty) {};
\node[](d) at (7.5*\mydistx, 0.5 + 1*\mydisty)  {\footnotesize $\bullet\bullet\bullet$};

\draw[pattern=vertical lines] (8*\mydistx, 0 + 1*\mydisty) rectangle (9*\mydistx, 1 + 1*\mydisty) {};
\node[](d) at (8.5*\mydistx, -0.3 + 1*\mydisty)  {$\dy_{v_1,v_2}$};

\draw[] (9*\mydistx, 0 + 1*\mydisty) rectangle (10*\mydistx, 1 + 1*\mydisty) {};
\node[](d) at (9.5*\mydistx, 0.5 + 1*\mydisty)  {\footnotesize $\bullet\bullet\bullet$};  

\draw[pattern=vertical lines] (10*\mydistx, 0 + 1*\mydisty) rectangle (11*\mydistx, 1 + 1*\mydisty) {};
\node[](d) at (10.5*\mydistx, -0.3 + 1*\mydisty)  {$\db_{v_1,v_2}$};

\node[](d) at (-0.5*\mydistx, 0.5 + 2*\mydisty)  {$B_{v_2}$};
	
\draw[pattern=vertical lines] (0*\mydistx, 0 + 2*\mydisty) rectangle (1*\mydistx, 1 + 2*\mydisty) {};
\node[](d) at (0.5*\mydistx, -0.3 + 2*\mydisty)  {$\di_{v_1,2}$};
\draw[pattern=vertical lines] (1*\mydistx, 0 + 2*\mydisty) rectangle (2*\mydistx, 1 + 2*\mydisty) {};
\node[](d) at (1.5*\mydistx, -0.3 + 2*\mydisty)  {$\di_{v_1,3}$};

\draw[] (2*\mydistx, 0 + 2*\mydisty) rectangle (3*\mydistx, 1 + 2*\mydisty) {};
\draw[pattern=vertical lines] (2*\mydistx, 0 + 2*\mydisty) rectangle (2.5*\mydistx, 1 + 2*\mydisty) {};
\draw[red, line width=3pt] (2.5*\mydistx, 0 + 2*\mydisty) rectangle (2.5*\mydistx, 1 + 2*\mydisty) {};
\node[](d) at (2.5*\mydistx, -0.3 + 2*\mydisty)  {$\di_{v_2,1}$};
\draw[pattern=north east lines, pattern color=blue] (2*\mydistx, 1 + 2*\mydisty) rectangle (3*\mydistx, 2 + 2*\mydisty) {};
\draw[red, line width=3pt] (2*\mydistx, 1 + 2*\mydisty) rectangle (2.5*\mydistx, 1 + 2*\mydisty) {};

\draw[] (3*\mydistx, 0 + 2*\mydisty) rectangle (4*\mydistx, 1 + 2*\mydisty) {};
\draw[pattern=vertical lines] (3*\mydistx, 0 + 2*\mydisty) rectangle (3.5*\mydistx, 1 + 2*\mydisty) {};
\draw[red, line width=3pt] (3.5*\mydistx, 0 + 2*\mydisty) rectangle (3.5*\mydistx, 1 + 2*\mydisty) {};
\node[](d) at (3.5*\mydistx, -0.3 + 2*\mydisty)  {$\di_{v_2,3}$};
\draw[pattern=north east lines, pattern color=blue] (3*\mydistx, 1 + 2*\mydisty) rectangle (4*\mydistx, 2 + 2*\mydisty) {};
\draw[red, line width=3pt] (3*\mydistx, 1 + 2*\mydisty) rectangle (3.5*\mydistx, 1 + 2*\mydisty) {};

\draw[] (4*\mydistx, 0 + 2*\mydisty) rectangle (5*\mydistx, 1 + 2*\mydisty) {};
\node[](d) at (4.5*\mydistx, 0.5 + 2*\mydisty)  {\footnotesize $\bullet\bullet\bullet$};

\draw[pattern=vertical lines] (5*\mydistx, 0 + 2*\mydisty) rectangle (6*\mydistx, 1 + 2*\mydisty) {};
\node[](d) at (5.5*\mydistx, -0.3 + 2*\mydisty)  {$\dy_{v_1}$};

\draw[pattern=north east lines, pattern color=blue] (6*\mydistx, 0 + 2*\mydisty) rectangle (7*\mydistx, 1 + 2*\mydisty) {};
\node[](d) at (6.5*\mydistx, -0.3 + 2*\mydisty)  {$\dy_{v_2}$};

\draw[] (7*\mydistx, 0 + 2*\mydisty) rectangle (8*\mydistx, 1 + 2*\mydisty) {};
\node[](d) at (7.5*\mydistx, 0.5 + 2*\mydisty)  {\footnotesize $\bullet\bullet\bullet$};

\draw[pattern=vertical lines] (8*\mydistx, 0 + 2*\mydisty) rectangle (9*\mydistx, 1 + 2*\mydisty) {};
\node[](d) at (8.5*\mydistx, -0.3 + 2*\mydisty)  {$\dy_{v_1,v_2}$};

\draw[] (9*\mydistx, 0 + 2*\mydisty) rectangle (10*\mydistx, 1 + 2*\mydisty) {};
\node[](d) at (9.5*\mydistx, 0.5 + 2*\mydisty)  {\footnotesize $\bullet\bullet\bullet$}; 

\draw[pattern=vertical lines] (10*\mydistx, 0 + 2*\mydisty) rectangle (11*\mydistx, 1 + 2*\mydisty) {};
\node[](d) at (10.5*\mydistx, -0.3 + 2*\mydisty)  {$\db_{v_1,v_2}$};

\node[](d) at (-0.5*\mydistx, 0.5 + -1*\mydisty)  {$B_{v_1,v_2}$};
	
\draw[] (0*\mydistx, 0 + -1*\mydisty) rectangle (1*\mydistx, 1 + -1*\mydisty) {};
\draw[pattern=north east lines, pattern color=vermilion] (0*\mydistx, 0 + -1*\mydisty) rectangle (0.5*\mydistx, 1 + -1*\mydisty) {};
\node[](d) at (0.5*\mydistx, -0.3 + -1*\mydisty)  {$\di_{v_1,2}$};

\draw[pattern=vertical lines] (1*\mydistx, 0 + -1*\mydisty) rectangle (2*\mydistx, 1 + -1*\mydisty) {};
\node[](d) at (1.5*\mydistx, -0.3 + -1*\mydisty)  {$\di_{v_1,3}$};

\draw[] (2*\mydistx, 0 + -1*\mydisty) rectangle (3*\mydistx, 1 + -1*\mydisty) {};
\node[](d) at (2.5*\mydistx, -0.3 + -1*\mydisty)  {$\di_{v_2,1}$};
\draw[pattern=north east lines, pattern color=vermilion] (2*\mydistx, 0 + -1*\mydisty) rectangle (2.5*\mydistx, 1 + -1*\mydisty) {};

\draw[pattern=vertical lines] (3*\mydistx, 0 + -1*\mydisty) rectangle (4*\mydistx, 1 + -1*\mydisty) {};
\node[](d) at (3.5*\mydistx, -0.3 + -1*\mydisty)  {$\di_{v_2,3}$};

\draw[] (4*\mydistx, 0 + -1*\mydisty) rectangle (5*\mydistx, 1 + -1*\mydisty) {};
\node[](d) at (4.5*\mydistx, 0.5 + -1*\mydisty)  {\footnotesize $\bullet\bullet\bullet$};

\draw[pattern=vertical lines] (5*\mydistx, 0 + -1*\mydisty) rectangle (6*\mydistx, 1 + -1*\mydisty) {};
\node[](d) at (5.5*\mydistx, -0.3 + -1*\mydisty)  {$\dy_{v_1}$};

\draw[pattern=vertical lines] (6*\mydistx, 0 + -1*\mydisty) rectangle (7*\mydistx, 1 + -1*\mydisty) {};
\node[](d) at (6.5*\mydistx, -0.3 + -1*\mydisty)  {$\dy_{v_2}$};

\draw[] (7*\mydistx, 0 + -1*\mydisty) rectangle (8*\mydistx, 1 + -1*\mydisty) {};
\node[](d) at (7.5*\mydistx, 0.5 + -1*\mydisty)  {\footnotesize $\bullet\bullet\bullet$};

\draw[] (8*\mydistx, 0 + -1*\mydisty) rectangle (9*\mydistx, 1 + -1*\mydisty) {};
\node[](d) at (8.5*\mydistx, -0.3 + -1*\mydisty)  {$\dy_{v_1,v_2}$};

\draw[] (9*\mydistx, 0 + -1*\mydisty) rectangle (10*\mydistx, 1 + -1*\mydisty) {};
\node[](d) at (9.5*\mydistx, 0.5 + -1*\mydisty)  {\footnotesize $\bullet\bullet\bullet$}; 

\draw[pattern=north east lines, pattern color=vermilion] (10*\mydistx, 0 + -1*\mydisty) rectangle (11*\mydistx, 1 + -1*\mydisty) {};
\node[](d) at (10.5*\mydistx, -0.3 + -1*\mydisty)  {$\db_{v_1,v_2}$};
     	
\end{tikzpicture}
}
\end{minipage}
\caption{Schematic illustration of the reduction of \Cref{thm-w-hardness-ls-bin-packing}. 
$a)$ shows the initial assignment~$f$ and~$b)$ shows the improving assignment~$f'$ with~$d_{\rm flip}(f,f') \le k$.
The squares in each bin correspond to a set of dimensions, for example~$\di_{v_1,2}$ corresponds to all important dimensions with respect to~$(v_1,2)$.
The fillings of the squares correspond to vectors which are contained in that bin according to the assignment~$f$ or~$f'$.
All \textcolor{skyblue}{skyblue} filed squares correspond to the dimensions in which the vertex vector~$\vect(v_1)$ has value~1.
Analogously, \textcolor{blue}{blue} corresponds to vertex vector~$\vect(v_2)$, \textcolor{orange}{orange} corresponds to edge vector~$\vect(v_1,v_2)$, and \textcolor{vermilion}{vermilion} corresponds to the blocking vector~$\bvect(v_1,v_2)$.
Furthermore, all squares filled with vertical black lines indicate that the corresponding bin contains dummy vectors in each of the dimensions.
The red vertical line, for example in square~$\di_{v_1,2}$ in~$B_{v_1}$ indicates that only for dimensions~$\di_{v_1,2}^y$ with~$y\in[1,z/2-1]$ there is a dummy vector.
In~$a)$ for assignment~$f$ observe that only edge bins have an overload.
More precisely, $B_{v_1,v_2}$ has an overload of~1 in dimensions~$\di_{v_1,2}^y$ and~$\di_{v_2,1}^y$ for~$y\in[1,z/2]$.
In contrast, in~$b)$, the assignment~$f'$, (1) the vertex bins containing its corresponding vertex vector, and (2) the edge bins containing its corresponding edge vector, have overload.
For example bin~$B_{v_1}$ has overload in dimensions~$\di_{v_1,2}^y$ for~$y\in[1,z/2-1]$.
Note that the distinction of~$z/2$ in~$f$ to~$z/2-1$ in~$f'$ is essential to show that~$f'$ is improving upon~$f$.}
\label{fig:example-w-hardness-bin-packing}
\end{figure}

\textbf{Construction:}
First, we set the search radius $k\coloneqq \ell + \binom{\ell}{2}$.
Second, we describe the dimensions, third we describe the bins and the corresponding weight vector~$\omega$, fourth we describe all vectors  and their initial partition to the bins, and finally we show the desired bounds on the parameters. 
An illustration is shown in \Cref{fig:example-w-hardness-bin-packing}.

\emph{Description of dimensions:}
Initially, we describe the \emph{important} dimensions.
Let~$z\coloneqq 4\cdot\ell^2$.
For each color~$i$ and each vertex~$v_i\in V_i$, each other color~$j\in[1,\ell]\setminus\{i\}$, and each~$y\in[1,z]$ we add an \emph{important dimension}~$\di_{v_i,j}^y$.
All dimensions~$\di_{v_i,j}^1,\ldots, \di_{v_i,j}^z$ are referred to as the \emph{important dimensions with respect to~$(v_i,j)$}.
Additionally, dimensions~$\di_{v_i,j}^1, \ldots , \di_{v_i,j}^{z/2}$ are also called the \emph{important profit dimensions of~$(v_i,j)$}.

Next, we describe the \emph{dummy dimensions}.
Let~$Z\coloneqq 2\cdot k\cdot z$.
For each color~$i$ and each vertex~$v_i\in V_i$ and each~$y\in[1,Z]$ we add a \emph{dummy vertex dimension}~$\dy_{v_i}^y$.
All dimensions~$\dy_{v_i}^1,\ldots, \dy_{v_i}^{Z}$ are referred to as the \emph{dummy vertex dimensions with respect to~$v_i$}.
Furthermore, for each edge~$\{u_i,w_j\}\in E(G)$ such that~$u_i\in V_i$ and~$w_j\in V_j$ (note that~$i\ne j$) and each~$y\in[1,Z]$ we add a \emph{dummy edge dimension}~$\dy_{u_i,w_j}^y$ and a \emph{dummy blocking dimension}~$\db_{u_i,w_j}^y$.
All dimensions~$\dy_{u_i,w_j}^1,\ldots, \dy_{u_i,w_j}^{Z}$ are referred to as the \emph{dummy edge dimensions with respect to edge~$\{u_i,w_j\}$} and all dimensions~$\db_{u_i,w_j}^1,\ldots, \db_{u_i,w_j}^{Z}$ are referred to as the \emph{dummy blocking dimensions with respect to edge~$\{u_i,w_j\}$}.

\emph{Description of bins and weight-vector~$\omega$:}
Our \BinLoc instance contains a \emph{target bin}~$B^*$, for each color~$i$ and each vertex~$v_i\in V_i$ a \emph{vertex bin}~$B_{v_i}$, and for each edge~$\{u_i,w_j\}$ with~$u_i\in V_i$ and~$w_j\in V_j$ (note that~$i\ne j$) of~$G$ an \emph{edge bin}~$B_{u_i,w_j}$.
All bins have the target weight vector~$\omega$ which has value~1 in each dimensions.

\emph{Vectors and their initial assignment~$f$ to the bins:}
For an intuition of~$f$, we refer to \Cref{fig:example-w-hardness-bin-packing}.
For each vertex~$v$ in any color class, say~$V_i$, we add a \emph{vertex vector}~$\vect(v_i)$ having value~1 in each important dimension with respect to~$(v,j)$ where~$j\in[1,\ell]\setminus\{i\}$ and in each dummy vertex dimension with respect to~$v$.
Furthermore, for each edge~$\{u_i,w_j\}\in E(G)$ (note that~$i\ne j$) we add an \emph{edge vector}~$\vect(u_i,w_j)$ having value~$1$ in each important dimension of~$(u_i,j)$, each important dimension~$(w_j,i)$, and each dummy edge dimension with respect to~$\{u_i,w_j\}$.
Additionally, for each edge~$\{u_i,w_j\}\in E(G)$ we add a \emph{blocking vector}~$\bvect(u_i,w_j)$ having value~$1$ in each important profit dimension of~$(u_i,j)$, each important profit dimension~$(w_j,i)$, and each dummy blocking dimension with respect to edge~$\{u_i, w_j\}$.
Analogously to the edge vectors, observe that the second index in the important profit dimensions where the blocking vector has value~1 is always the color class of the other vertex of that edge.

Next, we describe the mapping~$f$ which assign each vector to one bin.
To avoid confusion with the colors of the graph~$G$, here we abuse our notation and let~$f$ map to bins instead of colors as required by our definition.
However, one can use any bijection from the bins to a set of $n+m+1$~colors to fulfill the definition.

\begin{itemize}
\item For each vertex vector~$\vect(v_i)$, we set~$f(\vect(v_i))=B^*$

\item For each edge vector~$\vect(u_i,w_j)$, we set~$f(\vect(u_i,w_j))=B_{u_i,w_j}$

\item For each blocking vector~$\bvect(u_i,w_j)$, we set~$f(\bvect(u_i,w_j))B_{u_i,w_j}$
\end{itemize}

In the following we call a dimension~$\di$ with respect to a bin~$B$ \emph{empty} if each vector~$\vect$ contained in~$B$ has value~0 in dimension~$\di$.
Finally, we add many \emph{dummy vectors} to all bins to ensure that only a few number of dimensions are empty.
Here, a dummy vector has value~1 in exactly one dimension and value~0 in each other dimension.
For a dimensions~$\di$ by~$\vect(\di)$ we denote a dummy vector having value~1 exactly in dimension~$\di$.
Next, we add several dummy vector to the instance.
Note that many dummy vectors are identical, but each two identical dummy vectors are assigned to different bins by~$f$.

\begin{itemize}
\item For the target bin~$B^*$ we do the following:
\begin{itemize}

\item For each dummy blocking dimension~$\db$, we add a vector~$\vect(\db)$, and we set~$f(\vect(\db))=B^*$.
\end{itemize}

\item For each vertex bin~$B_{v_i}$ corresponding to vertex~$v_i$ in color class~$i$ we do the following:
\begin{itemize}
\item For each~$j\in[1,\ell]\setminus\{i\}$, and each~$y\in[1,z/2-1]$, that is, in all important profit dimensions of~$(v_i,j)$ \emph{except} the last one~$d_{u_i}^{z/2}$, we add a dummy vector~$\vect(d_{v_i}^y))$, and we set~$f(d_{v_i}^y))=B_{v_i}$.

This offset of~1 in the important profit dimensions is essential to reduce the overload.
More precisely, only for these dimensions there exists a bin (this vertex bin~$B_{v_i}$) which is empty and also at least one other bin which contains at least two vectors which have value~1 in that dimension.

\item For each color~$j$, and each vertex~$u_j\in V_j$ such that~$v_i\ne u_j$, and each~$p\in[1,\ell]\setminus\{j\}$, we add a dummy vector~$\vect(\di)$ for each important dimension with respect to~$(u_j,p)$, and each dummy dimensions \emph{except} the dummy vertex dimensions with respect to~$v_i$, and we set~$f(\vect(\di))=B_{v_i}$.
\end{itemize}

\item For each edge bin~$B_{u_i,w_j}$ corresponding to edge~$(u_i,w_j)$ where~$u_i\in V_i$ and~$w_j\in V_j$ (note that~$i\ne j)$ we do the following:
\begin{itemize}

\item We add a dummy vector~$\vect(\di)$ for each important dimension  with respect to~$(v,p)$ where~$v_p\ne u_j$ and~$v_p\ne w_i$.
We set~$f(\vect(\di))=B_{u_i.w_j}$.

\item We add a dummy vector~$\vect(\di)$ for each dummy dimensions~$\di$ \emph{except} the dummy edge dimension with respect to~$\{u_i,w_j\}$ and the blocking edge dimension with respect to~$\{u_i,w_j\}$.
We set~$f(\vect(\di))=B_{u_i,w_j}$.
\end{itemize}

\end{itemize}

In the following, a dimension~$\di$ is called \emph{overfull} if~$\di$ contributes at least~1 to the overload, that is, there exists a bin~$B$ and at least two vectors~$\vect_1$ and~$\vect_2$ contained in~$B$ such that~$\vect_1[\di]=1=\vect_1[\di]$.
To reduce the overload, we need to reassign one vector, say~$\vect_1$, having value~1 in an overfull dimension~$\di$ to another bin in which~$\di$ is empty.
Now, we make an observation which dimensions are overfull for the assignment~$f$ and a subsequent observation in which bins these dimensions are empty (also see \Cref{fig:example-w-hardness-bin-packing}).
We want to emphasize that there are way more empty (dummy) dimensions with respect to some bins, but these dimensions are not necessary to reduce the overload since they are not overfull in any bin.

\begin{observation}
\label{obs-bin-packing-initial-assignment}
\begin{enumerate}
For the initial assignment~$f$ of the vectors to the bins the following holds:
\item 
\label{obs-bin-packing-initial-assignment-overload}
Only the edge bins have an overload of at least~1.
More precisely, for each edge~$\{u_i,w_j\}\in E(G)$ where~$u_i\in V_i$ and~$w_j\in V_j$ in edge bin~$B_{u_i,w_j}$ each important profit dimension~$d_{u_i,j}^y$ and~$d_{w_j,i}^y$ for each~$y\in[1,z/2]$ has an overload of~1.

This overload is due to the edge vector~$\vect(u_i,w_j)$ and the blocking vector~$\bvect(u_i,w_j)$.

\item
\label{obs-bin-packing-initial-assignment-empty}
Only the vector bins have empty important profit dimensions.
More precisely, for each color~$i$ and each~$v_i\in V_i$ in vector bin~$B_{v_i}$ the important dimensions~$d_{v_i,j}^y$ for each~$j\in[1,\ell]\setminus\{i\}$ and each~$y\in[z/2,z]$ are empty.

\end{enumerate}
\end{observation}

Let us remark that dimensions~$d_{v,j}^{z/2}$ for each~$v_i\in V(G)$ and each~$j\in[1,\ell]\setminus\{i\}$ are the only dimensions which are overfull in one bin (more precisely: in an edge bin corresponding to an edge incident with~$v_i$) and empty in another bin (more precisely: in the vertex bin~$B_{v_i}$).
By this observation, one can obtain a bound on~$2k$ for the maximal decrease of the overload by any $k$-flip.
This is due to the fact that vector has value one in at most two dimensions~$d_{v,j}^{z/2}$ (more precisely: only edge vectors and blocking edge vectors have exactly two dimensions~$d_{v,j}^{z/2}$ with value~1).
For showing the correctness we do not use this argument; instead we use the weaker observation in how many dimensions a move can increase the overload, to allow for easier arguments about the moves of the vectors in an improving $k$-flip.

\emph{Parameter bounds.}
First, we verify that the number of ones per vector is small.
\begin{itemize}
\item Each vertex vector has value~1 in $(\ell-1)\cdot z+Z\in\OO(\ell^4)$~dimensions.

\item Each edge vector has value~1 in $2\cdot z+Z\in\OO(\ell^4)$~dimensions.

\item Each blocking vector has value~$1$ in~$z/2+z/2+Z\in\OO(\ell^4)$~dimensions.

\item Each dummy vector has value~1 in exactly one dimension.
\end{itemize}

Thus, the maximal number~$q$ of ones per vector is bounded solely on~$\ell$.
Since each vector has only entries~0 and~1, the maximal sum of entries over all vectors is also bounded in~$\ell$.

Note that we have $n+m+1$~bins.
Furthermore, we have $z\cdot n\cdot (\ell-1)$~important dimensions and $(n+m+m)\cdot Z$~dummy dimension.
Since we have one vertex vector per vertex, one edge vector and one blocking vector per edge, and at most one dummy vector per dimension and per bin, the instance of \BinLoc can be computed in time polynomial in the input size.

\textbf{Correctness:}
We show that~$G$ has a multicolored clique if and only if there is
another assignment~$f'$ of the vectors to the bins such that~$d_{\rm flip}(f,f') \le k$ which is improving upon~$f$.

$(\Rightarrow)$
Let~$C=(v_1,\ldots, v_\ell)$ be a multicolored clique of~$G$.
We set~$$D_{\rm flip}(f,f')\coloneqq (\bigcup_{i\in\ell} \vect(v_i))\cup \bigcup_{i,j\in[1,\ell], i\ne j} \vect(v_i,v_j).$$ 

More precisely, compared to~$f$ in the new assignment~$f'$, we set~$f(\vect(v_i))=B_{v_i}$ for each~$v_i\in C$ and we set~$f'(\vect(v_i,v_j))=B^*$ for each edge~$\{v_i,v_j\}$ having both endpoints in~$C$.
Hence, it remains to verify that the new assignment of vectors to bins has a smaller overload than the initial assignment.
Recall that~$f(\vect(v_i))=B^*$ for each~$v_i\in C$ and~$f(\vect(v_i,v_j))=B_{v_i,v_j}$ for each edge having both endpoints in~$C$.

To show that~$f'$ is improving against~$f$, we first analyze the change in the overload by only considering swapping all vertex vectors corresponding to vertices in~$C$.
Second, we analyze which dimensions are empty in the target bin after this partial swap.
Third, we analyze the influence on the overload of the subsequent swap of the edge vectors corresponding to edges of~$C$ into the target bin.
We then conclude that~$f'$ is improving over~$f$.

\emph{Step 1: Change of the overload by the movement of all vertex vectors corresponding to vertices of~$C$.}
First, we analyze the important dimensions:
According to \Cref{obs-bin-packing-initial-assignment-empty} of \Cref{obs-bin-packing-initial-assignment}, in each vertex bin~$B_{v_i}$ only the important dimensions~$d_{v_i,j}^y$ for~$j\in[1,\ell]\setminus\{i\}$ and~$y\in[z/2,z]$ are empty, and also according to \Cref{obs-bin-packing-initial-assignment-overload} of \Cref{obs-bin-packing-initial-assignment} initially there is no overload in any dimension of the target bin~$B^*$.
Second, we analyze the dummy dimensions:
Vertex vector~$\vect(v_i)$ has value~1 in each dummy vertex dimension with respect to~$v_i$ and in all these dimension the target bin~$B^*$ has no overload and the vertex bin~$B_{v_i}$ is empty in all these dimensions.
Thus, moving the vertex vector~$\vect(v_i)$ from the target bin~$B^*$ into its corresponding vertex bin~$B_{v_i}$ increases the overload by exactly~$(\ell-1)\cdot (z-(z/2+1)=(\ell-1)\cdot(z/2-1)$.
Consequently, after each vertex vector corresponding to a vertex of~$C$ has been moved from the target bin~$B^*$ into its corresponding vertex bin the overload increased by exactly~$\ell\cdot (\ell-1)\cdot (z/2-1)$.

\emph{Step 2: Observation for target bin~$B^*$, after all these vertex vectors have been moved out of~$B^*$.}
Observe that each dimensions in which any of the moved vertex vectors has value~1 is now empty.
In particular, for each~$v_i\in C$ and each~$j\in[1,\ell]\setminus\{i\}$ all important dimensions with respect to~$(v_i,j)$ are now empty in~$B^*$.
Moreover, also the corresponding dummy dimensions of~$v_i$ in the target bin~$B^*$ are now empty.

\emph{Step 3: Change of the overload by the movement of all edge vectors corresponding to edges having both endpoints in~$C$.}
According to \Cref{obs-bin-packing-initial-assignment-overload} of \Cref{obs-bin-packing-initial-assignment} initially in each edge bin~$B_{u_i,w_j}$ each important profit dimension with respect to~$(u_i,j)$ and~$(w_j,i)$ has on overload of~$1$.
Thus, moving the edge vector~$\vect(u_i,w_j)$ out of the edge bin~$B_{u_i,w_j}$ reduces the overload by~$2\cdot z/2=z$.
In total, this reduces the overload by~$\binom{\ell}{2}\cdot z$.

Recall that edge vector~$\vect(u_i,w_j)$ has value~1 in each important dimension with respect to~$(u_i,j)$ and with respect to~$(w_j,i)$, and in each dummy edge dimension with respect to~$\{u_i, w_j\}$.
Hence, any two distinct edges having both endpoints in~$C$ have value~$1$ in different dimensions.
Now, we analyze the change of the overload of moving all these edge vectors into the target bin~$B^*$.
First, we investigate the important dimensions:
After all vertex vectors corresponding to vertices in~$C$ have been moved out of the target bin~$B^*$, all important dimensions with respect to~$(v_i,j)$ where~$v_i\in C$ and~$j\in[1,\ell]\setminus\{i\}$ are empty.
Thus, moving all edge vectors corresponding to edges having both endpoints in~$C$ into~$B^*$ does not increase the overload.
Second, we analyze the dummy dimensions.
Since all edge dummy dimensions are empty with respect to~$B^*$ and since each two distinct edge vectors have value~1 in distinct dummy edge dimensions, the overload does not increase.

Hence, in total we reduce the overload by~$\binom{\ell}{2}\cdot z- \ell(\ell-1)\cdot (z/2-1)=\ell(\ell-1)>0$ and thus~$f'$ is improving over~$f$.

$(\Leftarrow)$
Let~$f'$ be another assignment of the vectors to the bins such that~$d_{\rm flip}(f,f') \le k$ which is improving upon~$f$.
Furthermore, we assume that~$d_{\rm flip}(f,f')$ is smallest among all improving solutions upon~$f$.

\emph{Outline:}
\begin{enumerate}
\item
First, we provide a tool to show that~$f'$ cannot move certain vectors~$\vect$ in certain bins~$B$, that is, $f'(\vect)\ne B$, since otherwise we can construct another improving assignment~$f^*$ upon~$f$ which strictly moves less vectors than~$f$, contradicting our assumption on~$f'$.

\item
According to \Cref{obs-bin-packing-initial-assignment-overload} of \Cref{obs-bin-packing-initial-assignment}, in order to reduce the overload, some vectors assigned by~$f$ to edge bins have to be contained in~$D_{\rm flip}(f,f')$.
More precisely, for an edge bin~$B_{u_i,w_j}$ we can only reduce the overload if edge vector~$\vect(u_i,w_j)\in D_{\rm flip}(f,f')$ or if blocking vector~$\bvect(u_i,w_j)\in D_{\rm flip}(f,f')$.
Second, we use this tool to show the latter is not possible, that is, $D_{\rm flip}(f,f')$ contains no blocking vector.

\item
The second step then implies that some edge vectors have to be contained in~$D_{\rm flip}(f,f')$.
Third, we use the tool to show that each edge vector~$\vect(u_i,w_j)\in D_{\rm flip}(f,f')$ is moved into the target bin~$B^*$.

\item
Observe that in the initial assignment~$f$ no important dimensions is empty in the target bin~$B^*$.
More precisely, in the assignment~$f$ each important dimension in~$B^*$ has value~1 since exactly~1 vertex vector has value~1 in that dimension.
Hence, if we only consider the move of edge vectors in~$D_{\rm flip}(f,f')$, then the overload increases by each such move.
Since~$f'$ is improving upon~$f$, $f'$ assigns some vertex vectors from the target bin~$B^*$ to different bins.
Fourth, again by using the tool from Step~1, we verify that each vertex vector~$\vect(v_i)\in D_{\rm flip}(f,f')$ is moved into its corresponding vertex bin~$B_{v_i}$.

\item 
Fifth, we show that no dummy vector is contained in~$D_{\rm flip}(f,f')$.

\item 
Steps 2 to 5 provide many properties of~$D_{\rm flip}(f,f')$ and~$f'$:
By Steps 2 and 5, we know that each vector~$\vect\in D_{\rm flip}(f,f')$ is an edge vector or a vertex vector.
Furthermore, by the third step, we know that for each edge vector~$\vect(u_i,w_j)\in D_{\rm flip}(f,f')$ we have~$f'(\vect(u_i,w_j))=B^*$, and by the fourth step we know that each vertex vector~$\vect(v_i)\in D_{\rm flip}(f,f')$ we have~$f'(\vect(v_i))=B_{v_i}$.
Sixth, these properties together with the assumption that~$d_{\rm flip}(f,f')$ is smallest among all improving flips, allow us to show that  if an edge vector~$\vect(u_i,w_j)\in D_{\rm flip}(f,f')$ then also both vertex vectors~$\vect(u_i)\in D_{\rm flip}(f,f')$ and~$\vect(w_j)\in D_{\rm flip}(f,f')$.

\item
Finally, in the seventh step we put everything together and verify that all vertex vectors contained in~$D_{\rm flip}(f,f')$ correspond to the vertices of a multicolored clique in~$G$ and that all edge vectors contained in~$D_{\rm flip}(f,f')$ correspond to the edges of that multicolored clique.
\end{enumerate}

\emph{Step 1: Tool that some vector~$\vect$ cannot be moved in some bin~$B$ by~$f'$.}
Roughly speaking, if~$\vect$ has too many dimensions with value~$1$ which are filled in~$B$ with dummy vectors, then we construct another improving solution~$f''$ by setting~$f''(\vect)\coloneqq f(\vect)$ and~$f''$ is identical to~$f'$ for all other vectors.
This then contradicts the assumption that~$d_{\rm flip}(f,f')$ is smallest among all improving flips.

Formally, for a vector~$\vect$, with~$f(\vect)=B_1$ for some bin~$B_1$, we call a subset~$\mathcal{D}_{f(\vect)}^{B_2}$ of the dimensions in which~$\vect$ has value~1 a \emph{conflicting dimension set of~$\vect$ with respect to (another bin)~$B_2$} if for each dimensions~$\di\in\mathcal{D}_{f(\vect)}^{B_2}$ in the initial assignment~$f$, bin~$B_2$ contains a dummy vector which has value~1 in dimension~$\di$.

\begin{claim}
\label{claim-bound-forbidden-move}
If a vector~$\vect$ has a conflicting dimension set with respect to some bin~$B$ of size at least~$Z$, then~$f'(\vect)\ne B$.
\end{claim}
\begin{claimproof}
We show that if~$f'(\vect)=B$, then~$f'$ cannot be improving upon~$f$.
To verify this, we first give a lower bound on the increase of the overload by assigning vector~$\vect$ to bin~$B$ and second, we provide an upper bound on the maximal overload reduction of any improving $k$-flip.
We then show that the lower bound is larger than the upper bound, yielding a contradiction.

\emph{Step A:}
By the definition of the conflicting dimension set~$\mathcal{D}_{f(\vect)}^{B}$, setting~$f'(\vect)=B$ increases the overload by at least~$Z$.
Since for each dimension~$\di\in\mathcal{D}_{f(\vect)}^{B}$, bin~$B$ contains a dummy vector having value~1 in dimension~$\di$, the overload of the $k$-flip~$f$ with respect to dimensions~$\mathcal{D}_{f(\vect)}^{B}$ in bin~$B$ increases by at least~$Z-2k$.

\emph{Step B:}
According to \Cref{obs-bin-packing-initial-assignment-overload} of \Cref{obs-bin-packing-initial-assignment}, the only bins in the initial assignments having overfull dimensions are the edge bins.
More precisely, each edge bin initially has an overload of exactly~$z$.
Hence, to reduce the overload of the initial assignment, we have to move vectors out of edge bins.
Since each such move can decrease the overload by at most~$z$ and since we can move at most $k$~vectors, in total the overload can be decrease by at most~$k\cdot z$.

Now, since~$|\mathcal{D}_{f(\vect)}^{B}|\ge Z=2\cdot k\cdot z>k\cdot z+2k$, the overload of the new solution~$f'$ is larger than the overload of the initial solution~$f$, a contradiction.
Thus, $f'(\vect)\ne B$.
\end{claimproof}

\emph{Step 2: No blocking vector is contained in~$D_{\rm flip}(f,f')$}.
We use \Cref{claim-bound-forbidden-move} to show this property.

\begin{claim}
\label{claim-do-not-move-blocking}
For each blocking vector~$\bvect(u_i,w_j)$, we have~$\bvect(u_i,w_j)\notin D_{\rm flip}(f,f')$.
\end{claim}
\begin{claimproof}
We assume towards a contradiction that for at least one blocking vector we have $\bvect(u_i,w_j)\in D_{\rm flip}(f,f')$.
Recall that~$f(\bvect(u_u,w_j))=B_{u_i,w_j}$.
Observe that~$\bvect(u_i,w_j)$ has value~1 in each dummy blocking dimension with respect to edge~$\{u_i,w_j\}$.
Also, note that in the initial assignment~$f$, no bin is empty in any dummy blocking dimension.
Furthermore, by construction, each vector except vector~$\bvect(u_i,w_j)$ having value~1 in at least one dummy blocking dimension with respect to edge~$\{u_i,w_j\}$ in any bin is a dummy vector.
Thus, each such vector has exactly one~$1$ in one dummy blocking dimension with respect to edge~$\{u_i,w_j\}$.
Consequently, for each bin~$B$ different from bin~$B_{u_i,w_j}$, the blocking dimensions with respect to edge~$\{u_i,w_j\}$ form a conflicting dimension set with respect to~$B$ of size~$Z$.
By \Cref{claim-bound-forbidden-move} we now have a contradiction and thus for each blocking vector~$\bvect(u_i,w_j)$ we have~$\bvect(u_i,w_j)\notin D_{\rm flip}(f,f')$.
\end{claimproof}
 
\emph{Step 3: Movement of the edge vectors in~$D_{\rm flip}(f,f')$.}
With the help of \Cref{claim-bound-forbidden-move} we verify the following.

\begin{claim}
\label{claim-movements-edge-vectors}
For each edge vector~$\vect(u_i,w_j)\in D_{\rm flip}(f,f')$, we have~$f'(\vect(u_i,w_j))= B^*$.
\end{claim}
\begin{claimproof}
Let~$\vect(u_i,w_j)\in D_{\rm flip}(f,f')$.
Recall that~$f(\vect(u_i,w_j))=B_{u_i,w_j}$.
Observe that~$\vect(u_i,w_j)$ has value~1 in each dummy edge dimension with respect to edge~$\{u_i,w_j\}$.
Also, note that in the initial assignment~$f$, no bin except the target bin~$B^*$ is empty in any dummy edge dimension with respect to edge~$\{u_i,w_j\}$.
More precisely, each vector except vector~$\vect(u_i,w_j)$ having value~1 in at least one dummy edge dimension with respect to edge~$\{u_i,w_j\}$ is a dummy vector.
Thus, each such vector has exactly one~$1$ in one dummy edge dimension with respect to edge~$\{u_i,w_j\}$.
Consequently, for each bin~$B$ different from bin~$B_{u_i,w_j}$ and~$B^*$, the dummy edge dimensions with respect to edge~$\{u_i,w_j\}$ form a conflicting dimension set with respect to~$B$ of size~$Z$.
By \Cref{claim-bound-forbidden-move} we now obtain that for each edge vector~$\vect(u_i,w_j)\in D_{\rm flip}(f,f')$, we have~$f'(\vect(u_i,w_j))= B^*$.
\end{claimproof}

\emph{Step 4: Movement of the vertex vectors in~$D_{\rm flip}(f,f')$.}
Again with the help of \Cref{claim-bound-forbidden-move} we verify the following.

\begin{claim}
\label{claim-movements-vertex-vectors}
For each vertex vector~$\vect(v_i)\in D_{\rm flip}(f,f')$, we have~$f'(\vect(v_i))= B_{v_i}$.
\end{claim}
\begin{claimproof}
Let~$\vect(v_i)\in D_{\rm flip}(f,f')$.
recall that~$f(\vect(v_i))=B^*$.
Observe that~$\vect(v_i)$ has value~1 in each dummy vertex dimension with respect to~$v_i$.
Also, note that in the initial assignment~$f$, no bin except the vertex bin~$B_{v_i}$ is empty in any dummy vertex dimension with respect to~$v_i$.
More precisely, each vector except vector~$\vect(u_i,w_j)$ having value~1 in at least one dummy vertex dimension with respect to~$v_i$ is a dummy vector.
Thus, each such vector has exactly one~$1$ in one dummy vertex dimension with respect to~$v_i$.
Consequently, for each bin~$B$ different from bins~$B^*$ and~$B_{v_i}$, the vertex dimensions with respect to~$v_i$ form a conflicting dimension set with respect to~$B$ of size~$Z$.
By \Cref{claim-bound-forbidden-move} we now obtain that vertex vector~$\vect(v_i)\in D_{\rm flip}(f,f')$ we have~$f'(\vect(v_i))= B_{v_i}$.
\end{claimproof}

\emph{Step 5: No dummy vector is contained in~$D_{\rm flip}(f,f')$.}

\begin{claim}
\label{claim-no-dummy-vector-is-moved}
For each dummy vector~$\vect$, we have~$\vect\notin D_{\rm flip}(f,f')$.
\end{claim}
\begin{claimproof}
By \Cref{claim-do-not-move-blocking} we know that~$D_{\rm flip}(f,f')$ contains no blocking vector, by \Cref{claim-movements-edge-vectors}, we know that for each edge vector~$\vect(u_i,w_j)\in D_{\rm flip}(f,f')$ we have~$f'(u_i,w_j)=B^*$, and by \Cref{claim-movements-vertex-vectors}, we know that each vertex vector~$\vect(v_i)\in D_{\rm flip}(f,f')$ we have~$f'(v_i)=B_{v_i}$.
Thus, if we consider all moves of non-dummy vectors separately, that is, moves of blocking vectors, edge vectors, and vertex vectors, in the improving flip~$D_{\rm flip}(f,f')$, according to \Cref{obs-bin-packing-initial-assignment} dummy vectors only 'overlap' with vertex vectors (which are contained in~$D_{\rm flip}(f,f')$) in vertex bins.
Here, overlap means that the dummy vector is in the same bin as the vertex vectors and both have value 1 in one dimension.
More precisely, for each vertex vector~$\vect(v_i)$ contained in the improving $k$-flip, this vertex vector~$\vect(v_i)$ overlaps in the vertex bin~$B_{v_i}$ with dummy vectors in the important profit dimensions~$d_{v_i,j}^y$ for each~$j\in[1,\ell]\setminus\{i\}$ and~$y\in[1,z/2-1]$ (this are the only dimensions where an overlap occurs).
According to \Cref{obs-bin-packing-initial-assignment-empty} of \Cref{obs-bin-packing-initial-assignment} in the initial assignment~$f$,
for each color~$i$ and each~$v_i\in V_i$ in vector bin~$B_{v_i}$ the important dimensions~$d_{v_i,j}^y$ for each~$j\in[1,\ell]\setminus\{i\}$ and each~$y\in[z/2,z]$ are empty.
In other words, no bin is empty with respect to any dimension~$d_{v_i,j}^y$ for each~$j\in[1,\ell]\setminus\{i\}$ and~$y\in[1,z/2-1]$.

Now we exploit the fact that~$f'$ is improving over~$f$ such that~$d_{\rm flip}(f,f')$ is smallest:
Moving any dummy-vector having value~1 in any such dimension~$d_{v_i,j}^y$ cannot reduce the overload, and thus~$f'$ contains no such dummy vector.
Furthermore, in no other dimension there is an overload which occurs due to a dummy vector and some other vector.
Consequently, $D_{\rm flip}(f,f')$ contains no dummy vector.
\end{claimproof}

\emph{Step 6: Implications of edge vector~$\vect(u_i,w_j)\in D_{\rm flip}(f,f')$.}
Next, we show the following.

\begin{claim}
\label{claim-edge-moved-only-if-both-vertices-moved}
For each edge vector~$\vect(u_i,w_j)\in D_{\rm flip}(f,f')$ we also have that the two vertex vectors~$\vect(u_i)$ and~$\vect(w_j)$ are contained in~$D_{\rm flip}(f,f')$.
\end{claim}
\begin{claimproof}
Without loss of generality, assume that~$\vect(u_i)\notin D_{\rm flip}(f,f')$.
Thus, $f(\vect(u_i))=B^*=f'(\vect(u_i))$.
Now, observe that in the assignment~$f'$, vertex vector~$\vect(u_i)$ and edge vector~$\vect(u_i,w_j)$ overlap in all important dimensions with respect to~$(u_i,j)$.
These are $z$~dimensions in total.
Furthermore, observe that in the initial assignment~$f$, the edge vector~$\vect(u_i,w_j)$ and the blocking vector~$\bvect(u_i,w_j)$ overlap in all important profit dimensions with respect to~$(u_i,j)$ and in all important profit dimensions with respect to~$(w_j,i)$.
Also, these are $z$~dimensions in total.
Recall that according to \Cref{claim-no-dummy-vector-is-moved}, it is safe to assume that~$D_{\rm flip}(f,f')$ does not move any dummy vectors.
Hence, by setting~$f^*(\vect(u_i,w_j))\coloneqq B_{u_i,w_j}=f(\vect(u_i,w_j))$ and~$f^*(\vect)\coloneqq f'(\vect)$ for each other vector~$\vect$, we obtain another solution which has the same overload as~$f'$ but~$d_{\rm flip}(f,f^*)=d_{\rm flip}(f,f')-1$.
This contradicts the assumption that~$d_{\rm flip}(f,f')$ is smallest among all improving flips and thus the statement is verified.
\end{claimproof}

\emph{Step 7: Putting everything together.}
Let~$k'\le k$ be the number of moved vectors.
\Cref{claim-no-dummy-vector-is-moved,claim-edge-moved-only-if-both-vertices-moved} imply that~$k'=\alpha+\beta$ where~$\alpha$ is the number of moved vertex vectors (according to \Cref{claim-movements-vertex-vectors} each of them is moved into its corresponding vertex bin), and~$\beta\le\binom{\alpha}{2}$ is the number of moved edge vectors (according to \Cref{claim-movements-edge-vectors} they are moved into the target bin~$B^*$).
We finally verify that~$\alpha=\ell$ and that~$\beta=\binom{\alpha}{2}=\binom{\ell}{2}$ implying that the vectors in~$D_{\rm flip}(f,f')$ correspond to  the vertices and edges of a multicolored clique in~$G$.

We show this statement in three steps: 
First, we show that~$\alpha\le\ell$, second we show that~$\beta=\binom{\alpha}{2}$, and third we verify that~$\alpha=\ell$.
In all three steps we compute the change in the overload as follows:
First, we calculate the increase of the overload by only considering the flips of the vertex vectors in~$D_{\rm flip}(f,f')$.
Then, all important edge dimensions where at least one edge vector in~$D_{\rm flip}(f,f')$ has value~1 are empty in~$B^*$.
Second, we calculate the decrease of the overload by only considering the flips of the edge vectors in~$D_{\rm flip}(f,f')$.
Recall that moving a vertex vector~$\vec(v_i)$ from the target bin~$B^*$ into its corresponding vertex bin~$B_{v_i}$ causes an overload of~$(\ell-1)\cdot (z/2-1)$.
Moreover, recall that moving an edge vector~$\vec(u_i,w_j)$ out of its corresponding edge bin~$B_{u_i,w_j}$ reduces the overload by~$z$.

\emph{Step 7.1: At most $\ell$~vertex vectors are contained in~$D_{\rm flip}(f,f')$:}
Assume towards a contradiction that at least~$\ell+1$ vertex vectors move.
Hence, the flip of the vertex vectors in~$D_{\rm flip}(f,f')$ increases the overload by at least~$(\ell+1)\cdot (\ell-1)\cdot(z/2-1)=(\ell^2-1)\cdot(z/2-1)$.
Furthermore, the flip of the edge vectors in~$D_{\rm flip}(f,f')$ decreases the overload by at most~$k-\ell-1=(\binom{\ell}{2}-1)\cdot z$.
Now, observe that~$f'$ cannot be improving upon~$f$:
\begin{align*}
 && (\ell^2-1)\cdot(z/2-1) &&\ge&& \left(\binom{\ell}{2}-1\right)\cdot z \\
\Leftrightarrow&& -\ell^2-z/2+1 &&\ge&& -\ell\cdot z/2-z  \\
\Leftrightarrow&& (\ell+1)\cdot z/2 - \ell^2+1 &&\ge&& 0 
\end{align*}
The last inequality is true since~$z=4\cdot\ell^2$.
Thus, the overload of~$f'$ is larger than the overload of~$f$, a contradiction to the fact that~$f'$ is improving upon~$f$.
Consequently, at most~$\ell$ vertex vectors are contained in~$D_{\rm flip}(f,f')$.

\emph{Step 7.2: Exactly $\binom{\alpha}{2}$~edge vectors are contained in~$D_{\rm flip}(f,f')$:}
Assume towards a contradiction that~$\beta\le (\binom{\alpha}{2}-1)$.
Hence,  the flip of the vertex vectors in~$D_{\rm flip}(f,f')$ increases the overload by exactly~$\alpha\cdot (\ell-1)\cdot(z/2-1)$. Furthermore, the flip of the edge vectors in~$D_{\rm flip}(f,f')$ decreases the overload by at most~$(\binom{\alpha}{2}-1)\cdot z$.
Now, observe that~$f'$ cannot be improving upon~$f$:
\begin{align*}
 && \alpha\cdot (\ell-1)\cdot(z/2-1) &&\ge&& \left(\binom{\alpha}{2}-1\right)\cdot z  \\
\Leftrightarrow&& \alpha\cdot\ell\cdot z/2 - \alpha^2\cdot z/2 -\alpha\cdot\ell +\alpha +z &&\ge&& 0  
\end{align*}
Since~$\alpha\le\ell$, we obtain~$\alpha\cdot\ell\cdot z/2 - \alpha^2\cdot z/2 \ge 0$ and since~$z=4\cdot\ell^2$, we obtain~$-\alpha\cdot\ell +\alpha +z\ge 0$, implying the inequality.
Thus, the overload of~$f'$ is larger than the overload of~$f$, a contradiction to the fact that~$f'$ is improving upon~$f$.
Consequently, exactly $\binom{\alpha}{2}$~edge vectors are contained in~$D_{\rm flip}(f,f')$.

\emph{Step 7.3: Exactly $\ell$~vertex vectors are contained in~$D_{\rm flip}(f,f')$:}
Hence,  the flip of the vertex vectors in~$D_{\rm flip}(f,f')$ increases the overload by exactly~$\alpha\cdot (\ell-1)\cdot(z/2-1)$. Furthermore, the flip of the edge vectors in~$D_{\rm flip}(f,f')$ decreases the overload by exactly~$\binom{\alpha}{2}\cdot z$.
We now show, that~$f'$ can only by improving upon~$f$ if~$\alpha=\ell$.
In contrast to Steps~7.1 and~7.2 we now verify that the decrease of the overload due to the edge vectors is larger than the increase due to the vertex vectors.
\begin{align*}
 && \binom{\alpha}{2}\cdot z &&>&& \alpha\cdot (\ell-1)\cdot(z/2-1) \\
 \Leftrightarrow&& \alpha(\alpha-\ell)z/2+\alpha\ell-\alpha &&>&& 0
 \\
 \Leftrightarrow&& \alpha -\ell + 2\cdot(\ell-1)/z &&>&& 0 
\end{align*}
Observe that since~$z=4\cdot\ell^2$, we have~$0<2\cdot(\ell-1)/z<1$.
This implies that~$\alpha\ge\ell$.
Recall that in Step~7.1 we showed that~$\alpha\le\ell$.
Thus,~$\alpha=\ell$. 
With Step~7.2, we conclude that exactly $\binom{\ell}{2}$~edge vectors are contained in~$D_{\rm flip}(f,f')$, implying that~$G$ has a multicolored clique.
\end{proof}

\section{Discussion}

In \Cref{Section: Framework} we considered parameterized local search for a very generic partitioning problem \GBP which generalized many important optimization problems like \VC, \CE, or \MK. 
For the local search version of this problem, in \Cref{Theorem: Framework Algorithms} we derived algorithms that run in $\tau^k \cdot 2^{\Oh(k)}\cdot |I|^{\OO(1)}$ and $k^{\OO(\tau)}\cdot |I|^{\OO(1)}$~time, respectively. 
Here, $k$ is the search radius and $\tau$ is the number of \emph{types}.
In \Cref{Section: Framework Application} these algorithms were then used as a framework to obtain efficient parameterized local search algorithm for many classic problems by reducing many classical optimization problems to~\GBP.
For example, in \Cref{Theorem: Algo Cluster Editing}, we provided an algorithm with running time~$\nd^k \cdot 2^{\Oh(k)}\cdot |I|^{\Oh(1)}$ for the local search variant of \CE, and in \Cref{Theorem: Algo Vector Bin Packing}, we provided an algorithm with running time~$\tau^k \cdot 2^{\Oh(k)}\cdot |I|^{\Oh(1)}$ for the local search variant of \Bin where~$\tau$ is the number of distinct vectors.
Moreover, for each considered problem where W[1]-hardness for~$k$ and an ETH-lower bound of~$|I|^{o(k)}$ was not known before, we provided such a result.
Consequently, for each considered problem our algorithms with running time~$\tau^k \cdot 2^{\Oh(k)}\cdot |I|^{\OO(1)}$ are essentially optimal.

There are several ways of extending our work:
In all of our applications of the framework we extensively used the expressive power of the IBEs and the flexibility of the bins.
So far, we have not exploited the power of the types. 
For future work it is thus interesting to find examples where the number~$\tau$ of types is smaller than the neighborhood diversity or the number of distinct vectors.
Also, it is interesting to study the parameterized complexity of the non-local search versions of the considered problems parameterized by~$\tau$ alone. For example, \textsc{Bin Packing} admits an FPT-algorithm when parameterized by~$\tau$ alone~\cite{K+21}. It appears to be reasonable that the algorithm can be modified to obtain fixed-parameter tractability for \textsc{Vector Bin Packing} parameterized by~$\tau$ as well. Furthermore, \VC admits an FPT-algorithm when parameterized by~$\nd$~\cite{koutecky2013solving} and~\MC admits an FPT-algorithm when parameterized by~$\nd+c$~\cite{GKK22}.
However, it is not known whether such algorithms are possible for all (classic) variants of our study; for example for \NSW such an algorithm is not known.

\bibliographystyle{plainurl} 
\bibliography{refs}

\end{document}